\documentclass{article}              
\usepackage{arxiv}

\input{./settings}

\newcommand{\EE}[1]{\mathbb{E}\left[ #1 \right]}
\newcommand{\Var}[1]{\mathbb{V}\left[ #1 \right]}

\DeclareMathOperator*{\argmin}{arg\,min}

\newcommand{\approxsym}[1]{\widehat{#1}}
\newcommand{\biasedtxt}{\text{biased}}

\newcommand{\Cov}{\mathbb{C}\text{ov}}

\newcommand{\mone}{\mu_1}
\newcommand{\moneapprox}{\approxsym{\mu}_1}
\newcommand{\cmtwo}{\mu_2}
\newcommand{\cmtwoapprox}{\approxsym{\mu}_2}

\newcommand{\cmfour}{\mu_4}
\newcommand{\cmfourapprox}{\approxsym{\mu}_4}
\newcommand{\cmfourapproxbiased}{\approxsym{\mu}_{4 ,{\biasedtxt}}} 
\newcommand{\sigmaapprox}{\approxsym{\sigma}_{\biasedtxt}}


\newcommand{\textml}{\text{\tiny ML}}
\newcommand{\Ql}{{Q_{\ell}}}
\newcommand{\Qlmone}{{Q_{\ell - 1}}}
\newcommand{\Qli}{Q_{\ell}^{(i)}}
\newcommand{\Qlmonei}{Q_{\ell - 1}^{(i)}}
\newcommand{\Qlj}{Q_{\ell}^{(j)}}
\newcommand{\Qlmonej}{Q_{\ell - 1}^{(j)}}
\newcommand{\Qlmonek}{Q_{\ell - 1}^{(k)}}
\newcommand{\Qlmoneh}{Q_{\ell - 1}^{(h)}}
\newcommand{\Qlone}{Q_{\ell_1}}
\newcommand{\Qltwo}{Q_{\ell_2}}
\newcommand{\Qlthree}{Q_{\ell_3}}

\newcommand{\Qlonei}{Q_{\ell_1}^{(i)}}
\newcommand{\Qltwoi}{Q_{\ell_2}^{(i)}}
\newcommand{\Qlthreei}{Q_{\ell_3}^{(i)}}

\newcommand{\Qltwoj}{Q_{\ell_2}^{(j)}}

\newcommand{\Qlthreek}{Q_{\ell_3}^{(k)}}
\newcommand{\mlsum}{\sum_{\ell = 1}^L}

\newcommand{\Cl}{C_\ell}
\newcommand{\Nl}{N_\ell}
\newcommand{\Nlmone}{N_{\ell - 1}}
\newcommand{\Nlsum}{\sum_{i=1}^{\Nl}}
\newcommand{\Nlsumj}{\sum_{j=1}^{\Nl}}
\newcommand{\Nlsumk}{\sum_{k=1}^{\Nl}}
\newcommand{\Nlsumh}{\sum_{h=1}^{\Nl}}
\newcommand{\Nlsumjneqi}{\sum_{j=1, j \neq i }^{\Nl}}

\newcommand{\Zl}{Z_{\ell}}
\newcommand{\Zlmone}{Z_{\ell - 1}}
\newcommand{\Zli}{Z_{\ell}^{(i)}}
\newcommand{\Zlj}{Z_{\ell}^{(j)}}
\newcommand{\Zlk}{Z_{\ell}^{(k)}}
\newcommand{\Zlmonei}{Z_{\ell - 1}^{(i)}}
\newcommand{\Zlmonej}{Z_{\ell - 1}^{(j)}}
\newcommand{\Zlmonek}{Z_{\ell - 1}^{(k)}}

\newcommand{\monel}{\mu_{1, \ell}}

\newcommand{\monelmone}{\mu_{1, \ell - 1}}
\newcommand{\moneapproxml}{\approxsym{\mu}_{1, \textml}}
\newcommand{\moneapproxl}{\approxsym{\mu}_{1, \ell}}
\newcommand{\moneapproxli}{\approxsym{\mu}_{1, \ell_i}}
\newcommand{\moneapproxbootl}{\approxsym{\mu}_{1, \ell}^{(b)}}
\newcommand{\moneapproxlmone}{\approxsym{\mu}_{1, \ell - 1}}

\newcommand{\monelone}{\mu_{1, \ell_1}}
\newcommand{\moneltwo}{\mu_{1, \ell_2}}
\newcommand{\monelthree}{\mu_{1, \ell_3}}
\newcommand{\moneapproxlone}{\approxsym{\mu}_{1, \ell_1}}
\newcommand{\moneapproxltwo}{\approxsym{\mu}_{1, \ell_2}}
\newcommand{\moneapproxlthree}{\approxsym{\mu}_{1, \ell_3}}

\newcommand{\cmtwol}{\mu_{2, \ell}}
\newcommand{\cmtwolmone}{\mu_{2, \ell - 1}}
\newcommand{\cmtwoapproxl}{\approxsym{\mu}_{2, \ell}}
\newcommand{\cmtwoapproxlj}{\approxsym{\mu}_{2, \ell_j}}
\newcommand{\cmtwoapproxlmone}{\approxsym{\mu}_{2, \ell - 1}}
\newcommand{\cmtwoapproxml}{\approxsym{\mu}_{2, \textml}}

\newcommand{\cmthreel}{\mu_{3, \ell}}



\newcommand{\sigmaapproxml}{\approxsym{\sigma}_{\textml, \biasedtxt}}
\newcommand{\sigmaapproxlj}{\approxsym{\sigma}_{\ell_j}}
\newcommand{\sigmaapproxbootl}{\approxsym{\sigma}_{\ell, \biasedtxt}^{(b)}}
\newcommand{\sigmaapproxl}{\approxsym{\sigma}_{\ell, \biasedtxt}}
\newcommand{\sigmaapproxlmone}{\approxsym{\sigma}_{\biasedtxt, \ell - 1}}

\newcommand{\Rmu}[1]{\mathcal{R}^{#1}_\mu}
\newcommand{\Rsig}[1]{\mathcal{R}^{#1}_\sigma}
\newcommand{\Rscal}[1]{\mathcal{R}^{#1}_{\mu+\alpha\sigma}}


\title{Multilevel Monte Carlo estimators for derivative-free optimization under Uncertainty}

\author{ 
	Friedrich Menhorn \\
	Department of Computer Science\\
	Technical University of Munich\\
	85748 Garching, Germany \\
	\texttt{menhorn@in.tum.de} \\
	\And
	Gianluca Geraci \\
	Optimization and Uncertainty Quantification\\
	Sandia National Laboratories\\
	Albuquerque, New Mexico, 87185, USA \\
	\texttt{ggeraci@sandia.gov} \\
	\And
	D. Thomas Seidl \\
	Optimization and Uncertainty Quantification\\
	Sandia National Laboratories\\
	Albuquerque, New Mexico, 87185, USA \\
	\texttt{dtseidl@sandia.gov} \\
	\And
	Youssef M. Marzouk \\
	Department of Aeronautics and Astronautics\\
	Massachusetts Institute of Technology\\
	Cambridge, MA 02139, USA \\
	\texttt{ymarz@mit.edu} \\
	\And
	Michael S. Eldred \\
	Optimization and Uncertainty Quantification\\
	Sandia National Laboratories\\
	Albuquerque, New Mexico, 87185, USA \\
	\texttt{mseldre@sandia.gov} \\
	\And
	Hans-Joachim Bungartz \\
	Department of Informatics\\
	Technical University of Munich\\
	85748 Garching, Germany \\
	\texttt{bungartz@in.tum.de} \\
}

\begin{document}

\maketitle

\begin{abstract}
Optimization is a key tool for scientific and engineering applications, however, in the presence of models affected by uncertainty, the optimization formulation needs to be extended to consider statistics of the quantity of interest. Optimization under uncertainty (OUU) deals with this endeavor and requires uncertainty quantification analyses at several design locations. The cost of OUU is proportional to the cost of performing a forward uncertainty analysis at each design location visited, which makes the computational burden too high for high-fidelity simulations with significant computational cost. From a high-level standpoint, an OUU workflow typically has two main components: an inner loop strategy for the computation of statistics of the quantity of interest, and an outer loop optimization strategy tasked with finding the optimal design, given a merit function based on the inner loop statistics. 
In this work, we propose to alleviate the cost of the inner loop uncertainty analysis by leveraging the so-called Multilevel Monte Carlo (MLMC) method. MLMC has the potential of drastically reducing the computational cost by allocating resources over multiple models with varying accuracy and cost. The resource allocation problem in MLMC is formulated by minimizing the computational cost given a target variance for the estimator. We consider MLMC estimators for statistics usually employed in OUU workflows and solve the corresponding allocation problem. For the outer loop, we consider a derivative-free optimization strategy implemented in the SNOWPAC library; our novel strategy is implemented and released in the Dakota software toolkit. We discuss several numerical test cases to showcase the features and performance of our novel approach with respect to the single fidelity counterpart, based on standard Monte Carlo evaluation of statistics.
\end{abstract}

\keywords{Optimization Under Uncertainty, Multilevel Monte Carlo, Uncertainty Quantification}

\maketitle


\section{Introduction}
\label{sec:introduction}
Complex systems are required to maintain their design performance under various operational conditions, which may not be fully controlled. Optimization under uncertainty (OUU) deals with the task of designing systems that are robust with respect to the variability in operative conditions, e.g., a wind turbine operating in the atmosphere where its properties cannot be fully controlled, but are only characterized in a stochastic sense. In literature, such problems are also often described as stochastic optimization problems or stochastic programming, which we will use as synonymous~\cite{Ben-Tal1999,Beyer2007}. 

One of the main challenges associated with OUU is the high computational cost. The need for evaluating statistics corresponds to requiring a forward Uncertainty Quantification (UQ) step for each design condition. From this standpoint, leaving aside the underlying details of the optimization algorithm, OUU can be seen as the evaluation of system's statistics at several design conditions which may themselves involve $\mathcal{O}(100)$ or more evaluations. For this reason, lowering the computational cost of each of the required forward UQ steps has the potential to drastically reduce the cost of OUU, especially in the presence of high-fidelity and computationally expensive computer codes. 

In the last two decades, the problem of reducing the computational burden of UQ studies for computationally expensive codes
has gained the attention of the UQ community. The seminal work introduced in~\cite{Heinrich2001,Giles2008} illustrated that realizations from computational codes with varying accuracy and cost can be optimally fused to obtain statistics at a much reduced cost, when compared with single fidelity estimators with the same accuracy. The so-called multilevel Monte Carlo (MLMC)~\cite{Heinrich2001,Giles2008} paved the way for subsequent research and has been demonstrated to be able to work efficiently, whenever it is possible to obtain a family of computational approximations based on varying spatial and temporal resolutions. More recently, the need to apply similar approaches to computational systems in which the use of different physics models can lead to additional computational savings has generated interest in the so-called multifidelity UQ approaches, both in sampling-based~\cite{Pasupathy2012,ng2014,Nobile2015,Haji_NT_NM_2016,Peherstorfer2016b,geraci_multifidelity_2017,Fairbanks2017,Peherstorfer18,Gorodetsky_GEJ_JCP_2020,Schaden_U_SISC_2020} or surrogate-based approaches~\cite{LeGratiet_G_IJUQ_2014,Kennedy_O_B_2000,Jakeman_EGG_2019,Rumpfkeil_B_AIAAJ_2020,MFNets_IJUQ_2020,Gorodetsky_2021}.

Despite the differences in algorithms, all multilevel and multifidelity UQ approaches provide a way to optimally allocate computational resources, i.e., computational realizations, by taking into account their cost and accuracy. In order to allocate computational resources, all these methods require solving an optimization problem in which the overall computational cost is minimized, while targeting the variance of the chosen estimator. This optimization is affected by the desired statistics; most of the literature has focused on estimating the expected value~\cite{Giles2008}, or central moments~\cite{Bierig2016,Quian2018,Krumscheid2020}, however, very little has been investigated regarding the statistics that are usually important for OUU workflows.

In this work, we focus our attention on MLMC approaches, which, despite being more limited in their application compared to multifidelity UQ approaches, can provide an easier starting point for investigating the coupling between multilevel/multifidelity UQ and OUU. Moreover, we explicitly target statistics which are useful in the OUU context. The first work regarding MLMC estimators, for higher order moments known to us, was published in~\cite{Bierig2016} and introduced a multilevel variance estimator. More recently, \cite{Krumscheid2020} leveraged h-statistics and symbolic computations to find unbiased closed form solutions for the higher-order moments; the authors approximated the underlying optimization problem for the sample allocation, thereby solving an approximate analytic problem. We are unaware of any previous efforts to develop MLMC estimators for the standard deviation and its linear combination with the mean. We have developed these estimators, which are presented in this paper. OUU formulation are routinely used either in a robust sense, i.e., maximizing the performance of a system while minimizing its sensitivity to perturbations, see~\cite{Sandgren2002, Zang2005, Yao2011}, or in a reliability sense, i.e., ensuring that the system's performance are met with a certain probability; see~\cite{Bichon2007, Paiva2014, DakotaTheory}. 
In this contribution we provide MLMC formulation that are optimal for statistics usable in both the robustness and reliability context; however, our numerical OUU experiments only focus on single objective optimization problems for reliability design, which is the motivation of this work. As it will be clear later, arguably the most important reliability measure depend on standard deviation, which required us to also focus on the variance. As a consequence, the provided formulation for this moment could be embedded in a robustness OUU formulation, but this is not done in the present manuscript.  

The main contributions of this paper, which support the use of MLMC for OUU reliability formulations, are:
\begin{enumerate}
\item We derived an allocation strategy for an MLMC estimator for the variance where, unlike the work in~\cite{Krumscheid2020}, we do not use h-statistics (which lead to approximate analytical solutions), but rather rely on numerical optimization. Moreover, we provide numerical comparisons between the two approaches;
\item We derived an allocation strategy for a new MLMC estimator for the standard deviation;
\item We derived an allocation strategy for a new MLMC estimator for a linear combination of the mean and standard deviation, which is a common measure of reliability in OUU. 
\end{enumerate}

The remainder of the paper is organized as follows. In Section~\ref{sec:mathback}, we introduce the mathematical and algorithmic background, while, in Section~\ref{ssec:measuresofrobustness}, we present the measures of robustness and risk that we consider. Section~\ref{ssec:snowpac} briefly introduces the derivative-free stochastic constrained optimization method, available in the library SNOWPAC, which we will use as our solver for the OUU. Afterwards, we introduce the sampling estimators for statistics used for either robustness or reliability design, in Section~\ref{sec:samplingforouu}. The single fidelity case is presented first, in Section~\ref{ssec:single_fidelity}. Then, the multilevel case is introduced in Section~\ref{ssec:mlmcformean}, where we update the notation and summarize the classical results for the expected values, from~\cite{Heinrich2001,Giles2008}. Our first contribution, the MLMC estimator for the variance, is presented in Section~\ref{ssec:mlmcforvariance}. Section~\ref{ssec:mlmcforstddev} and Section~\ref{ssec:mlmcforscalarization} encompass the main contribution of this work in which the new MLMC estimators, for the standard deviation and linear combination of mean and standard deviation, are introduced. In Section~\ref{ssec:implementationdetails}, we describe the implementation details and our algorithm for the adaptive allocation of the samples over the different levels, for the different estimators. We show the benefit of these contributions by applying the new estimators on a simple 1-D toy problem, as well as a more challenging problem, namely a modified Rosenbrock function, in Section~\ref{sec:numericalresults}. For these numerical results, both the UQ only case and OUU are considered. Finally, conclusions are presented in Section~\ref{sec:conclusion}.

\section{Mathematical and algorithmic background}
\label{sec:mathback}
In this work, we are concerned with optimization problems of the following formulation 
\begin{equation}\label{eq:general_stoch_opt_problem}
\begin{split}
&\;\;\;\accentset{\ast}{f} = \min f(x, \theta)\\
&\mbox{s.t.} \quad c_i(x, \theta) \leq 0, i = 1,..., M,
\end{split}
\end{equation}
where $f(x, \theta): \mathbb{R}^d \times \Theta \rightarrow \mathbb{R}$ is the objective function subject to $M$ constraints $c_i(x, \theta): \mathbb{R}^d \times \Theta \rightarrow \mathbb{R}, i = 1, ..., M$. The vector $x \in \mathbb{R}^d$ is our design variable, while $\theta \in \Theta$ is the vector of random variables with the complete probability space $(\Theta, \mathcal{F}, P)$; as usual, $\Theta$ is the set of all possible outcomes, the Borel $\sigma$-algebra $\mathcal{F}$ is the event space and $P$ is the probability function. 
The functions $f$ and $c_i$ are derived from models affected by uncertainty, therefore they are both random, and we will consider nonlinear and black box models; see~\cite{Bertsimas2011, Pflug1996}. Since the probabilistic nature of the problem also introduces challenges for gradient estimation, we will rely on a derivative-free approach to circumvent this issue (see Section~\ref{ssec:snowpac}).

In order to solve and find an optimum solution we consider a class of problems denoted as \say{Here and Now} \cite{Diwekar2003}. Here and now problems require that the objective function and constraints be expressed in terms of some probabilistic representation (e.g., expected value, variance or quantiles). Furthermore, the decision variables and uncertain parameters are separated from each other. We achieve this separation by first integrating over the stochastic space, at the current design, and then using a stochastic model for the optimization. 

Formally, we write the problem as 
\begin{equation}\label{eq:general_robust_opt_problem}
\begin{split}
&\;\;\;\accentset{\ast}{\mathcal{R}}^{f}(x)\ = \min \mathcal{R}^f(x)\\
&\mbox{s.t.} \quad \mathcal{R}^{c_i}(x) \leq 0, i = 1,..., M.
\end{split}
\end{equation}
where $\mathcal{R}^b, b \in \{f, c_1, ..., c_M\}$ are arbitrary statistics of the QoI, e.g. the aforementioned expected value or standard deviation. We will discuss the measures $\mathcal{R}^b$ that we use in this work in more detail in Section~\ref{ssec:measuresofrobustness}, but we also refer to the rich literature on robustness, reliability, risk and deviation measures \cite{Acerbi2002, Artzner1999, Krokhomal2011, Rackwitz2001, Rockafellar2000, Rockafellar2002a, Rockafellar2002, Uryasev2000, Zhang, Beyer2007}. The method SNOWPAC that we use to solve Eq.~\eqref{eq:general_robust_opt_problem} is afterwards presented in Section~\ref{ssec:snowpac}.

\subsection{Measures of robustness and reliability}
\label{ssec:measuresofrobustness}
The following sections introduce the sampling estimators for the robustness and reliability measures that we consider in this work.  They are, e.g., given in \cite{Beyer2007} and we refer to \cite{Rockafellar2002, Szego2002} for a detailed discussion of risk assessment strategies and an introduction to a wider class of measures.

The classical first measure is the expected value
\begin{equation}
\Rmu{b}(x) := \mathbb{E}\left[ b(x; \theta) \right] = \int\limits_{\Theta} b(x; \theta) dP.
\end{equation}
It is a widely applied measure to handle uncertain parameters in optimization problems although despite measuring robustness with respect to variations in $\theta$, it does not inform about the spread of $b$. 

In order to also account for the spread of realizations of $b$ around its mean $\Rmu{b}$, we consider the standard deviation
\begin{equation}
\mathcal{R}^b_\sigma(x) := \mathbb{V}^{\frac{1}{2}}\left[ b(x; \theta) \right] = \left( \int\limits_{\Theta} \left( b(x; \theta) - \mathbb{E}\left[ b(x; \theta) \right] \right)^2 dP \right)^{1/2}.
\end{equation}
Finally, the linear combination of $\Rmu{b}(x)$ and $\Rsig{b}(x)$ given by
\begin{equation}
\Rscal{b}(x) := \Rmu{b}(x) + \alpha \Rsig{b}(x), \alpha \in \mathbb{R},
\end{equation}
is a common measure for reliability since it provides a trade-off between two possibly contradicting goals: the minimization of the expected outcome and the minimization of the variability. Here, $\alpha$ is called the reliability index \cite{Haldar2000}. Hereinafter, we refer to this latter measure as \textit{scalarization} since we collapse two measures in a single scalar quantity. We note that, when used as a constraint, $\Rscal{b}(x)$ can also be interpreted as a chance constraint if we assume the quantity of interest to be normally distributed. For example, by using $\alpha = 3$ and under the aforementioned distribution assumption, $Q$ lies in the range of $\Rmu{b}(x) \pm \alpha \Rsig{b}(x)$ with probability $p \approx 0.998$.

In practice, all the measures presented above cannot be computed exactly; rather, an approximation needs to be introduced. In this work, we rely on MC sampling~\cite{Augustin2} and its multilevel extension to estimate these measures at a reduced computational cost. In both cases, the finite number of samples used to approximate these measures introduces an error. For a generic measure, $\mathcal{R}^b(x)$, we introduce a sampling estimator that produces an approximation, $R^b(x)$, which leads to the error 
$ \varepsilon_R^b = \mathcal{R}^b(x) - R^b(x)$. It is well known that for sampling methods, this error decreases with $\mathcal{O}(N^{\frac{1}{2}})$, where $N$ is the number of realizations; however, controlling this error with an acceptable computational burden is difficult in practical applications.

\subsection{Derivative-free stochastic optimization method: the SNOWPAC algorithm}
\label{ssec:snowpac}
In the previous section, we defined the measures for solving Eq.~\eqref{eq:general_robust_opt_problem}. There is an abundance of literature on approaches on how to solve such kind of problems, e.g., surrogate-based approaches \cite{Bortz1998, Conn1993, Kelley1999, March2012, Regis2011, Regis2014, Sampaio2015, Carter1991, Choi2000, Heinkenschloss2002, Larson2016, Chen2018} or the re-popularized stochastic approximation method \cite{Robbins1951, Kiefer1952, Bottou2018} due to the rise of machine learning. In the following, without specifying the sampling estimator adopted, we just assume to have a suitable sampling approximation $R^b(x)$ for the measure $\mathcal{R}^b(x)$, which results in an error ${\varepsilon}_R^b(x)$. 

In our work we expect to have as little knowledge about the problem \eqref{eq:general_robust_opt_problem} as possible. We assume the underlying model of $f$ and $\{c_i\}_{i=1}^M$ to be black box which, e.g., means we might not have access to derivatives. To avoid the need for dealing with gradients, we rely on a derivative-free optimization method; in particular, we use the SNOWPAC algorithm introduced in~\cite{Augustin2}. SNOWPAC is a stochastic derivative-free optimization method which uses a trust region approach to tackle problems of the form of Eq.~\eqref{eq:general_robust_opt_problem}. It extends its deterministic predecessor NOWPAC~\cite{Augustin1} that uses
fully-linear surrogate models and quadratic optimization on the surrogate in the trust region. 

For the stochastic application SNOWPAC employs MC sampling to evaluate the measures $R^b \approx \mathcal{R}^b$, which we will extend to MLMC in the following sections. Regardless of the sampling approach used, the error $\varepsilon_R^b$ deteriorates the approximation quality of the surrogates of objective function and/or  constraint in the trust region. 
This restricts the possible trust region size and the minimal possible trust region radius $\rho_k$ at a given optimization step $k$ is restricted by the maximal noise $\varepsilon^k_{\text{max}} = \max_{b \in \{f, c\}} {\varepsilon^k_{R^b}}$ as,
\begin{equation}\label{eq:lower_bound_on_trust_region}
{\varepsilon}_{max}^k \rho_k^{-2} \leq \lambda_t^{-2}, \quad \mbox{resp.} \qquad \rho_k \geq \lambda_t \sqrt{{\varepsilon}_{max}^k} = \max_i \lambda_t \sqrt{\varepsilon_{i}^k},
\end{equation}
where $\lambda_t$ is a safety parameter. The size of the trust region is directly linked to the convergence of the algorithm; thus, progress of the algorithm is only achieved if the noise can be reduced. 

To counteract the lower bound on the trust region introduced by ${\varepsilon}^b_{max}$, Gaussian Process (GP) surrogates \cite{Rasmussen2006} are used to bias the evaluations and to reduce the noise. This is comparable to a biased control variate approach. 
For this, smoothed evaluations $\tilde{R}^{b}$ and noise estimates $\tilde{\varepsilon}_R^b$ are constructed
\begin{align}
\label{eq:adjusted_black_box_evaluations}
\tilde{R}^{b} &= \;\gamma \mathcal{G}^b({x}) +(1-\gamma) {R}^b(x),\\
\tilde{\varepsilon}_R^b\; &=\; \gamma 2 \sigma^b({x})+(1-\gamma){\varepsilon}_R^b(x),
\nonumber
\end{align}
where $\mathcal{G}^b({x})$ is the GP mean estimate and $\sigma^b({x})$ is the standard deviation of the GP at ${x}$. Note that $\gamma := e^{-\sigma^b\left({x}\right)}$ is chosen to approach 1 following the approximation quality of the Gaussian process. The corrected evaluations $\tilde{R}^{b}$ at the local interpolation points are then used to build local surrogates and the associated reduced noise level $\tilde{\varepsilon}_R^b$ allows a reduction in the trust region radius $\rho$.

While the Gaussian process surrogate is built over a larger domain and, therefore, holds more global information, the minimum Frobenius norm surrogate models are built locally in the trust region. Through the combination following equation \eqref{eq:adjusted_black_box_evaluations}, we balance the error in the surrogate model via the lower bound on the trust region with the error in the Gaussian process model represented by its standard deviation estimate; increasing the number of evaluations achieves a higher confidence in the Gaussian process model, a decrease in the noise and, consequently, a decrease of the trust region.

By combining those two surrogate models SNOWPAC balances two sources of approximation errors. On the one hand, there is the structural error in the approximation of the local surrogate models, which is controlled by the size of the trust region radius. On the other hand, we have the inaccuracy in the GP surrogate itself, which is reflected by the standard deviation of the GP. Noteworthy is that SNOWPAC relates these two sources of errors by coupling the size of the trust region radii to the size of the credible interval, only allowing the trust region radius to decrease if the GP posterior estimator becomes small.

Biasing the evaluations following \eqref{eq:adjusted_black_box_evaluations} may, however, result in infeasible steps. Thus, thirdly, SNOWPAC also offers a feasible restoration mode which switches the OUU formulation. While a point is infeasible, the constraints are first minimized to return to the feasible region before we proceed with the actual optimization.

The method is available in the optimization and uncertainty quantification framework DAKOTA (from version 6.7 \cite{Dakota}) where it can be used a stand-alone solver or an approximate subproblem solver. With its derivative-free approach it offers the flexibility and applicability to a wide range of problems. For a more elaborate introduction to the method we refer the interested reader to~\cite{Augustin2}.

\section{Sampling estimators for robustness and reliability measures}
\label{sec:samplingforouu}

In this section we present sampling estimators that we use to approximate the measures
described in Section~\ref{ssec:measuresofrobustness}, which ultimately leads to the error $\varepsilon^b_R$ used in the 
construction of the trust-region surrogate in SNOWPAC (Section~\ref{ssec:snowpac}). We start by presenting the single fidelity MC estimator
in Section~\ref{ssec:single_fidelity}, and the multilevel MC (MLMC) for the mean in Section~\ref{ssec:mlmcformean} to introduce the notation and illustrate the \textit{resource allocation} problem, which is the problem of allocating the realizations over a set of models with varying accuracy and cost. In the remaining sections, we discuss the extension of MLMC to variance, standard deviation and scalarization along with the introduction of strategies for solving their resource allocation problems.

\subsection{Single fidelity MC estimators}
\label{ssec:single_fidelity}
Let start by introducing the notation used throughout the paper. Given a QoI $Q(x,\theta) : \mathbb{R} \times \Theta \rightarrow \mathbb{R}$ where $\theta \in \Theta$ is the vector of random variables, we use the shorthand $Q := Q(x, \theta)$. We call $Q$ our quantity of interest. In the optimization scenario of \eqref{eq:general_robust_opt_problem}, $Q$ can be the objective $f$ or a constraint $c_i$. A realization (or sample) is then written as $Q^{(i)} := Q(x, \theta_{i})$ 
where $N$ samples are used such that: $\{ Q^{(1)} , \hdots, Q^{(N)} \} = \{ Q (x, \theta_{1}), \hdots, Q (x, \theta_{N}) \}$. We employ $\mone[Q] := \mathbb{E}[Q]$ for the expected value while $\mu_i := \mathbb{E}[ (Q - \mone[Q])^i ], i > 1,$ is used for the $i$-th central moment. If obvious from context, we omit the integrand, e.g. $\mu_i := \mu_i[Q]$. Finally, the hat symbol stands for a sampling approximation of the quantity, e.g., $\moneapprox \approx \mone$.

Using this notation, we define the MC estimator for the expected value (or mean) as
\begin{equation}
\label{eq:unbiasedmone}
\moneapprox = \frac{1}{N} \sum_{i=1}^N Q^{(i)}.
\end{equation}
This estimator is unbiased and its variance can be obtained as
\begin{equation}
\label{eq:var_mean}
 \Var{ \moneapprox } = \frac{\Var{Q}}{N}.
\end{equation}
The knowledge of the MC estimator variance, Eq.~\eqref{eq:var_mean}, allows for a straightforward allocation of resources, whenever a target variance for $\Var{ \moneapprox }$ is desired, \textit{i.e.}, $N = \Var{Q} / \Var{ \moneapprox }$.

MC can be also used for the unbiased estimator for the variance, i.e. the second centered moment, as
\begin{equation}
\label{eq:unbiasedcmtwo}
\cmtwoapprox = \frac{1}{N-1} \sum_{i=1}^{N} (Q^{(i)} - \moneapprox)^2.
\end{equation}
This estimator, thanks to the use of the Bessel correction, is also unbiased and its variance has a closed-form expression~\cite{Mood1974}
\begin{equation}
\label{eq:var_var}
\begin{split}
\mathbb{V}[\cmtwoapprox]  &= \frac{1}{N} ( \cmfour - \frac{N-3}{N-1}  \cmtwo^2). 
\end{split}
\end{equation}
which, however, depends on both the exact statistics of the second and fourth, $\cmtwo$ and $\cmfour$ respectively, central moments of the QoI. We derive an unbiased estimator for Eq.~\eqref{eq:var_var} when we use estimators for $\cmfour$ and $\cmtwo$ in the following Lemma. This result is also given in~\cite{Krumscheid2020} derived using h-statistics.

\begin{lemma}
Let $\cmtwoapprox$ and $\cmfourapprox$ be unbiased estimators for the second and fourth central moment. The unbiased estimator of the variance of the second central moment is then given as
\begin{equation}
\label{eq:unbiasedvarianceofvariance} 
\mathbb{V}[\cmtwoapprox] \approx \frac{(N-1)}{N^2 - 2N +3} \left( \cmfourapprox -   \frac{N-3}{N-1}  \cmtwoapprox^2 \right).
\end{equation}
\begin{proof}
 See \ref{prf:unbiasedestimatorforvarianceofvariance} for the proof. 
\end{proof}

\end{lemma}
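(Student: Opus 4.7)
The plan is to start from the closed-form expression in Eq.~\eqref{eq:var_var},
\[
\mathbb{V}[\cmtwoapprox] \;=\; \frac{1}{N}\left(\cmfour - \frac{N-3}{N-1}\cmtwo^2\right),
\]
and replace the two unknown population quantities $\cmfour$ and $\cmtwo^2$ on the right-hand side by unbiased estimators. For $\cmfour$ we can directly use the assumed unbiased estimator $\cmfourapprox$, so the entire task reduces to constructing an unbiased estimator for $\cmtwo^2$. The naive choice $\cmtwoapprox^2$ will not work, because squaring destroys unbiasedness: by definition $\mathbb{E}[\cmtwoapprox^2] = \mathbb{V}[\cmtwoapprox] + \cmtwo^2$, so plugging $\cmtwoapprox^2$ in would introduce exactly the bias $\mathbb{V}[\cmtwoapprox]$ that we are trying to estimate.

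The key step is therefore to solve this self-referential relation. Substituting the exact formula \eqref{eq:var_var} into $\mathbb{E}[\cmtwoapprox^2] = \mathbb{V}[\cmtwoapprox] + \cmtwo^2$ and collecting the $\cmtwo^2$ terms yields
\[
\mathbb{E}[\cmtwoapprox^2] \;=\; \frac{1}{N}\cmfour \;+\; \frac{N^2-2N+3}{N(N-1)}\cmtwo^2,
\]
where the coefficient $N^2-2N+3$ arises from $N(N-1)-(N-3)$. Solving for $\cmtwo^2$ and replacing expectations by the corresponding unbiased estimators gives the unbiased estimator
\[
\widehat{\cmtwo^2} \;=\; \frac{N(N-1)}{N^2-2N+3}\left(\cmtwoapprox^2 - \frac{1}{N}\cmfourapprox\right).
\]

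The final step is routine algebra: substitute $\cmfour \mapsto \cmfourapprox$ and $\cmtwo^2 \mapsto \widehat{\cmtwo^2}$ back into Eq.~\eqref{eq:var_var} and collect terms. The coefficient of $\cmfourapprox$ becomes
\[
\frac{1}{N} + \frac{N-3}{N(N-1)}\cdot\frac{N(N-1)}{N^2-2N+3}\cdot\frac{1}{N}
\;=\; \frac{N^2-2N+3+(N-3)}{N(N^2-2N+3)} \;=\; \frac{N-1}{N^2-2N+3},
\]
and the coefficient of $\cmtwoapprox^2$ reduces to $-(N-3)/(N^2-2N+3)$, which gives exactly \eqref{eq:unbiasedvarianceofvariance} after factoring out $(N-1)/(N^2-2N+3)$.

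I do not anticipate a serious obstacle: once the trick of using $\mathbb{E}[\cmtwoapprox^2]=\mathbb{V}[\cmtwoapprox]+\cmtwo^2$ together with the known closed form of $\mathbb{V}[\cmtwoapprox]$ is identified, the derivation is purely algebraic. The only care needed is bookkeeping in the coefficient simplification, in particular recognizing the identity $N(N-1)-(N-3)=N^2-2N+3$ that produces the denominator appearing in the target expression.
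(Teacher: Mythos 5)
Your proposal is correct and rests on exactly the same ingredients as the paper's proof, namely the identity $\mathbb{E}[\cmtwoapprox^2]=\mathbb{V}[\cmtwoapprox]+\cmtwo^2$ combined with the closed form in Eq.~\eqref{eq:var_var}; the paper simply runs the argument in the verification direction (taking the expectation of the stated estimator and simplifying to $\frac{1}{N}\bigl(\cmfour-\frac{N-3}{N-1}\cmtwo^2\bigr)$), whereas you derive the estimator constructively, which is the same algebra in reverse.
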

\begin{Appendix}
\section{Proof: Unbiased estimator for variance of variance}
\label{prf:unbiasedestimatorforvarianceofvariance}
\begin{proof}
\begin{equation}
\begin{split}
\mathbb{E}\left[\frac{(N-1)}{N^2 - 2N +3} \left( \cmfourapprox -   \frac{N-3}{N-1}  \cmtwoapprox^2 \right)\right] &=  \frac{(N-1)}{N^2 - 2N +3}  \left( \mathbb{E}[\cmfourapprox] - \frac{(N-3)}{(N-1)} \mathbb{E}[\cmtwoapprox^2] \right) \\
&=  \frac{(N-1)}{N^2 - 2N +3} \left( \cmfour- \frac{N-3}{N-1} \mathbb{E}[\cmtwoapprox^2]\right) \\
&=  \frac{(N-1)}{N^2 - 2N +3} \left( \cmfour- \frac{N-3}{N-1} \bigg[\frac{1}{N} \left(\cmfour - \frac{N-3}{N-1} \cmtwo^2\right) + \cmtwo^2\bigg]\right) \\
&=  \frac{(N-1)}{N^2 - 2N +3} \left( \cmfour- \frac{N-3}{N(N-1)} \cmfour + \frac{(N-3)^2}{N(N-1)^2} \cmtwo^2 - \frac{N-3}{N-1} \cmtwo^2 \right) \\
&=  \frac{(N-1)}{N^2 - 2N +3} \left[ \left(1 - \frac{N-3}{N(N-1)} \right) \cmfour - \left(1 -  \frac{(N-3)}{N(N-1)}\right) \frac{(N-3)}{(N-1)} \cmtwo^2 \right] \\
&=  \frac{(N-1)}{N^2 - 2N +3} \left(1 - \frac{N-3}{N(N-1)} \right) \frac{1}{N} \left( \cmfour - \frac{(N-3)}{(N-1)} \cmtwo^2 \right) \\
&=  \frac{(N-1)}{N^2 - 2N +3}  \left(\frac{N^2 - 2N +3}{N(N-1)} \right) \left( \cmfour - \frac{(N-3)}{(N-1)} \cmtwo^2 \right) \\
&=  \frac{1}{N}  \left( \cmfour - \frac{(N-3)}{(N-1)} \cmtwo^2 \right)
\end{split}
\end{equation}
\end{proof}
\end{Appendix}

Indeed, in Eq.~\eqref{eq:var_var}, we can estimate the variance $\mathbb{V}[\cmtwoapprox]$ by relying on sample estimators for both the fourth $\cmfour$ and second $\cmtwo$ central moments. Since an unbiased estimator for the variance is already available (see Eq.~\eqref{eq:unbiasedcmtwo}), we only need to obtain an unbiased estimator for the fourth central moment $\cmfour$. Obtaining this unbiased estimator, from its biased counterpart, is discussed in the following proposition.

\begin{lemma}
Let $\cmfourapproxbiased = \frac{1}{N} \sum_{i=1}^N (Q^{(i)} - \moneapprox)^4$ be a biased estimator for the fourth central moment and let $\cmtwoapprox$ be an unbiased estimator for the second central moment as given in~\eqref{eq:unbiasedcmtwo}. Then, an unbiased estimator for the fourth central moment is given as
\begin{equation}
\label{eq:unbiasedcmfour}
\cmfourapprox = \frac{1}{(N^2-3N+3) - \frac{(6N-9)(N^2-N)}{N(N^2-2N+3)}} \left(\frac{N^3}{N-1} \cmfourapproxbiased - \frac{(6N-9)(N^2-N)}{N^2-2N+3}  \cmtwoapprox^2 \right).
\end{equation}
\end{lemma}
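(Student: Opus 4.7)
The strategy is to express the estimator $\cmfourapprox$ as a linear combination
$\cmfourapprox = a(N)\,\cmfourapproxbiased + b(N)\,\cmtwoapprox^{2}$
and choose the coefficients $a(N), b(N)$ so that $\mathbb{E}[\cmfourapprox] = \cmfour$. Since both $\mathbb{E}[\cmfourapproxbiased]$ and $\mathbb{E}[\cmtwoapprox^{2}]$ are linear in the two unknowns $\cmfour$ and $\cmtwo^{2}$, this is a $2\times 2$ linear problem, and the stated formula is obtained by inverting it.

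\textbf{Step 1 (the two needed expectations).} First I would recall, or derive from a direct expansion of $(Q^{(i)}-\moneapprox)^{4}$ followed by taking the expectation term by term, the classical identity
\begin{equation*}
\mathbb{E}[\cmfourapproxbiased] \;=\; \frac{(N-1)(N^{2}-3N+3)}{N^{3}}\,\cmfour \;+\; \frac{3(N-1)(2N-3)}{N^{3}}\,\cmtwo^{2}.
\end{equation*}
Next, from the calculation already carried out in the proof of Lemma~1 (Appendix A), we have
\begin{equation*}
\mathbb{E}[\cmtwoapprox^{2}] \;=\; \frac{1}{N}\Bigl(\cmfour - \tfrac{N-3}{N-1}\cmtwo^{2}\Bigr) + \cmtwo^{2} \;=\; \frac{1}{N}\,\cmfour \;+\; \frac{N^{2}-2N+3}{N(N-1)}\,\cmtwo^{2}.
\end{equation*}

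\textbf{Step 2 (solve the $2\times 2$ system).} Imposing $\mathbb{E}[a\,\cmfourapproxbiased + b\,\cmtwoapprox^{2}] = \cmfour$ yields the two conditions
\begin{equation*}
a\,\frac{(N-1)(N^{2}-3N+3)}{N^{3}} + b\,\frac{1}{N} = 1, \qquad a\,\frac{3(N-1)(2N-3)}{N^{3}} + b\,\frac{N^{2}-2N+3}{N(N-1)} = 0.
\end{equation*}
The second equation gives $b = -a\,\dfrac{3(N-1)^{2}(2N-3)}{N^{2}(N^{2}-2N+3)}$. Substituting into the first, using the identity $3(N-1)(2N-3) = \tfrac{(6N-9)(N^{2}-N)}{N}$ to match the notation in the statement, and solving for $a$ gives
\begin{equation*}
a \;=\; \frac{N^{3}}{(N-1)\bigl[(N^{2}-3N+3) - \tfrac{(6N-9)(N^{2}-N)}{N(N^{2}-2N+3)}\bigr]}, \qquad b \;=\; -\,\frac{(6N-9)(N^{2}-N)/(N^{2}-2N+3)}{(N^{2}-3N+3) - \tfrac{(6N-9)(N^{2}-N)}{N(N^{2}-2N+3)}},
\end{equation*}
which is exactly the expression in~\eqref{eq:unbiasedcmfour}.

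\textbf{Expected obstacle.} The only nontrivial piece is the expression for $\mathbb{E}[\cmfourapproxbiased]$ in Step~1: the direct calculation requires expanding $(Q^{(i)}-\moneapprox)^{4}$, separating the diagonal and off-diagonal contributions over $i,j,k,\ell$, and carefully bookkeeping which index coincidences produce terms proportional to $\cmfour$ versus $\cmtwo^{2}$ (using $\mathbb{E}[(Q^{(i)}-\mone)(Q^{(j)}-\mone)]=0$ for $i\neq j$ to kill cross-terms). Once that identity is in hand, the remainder of the argument reduces to the elementary linear algebra of Step~2, and a final consistency check that the first allocation equation is satisfied reduces, after multiplying through by $(N^{2}-2N+3)$, to the tautology $(N^{2}-3N+3)(N^{2}-2N+3) - 3(N-1)(2N-3) = D(N^{2}-2N+3)$ with $D$ the denominator in~\eqref{eq:unbiasedcmfour}.
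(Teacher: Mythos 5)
Your proposal is correct and rests on exactly the same two ingredients as the paper's proof: the classical expression for $\mathbb{E}[\cmfourapproxbiased]$ (cited there from Dodge) and the formula $\mathbb{E}[\cmtwoapprox^{2}]=\mathbb{V}[\cmtwoapprox]+\cmtwo^{2}$ from the variance-of-variance lemma. The only difference is presentational — you derive the coefficients by solving the $2\times 2$ system, whereas the paper verifies by direct substitution that the stated combination has expectation $\cmfour$ — so this is essentially the same argument.
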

\begin{proof}
See \ref{prf:unbiasedestimatorforfourthcentralmoment} for the proof.
\end{proof}

\begin{Appendix}
\section{Proof: Unbiased estimator for fourth central moment}
\label{prf:unbiasedestimatorforfourthcentralmoment}
\begin{proof}
A biased estimator for the fourth central moment is given in \cite[p.268, after eq. (6)]{Dodge1999} as
\begin{equation}
\label{eq:unbiasedcmfour1}
\begin{split}
&\mathbb{E}[\cmfourapproxbiased] = \frac{(N-1)(N^2-3N+3)}{N^3} \cmfour + \frac{3(2N-3)(N-1)}{N^3} \cmtwo^2 \\
\Leftrightarrow &\cmfour = \frac{1}{N^2-3N+3} \left(\frac{N^3}{N-1} \mathbb{E}[\cmfourapproxbiased] - (6N-9) \cmtwo^2 \right).
\end{split}
\end{equation}
Note that this is an unbiased estimator only if we use the exact value for $\cmtwo$ since $\cmtwo^2$ is unbiased while $\cmtwoapprox^2$ is not. Therefore, we need an unbiased estimator for $\cmtwoapprox^2$ and we know that
\begin{equation}
\label{eq:unbiasedcmfour2}
\begin{split}
\mathbb{E}[\cmtwoapprox^2] &= \mathbb{V}[\cmtwoapprox] + \mathbb{E}[\cmtwoapprox]^2 \\
													  &= \frac{1}{N} \left(\cmfour - \frac{N-3}{N-1} \cmtwo^2\right) + \cmtwo^2.
\end{split}
\end{equation}
Using both~\eqref{eq:unbiasedcmfour1} and~\eqref{eq:unbiasedcmfour2} we get the result
\begin{equation}
\label{eq:unbiasedcmfour3}
\begin{split}
&\mathbb{E}\left[ \underbrace{\frac{1}{(N^2-3N+3) - \frac{(6N-9)(N^2-N)}{N(N^2-2N+3)}}}_{(*)} \left(\frac{N^3}{N-1} \cmfourapproxbiased - \frac{(6N-9)(N^2-N)}{N^2-2N+3}  \cmtwoapprox^2 \right)\right] \\
&= (*) \left(\frac{N^3}{N-1} \mathbb{E}[\cmfourapproxbiased] - \frac{(6N-9)(N^2-N)}{N^2-2N+3}  \mathbb{E}[\cmtwoapprox^2] \right) \\
&= (*) \bigg(\frac{N^3}{N-1} \mathbb{E}[\cmfourapproxbiased] - \frac{(6N-9)(N^2-N)}{N^2-2N+3}  \bigg[ \frac{1}{N} (\cmfour - \frac{N-3}{N-1} \cmtwo^2) + \cmtwo^2 \bigg] \bigg) \\
&= (*) \bigg(\frac{N^3}{N-1} \bigg[\frac{(N-1)(N^2-3N+3)}{N^3} \cmfour + \frac{3(2N-3)(N-1)}{N^3} \cmtwo^2 \bigg] -\\
& \quad \quad \quad \frac{(6N-9)(N^2-N)}{N^2-2N+3}  \bigg[ \frac{1}{N} (\cmfour - \frac{N-3}{N-1} \cmtwo^2) + \cmtwo^2 \bigg] \bigg) \\
&= \cmfour.
\end{split}
\end{equation}
\end{proof}
\end{Appendix}

Finally, the last single fidelity estimator we need to discuss is the standard deviation, which can be approximated, directly from the variance estimator, as
\begin{equation}
\label{eq:unbiasedstddev}
\sigmaapprox = \sqrt{\cmtwoapprox}.
\end{equation}

This latter case introduces a number of challenges. First, an unbiased version of the estimator cannot be easily obtained (even if we rely on the unbiased variance Eq.~\eqref{eq:unbiasedcmtwo}). Second, not surprisingly, the variance of this estimator cannot be obtained in closed-form. To overcome this difficulty and obtain an expression to use for resource allocation purposes, we can rely on the delta method~\cite{DeltaMethod2012}. It employs a Taylor series expansion to find the approximate probability distribution for a function of an asymptotically normal estimator, which, in our case, will be the square root function and the variance estimator respectively: 

\begin{lemma}
Let assume that $\sigmaapprox^2$ is asymptotically normal distributed, 
and that mean and variance exist. The variance of $\sigmaapprox$ can be approximated by using the delta method~\cite{DeltaMethod2012} as
\begin{equation}
\label{eq:mcvarianceofstddev}
\begin{split}
\mathbb{V}[\sigmaapprox] &\approx   
\frac{1}{4 \cmtwoapprox}   \mathbb{V}[ \cmtwoapprox ].
\end{split}
\end{equation}
\end{lemma}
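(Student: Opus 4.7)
The plan is to apply the delta method directly to the transformation $g(x) = \sqrt{x}$ evaluated at the variance estimator $\cmtwoapprox$. Recall that the delta method asserts that if an estimator $T_n$ is asymptotically normal with $\sqrt{n}(T_n - \theta) \to \mathcal{N}(0, \tau^2)$ and $g$ is continuously differentiable at $\theta$ with $g'(\theta) \neq 0$, then $\sqrt{n}(g(T_n) - g(\theta)) \to \mathcal{N}(0, (g'(\theta))^2 \tau^2)$. Equivalently, for the first-order variance approximation, $\mathbb{V}[g(T_n)] \approx (g'(\theta))^2 \mathbb{V}[T_n]$.

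First I would set $T_n = \cmtwoapprox$ and $\theta = \cmtwo$, which is legitimate under the stated assumption of asymptotic normality of $\cmtwoapprox$ and existence of its mean and variance (with $\mathbb{E}[\cmtwoapprox] = \cmtwo$ from the unbiasedness of Eq.~\eqref{eq:unbiasedcmtwo}). Since $g(x) = \sqrt{x}$ with $g'(x) = \tfrac{1}{2\sqrt{x}}$, a straightforward application of the delta method yields
\begin{equation*}
\mathbb{V}[\sigmaapprox] = \mathbb{V}[g(\cmtwoapprox)] \approx \left( g'(\cmtwo) \right)^2 \mathbb{V}[\cmtwoapprox] = \frac{1}{4\cmtwo}\, \mathbb{V}[\cmtwoapprox].
\end{equation*}

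Next, since $\cmtwo$ is not accessible in practice and the whole point of this expression is to permit sample-based resource allocation, I would replace $\cmtwo$ in the denominator with its unbiased estimator $\cmtwoapprox$. This substitution is a second-order perturbation on top of the first-order delta-method approximation and is standard in sample variance calculations; the result is exactly the claimed formula Eq.~\eqref{eq:mcvarianceofstddev}.

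The main obstacle is not in the computation itself, which is a one-line derivative, but rather in being explicit about the two layers of approximation invoked (the Taylor linearization underlying the delta method, plus the substitution of $\cmtwoapprox$ for $\cmtwo$) and in justifying that the required regularity holds, namely that $\cmtwo > 0$ so that $g$ is differentiable at $\theta$ and the denominator does not vanish. I would briefly remark that these conditions are implicitly assumed whenever the standard deviation is a meaningful statistic, and I would flag the approximate (rather than exact) nature of the equality with the same $\approx$ symbol used in the statement, which avoids any misleading claim of unbiasedness for the surrogate of $\mathbb{V}[\sigmaapprox]$.
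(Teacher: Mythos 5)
Your proposal is correct and follows essentially the same route as the paper's own proof: a first-order Taylor (delta-method) linearization of $g(x)=\sqrt{x}$ applied to the variance estimator, giving $\mathbb{V}[\sigmaapprox]\approx \bigl(g'\bigr)^2\,\mathbb{V}[\cmtwoapprox]$. Your treatment is in fact slightly more careful than the paper's, since you expand about the true value $\cmtwo$ and make explicit the second substitution of $\cmtwoapprox$ for $\cmtwo$ (and the requirement $\cmtwo>0$), whereas the paper's appendix derivation writes the estimator directly into the linearization; the end result is identical.
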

\begin{proof}
See \ref{prf:deltamethod} for the proof.
\end{proof}
\begin{Appendix}
\section{Proof: delta method}
\label{prf:deltamethod}
\begin{proof}
We can find an approximation by using a Taylor expansion of $g(X)$ around $\mone$, s.t. 
\begin{equation}
g(X) = g(\mone) + g'(\mone)(X-\mone)+g''(\mone)\frac{(X-\mone)^2}{2!} + ...,
\end{equation}
where we drop the higher terms to get
\begin{equation}
g(X) \approx g(\mone) + g'(\mone)(X-\mone).
\end{equation}
Taking the variance on both sides yields
\begin{equation}
\label{eq:deltageneral}
\mathbb{V}[g(X)] \approx \mathbb{V}[g(\mone)] + \mathbb{V}[g'(\mone)(X-\mone)] = g'(X)^2  \mathbb{V}[X].
\end{equation}
Denoting $X := \sigmaapprox^2$ and $g(X):= \sqrt{X}$, we get
\begin{equation}
\begin{split}
\mathbb{V}\left[\sqrt{\sigmaapprox^2}\right] &\approx   \left(\frac{1}{2 \sqrt{\sigmaapprox^2}}\right)^2    \mathbb{V}[ \sigmaapprox^2 ] =
\frac{1}{4 \cmtwoapprox}   \mathbb{V}[ \cmtwoapprox ].
\end{split}
\end{equation}
\end{proof}
\end{Appendix}

We again point out that the variance is not necessarily normal distributed, which results in an approximation. Of course, the absence of simple expression for $\mathbb{V}[\sigmaapprox]$ also mandates the use of a numerical optimization to determine the number of samples $N$, which provides a prescribed estimator variance. Since the use of the numerical optimization is also required in the MLMC context, we defer its presentation to the next section.

\subsection{Multilevel sample allocation for the mean estimator}
\label{ssec:mlmcformean}
In this section we discuss how the estimation of the quantities introduced in Section~\ref{ssec:single_fidelity} can be accelerated by employing sampling on multiple approximations or levels for a QoI. In particular, we illustrate how, by relying on different approximations, the target variance of the estimator can be reached more efficiently by only using a limited number of samples at the highest resolution level. We start by considering the mean estimation case already available in literature~\cite{Giles2008,Giles2015}, which allows us to introduce the notation and the basic concepts. Afterwards, we move to more complex estimator targets---variance, standard deviation, and scalarization---as the main contribution of this work.

We consider samples for $Q$ obtained on different levels $\ell$, and, in particular, each multilevel estimator at level $\ell$ will include evaluations of $Q$ on two consecutive levels $\ell$ and $\ell-1$. When multiple levels of $\ell \in [1, L]$ are available, we extend notation from the previous section to $Q_\ell:= Q_\ell(x, \theta_\ell)$. A realization (or sample) is then written as $Q^{(i)}_\ell:= Q_\ell (x, \theta_\ell^{(i)})$, where $\Nl$ samples are used for level $\ell$, as follows: $Q^{(1)}_\ell, \hdots, Q^{(\Nl)}_\ell = Q_\ell (x, \theta_\ell^{(1)}), \hdots, Q_\ell (x, \theta_\ell^{(\Nl)})$.

The MLMC estimator for a QoI $Q$ can be expressed by expanding the expected value over levels, with $\ell=1$ being the coarsest level and $L$ being the finest level (which we omit for brevity, i.e., $Q_L=Q$), as 
\begin{equation}
\label{eq:mlmc_mean}
\begin{split}
\EE{Q} 
       \approx \moneapproxml[Q] &:= \mlsum \moneapproxl =  \mlsum \moneapprox[\Ql - \Qlmone] \\
&= \mlsum \frac{1}{\Nl} \sum_{i=1}^{\Nl} (\Qli - \Qlmonei), \quad Q^{(i)}_{0} := 0,
\end{split}
\end{equation}
Because they use the same samples $\theta_\ell^{(i)}$ for the evaluation on different levels, we introduce a dependence between the two quantities $\Qli:= Q_\ell[x, \theta_\ell^{(i)}]$ and $\Qlmonei:= Q_{\ell-1}[x, \theta_\ell^{(i)}]$. When we consider single-level estimators in the multilevel case, such as $\moneapproxl$, this dependence is implicitly assumed for the rest of this work.

The variance of the MLMC estimator for the mean is given as
\begin{equation}
\label{eq:mlmcvarianceofmean}
\mathbb{V}[\moneapproxml] = \mlsum \mathbb{V}[\moneapproxl - \moneapproxlmone] = \mlsum \frac{1}{\Nl} \mathbb{V}[Q_\ell - Q_{\ell-1}].
\end{equation}

To achieve the desired accuracy, the computational load can be redistributed toward the coarser level for a sequence of levels where $\mathbb{V}[Q_\ell - Q_{\ell-1}] \rightarrow 0$ with $\ell \rightarrow L$. To accomplish this, we define an associated computational cost for each level, such that a single $Q_\ell$ evaluation has a computational cost of $C_\ell$, and $C_1 < C_2 < \cdots < C_L$. The estimator in Eq.~\eqref{eq:mlmc_mean} represents a special case in which the closed-form solution for its variance allows for a closed-form solution for the allocation strategy. We introduce here the following notation that we will adopt throughout the paper: if a target variance of $\epsilon^2$ is provided for a MLMC estimator of a certain statistic, e.g., the expected value, we write the corresponding optimization problem as
\begin{equation}
\label{eq:mlmcmean}
\begin{split}
\accentset{\ast}{N}_\ell^{\mathbb{E}} = \argmin_{\Nl^{\mathbb{E}}} C_T^{\mathbb{E}} := \mlsum \Cl \Nl^{\mathbb{E}}, \\
\text{s.t. } \mathbb{V}[\moneapproxml] = \epsilon_{\mathbb{E}}^2,
\end{split}
\end{equation}
where $C_T^{\mathbb{E}}$ describes the total computational cost targeting the expectation.
By denoting the target statistic as a superscript, we can distinguish between resource allocations optimized for a specific statistic, here $\Nl^{\mathbb{E}}$. For brevity, we define this approach as \emph{targeting the mean}. 

In the case of the mean estimator, we only need to estimate the quantity $\mathbb{V}[Q_\ell - Q_{\ell-1}]$, whereas the optimization solution has been provided by~\cite{Heinrich2001, Giles2008} using the Lagrangian multiplier technique
\begin{equation}
\lambda = \epsilon_{\mathbb{E}}^{-2} \mlsum \sqrt{\mathbb{V}[Q_\ell - Q_{\ell-1}] \Cl},
\end{equation}
which can be used to express the optimal resource allocation, for a generic level $\ell$, as
\begin{equation}
\accentset{\ast}{N}_\ell^{\mathbb{E}} = \left\lceil \lambda \sqrt{\frac{\mathbb{V}[Q_\ell - Q_{\ell-1}]}{\Cl}} \right\rceil.
\end{equation}

\subsection{Multilevel sample allocation for the variance estimator}
\label{ssec:mlmcforvariance}
Following what we presented for the mean, we now move to the resource allocation problem for a MLMC estimator targeting the variance of the QoI. 
We first define the MLMC estimator for the variance of the QoI as
\begin{equation}
\label{eq:mlmcvarianceestimator}
\begin{split}
\mathbb{V}[Q_L] \approx \cmtwoapproxml[Q_L] &:= \mlsum \cmtwoapprox[\Ql] - \cmtwoapprox[\Qlmone] \\
																						&= \mlsum (\cmtwoapproxl - \cmtwoapproxlmone) \\
																						&= \mlsum \frac{1}{\Nl - 1} \bigg( \sum_{i=1}^{\Nl} (\Qli - \moneapproxl)^2 - (\Qlmonei - \moneapproxlmone)^2 \bigg).
\end{split}
\end{equation}

Our optimization problem reads as
\begin{equation}
\label{eq:mlmcvariance}
\begin{split}
\accentset{\ast}{N}_\ell^{\mathbb{V}} = \argmin_{\Nl^{\mathbb{V}}} C_T^{\mathbb{V}} := \mlsum \Cl \Nl^{\mathbb{V}}, \\
\text{s.t. } \mathbb{V}[\cmtwoapproxml] = \epsilon_{\mathbb{V}}^2,
\end{split}
\end{equation}
where we now \emph{target the variance} and in which the variance of the estimator still has to be determined.

The variance of the estimator $\cmtwoapproxml[Q_L]$ can be written as 
\begin{equation}
\label{eq:mlmcvarianceofvariance}
\begin{split}
\mathbb{V}[\cmtwoapproxml] = \mathbb{V}\left[\mlsum (\cmtwoapproxl - \cmtwoapproxlmone)\right] &= \mlsum \mathbb{V}\left[\cmtwoapproxl - \cmtwoapproxlmone\right]\\
&= \mlsum \mathbb{V}[\cmtwoapproxl] + \mathbb{V}[\cmtwoapproxlmone] - 2 \Cov[\cmtwoapproxl, \cmtwoapproxlmone]
\end{split}
\end{equation}
where we use independence of the samples over the different levels. Note again the short-hand notation $\cmtwoapproxlmone = \cmtwoapproxlmone\left[Q_{\ell-1}[x, \theta_\ell]\right]$ where we evaluate the moment of interest on level $\ell - 1$ but use the same samples as on level $\ell$. This results in the dependence expressed by the covariance term where we also employ this short-hand notation: $\Cov[\cmtwoapproxl, \cmtwoapproxlmone] = \Cov\left[\cmtwoapproxl\left[Q_{\ell-1}[x, \theta_\ell]\right], \cmtwoapproxlmone\left[Q_{\ell-1}[x, \theta_\ell]\right]\right]$. In Section~\ref{ssec:single_fidelity}, we discussed the estimation of the single fidelity variance expressions for the terms $\mathbb{V}[\cmtwoapproxl]$ and $\mathbb{V}[\cmtwoapproxlmone]$ (see Eq.~\eqref{eq:var_var} and its unbiased estimator in Eq.~\eqref{eq:unbiasedvarianceofvariance}). However, in Eq.~\eqref{eq:mlmcvarianceofvariance}, an additional term, $\Cov[\cmtwoapproxl, \cmtwoapproxlmone]$, needs to be evaluated, for which we provide the expression in the following lemma. 

\begin{lemma}
\label{lem:covariance}
Let $\cmtwoapproxl$ and $\cmtwoapproxlmone$ be unbiased single level estimators for the respective level $\ell$ and $\ell - 1$ as described in Eq.~\eqref{eq:unbiasedcmtwo}. Then, the covariance term in Eq.~\eqref{eq:mlmcvarianceofvariance} is given as
\begin{equation}
\label{eq:covariance}
\begin{split}
\Cov[\cmtwoapproxl, \cmtwoapproxlmone] &= \frac{1}{N_\ell} \mathbb{E}[\cmtwoapproxl \cmtwoapproxlmone] \\
																			  &+ \frac{1}{\Nl (\Nl - 1)} \bigg(\mathbb{E}[\Ql \Qlmone]^2 - 2 \mathbb{E}[\Ql \Qlmone] \mathbb{E}[\Ql]  \mathbb{E}[\Qlmone] + (\mathbb{E}[\Ql]  \mathbb{E}[\Qlmone])^2 \bigg)
\end{split}
\end{equation}
where 
\begin{equation}
\label{eq:covcostterm}
\begin{split}
\mathbb{E}[\cmtwoapproxl \cmtwoapproxlmone] &=\mathbb{E}[\Ql^2 \Qlmone^2] - \mathbb{E}[\Ql^2]\mathbb{E}[ \Qlmone^2] \\
																			  &- 2 \mathbb{E}[\Qlmone] \mathbb{E}[\Ql^2 \Qlmone] + 2 \mathbb{E}[\Qlmone]^2 \mathbb{E}[\Ql^2] \\
																			  &- 2 \mathbb{E}[\Ql] \mathbb{E}[\Ql \Qlmone^2] + 2 \mathbb{E}[\Ql]^2\mathbb{E}[\Qlmone^2] \\
																			  &+ 4 \mathbb{E}[\Ql] \mathbb{E}[\Qlmone] \mathbb{E}[\Ql \Qlmone] - 4 \mathbb{E}[\Ql]^2\mathbb{E}[\Qlmone]^2 .
\end{split}
\end{equation}
\begin{proof}
See \ref{prf:covarianceofvariance} for the proof.
\end{proof}

\end{lemma}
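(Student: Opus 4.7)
The plan is to compute $\Cov[\cmtwoapproxl, \cmtwoapproxlmone]$ directly, by expanding the product of the two unbiased variance estimators and exploiting the sampling structure. The $N_\ell$ draws $\theta_\ell^{(i)}$ are i.i.d.\ across $i$, so the pairs $(\Qli, \Qlmonei)$ and $(\Qlj, \Qlmonej)$ are independent whenever $i \neq j$. Within a single index, however, the two realizations $\Qli = Q_\ell(x, \theta_\ell^{(i)})$ and $\Qlmonei = Q_{\ell-1}(x, \theta_\ell^{(i)})$ share the same underlying draw and are therefore coupled through the joint law of $(\Ql, \Qlmone)$. This within-sample coupling is exactly what produces the $\Cov[\Ql, \Qlmone]^2$ contribution, while the i.i.d.\ structure is what allows cross-index sums to factorize.

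I would first rewrite each estimator through the identity $\sum_i (A_i - \bar A)^2 = \sum_i A_i^2 - N_\ell \bar A^{\,2}$, obtaining
\[
\cmtwoapproxl = \frac{N_\ell}{N_\ell - 1}\bigg(\frac{1}{N_\ell}\sum_i \Qli^{\,2} - \moneapproxl^{\,2}\bigg), \qquad \cmtwoapproxlmone = \frac{N_\ell}{N_\ell - 1}\bigg(\frac{1}{N_\ell}\sum_i \Qlmonei^{\,2} - \moneapproxlmone^{\,2}\bigg).
\]
The product then expands into four cross-terms whose expectations reduce the problem to evaluating $\mathbb{E}[\overline{\Ql^{\,2}}\,\overline{\Qlmone^{\,2}}]$, $\mathbb{E}[\overline{\Ql^{\,2}}\,\moneapproxlmone^{\,2}]$, $\mathbb{E}[\moneapproxl^{\,2}\,\overline{\Qlmone^{\,2}}]$, and $\mathbb{E}[\moneapproxl^{\,2}\,\moneapproxlmone^{\,2}]$, after which the covariance is obtained by subtracting $\mathbb{E}[\cmtwoapproxl]\,\mathbb{E}[\cmtwoapproxlmone] = \mathbb{V}[\Ql]\,\mathbb{V}[\Qlmone]$, which follows from the unbiasedness already asserted by the lemma.

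Each of these expectations unfolds as a multi-index sum that I would evaluate by a short case analysis on which sample indices coincide: only patterns in which every independent draw appears at least twice (so that no lonely $\mathbb{E}[\Ql]$ or $\mathbb{E}[\Qlmone]$ is left hanging on a single sample) give a nontrivial contribution. The combinatorial heart lives in the last expectation, the quadruple sum $N_\ell^{-4}\sum_{i,j,k,h}\mathbb{E}[\Qli\,\Qlj\,\Qlmonek\,\Qlmoneh]$, whose surviving index patterns split into: (i) the diagonal $i=j=k=h$, which generates the $\mathbb{E}[\Ql^{\,2}\Qlmone^{\,2}]$ contribution; (ii) the split $i=j$, $k=h$, $i \neq k$, which factorizes into $\mathbb{E}[\Ql^{\,2}]\,\mathbb{E}[\Qlmone^{\,2}]$; and (iii) the two cross-pairings $\{i=k,\, j=h\}$ and $\{i=h,\, j=k\}$ with $i \neq j$, each of which, via the within-sample coupling and independence across the two distinct sample indices, factorizes into $\mathbb{E}[\Ql \Qlmone]^2$. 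After subtracting the product of marginal expectations, the $\mathbb{E}[\Ql \Qlmone]^2$ pieces combine with the lower-order $\mathbb{E}[\Ql],\mathbb{E}[\Qlmone]$ cross-terms generated by the three mixed expectations to form the perfect square $\big(\mathbb{E}[\Ql \Qlmone] - \mathbb{E}[\Ql]\,\mathbb{E}[\Qlmone]\big)^2 = \Cov[\Ql, \Qlmone]^2$ with the claimed $1/(N_\ell(N_\ell - 1))$ prefactor; the remaining contributions collapse exactly into the expression $\mathbb{E}[\cmtwoapproxl\,\cmtwoapproxlmone]$ as written in Eq.~\eqref{eq:covcostterm}, with the overall $1/N_\ell$ prefactor arising naturally from the $N_\ell$ diagonal choices relative to the $N_\ell^2$ terms in the outer sum.

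The main obstacle is purely algebraic bookkeeping: many index patterns arise with overlapping multiplicities, and keeping track of the combinatorial factors carefully enough that the two prefactors $1/N_\ell$ and $1/(N_\ell(N_\ell-1))$ emerge cleanly is delicate. To guard against miscounting, I would use two internal consistency checks. First, specializing to $\Qlmone \equiv \Ql$ reduces the left-hand side to $\mathbb{V}[\cmtwoapproxl]$ and must recover the single-level variance-of-variance formula Eq.~\eqref{eq:var_var}. Second, if $\Ql$ and $\Qlmone$ are independent at each sample, then $\Cov[\Ql, \Qlmone] = 0$ and the quantity $\mathbb{E}[\cmtwoapproxl\,\cmtwoapproxlmone]$ of Eq.~\eqref{eq:covcostterm} also vanishes (it is precisely the covariance of the squared deviations), so the expression must return zero—consistent with the two estimators being uncorrelated in that regime.
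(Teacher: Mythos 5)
Your plan is, at bottom, the same computation as the paper's appendix proof of this lemma: write each variance estimator as a quadratic form, expand the product, and classify the resulting multi-index sums by which sample indices coincide. The only structural difference is that the paper first passes to centered variables $\Zli=\Qli-\monel$, $\Zlmonei=\Qlmonei-\monelmone$, which makes every pattern with a singleton index vanish, whereas you work with raw $\Qli$'s; your blanket claim that ``only patterns in which every draw appears at least twice contribute'' is only true after that centering, so with raw variables you must explicitly carry the singleton patterns (products of $\mathbb{E}[\Ql]$, $\mathbb{E}[\Qlmone]$) through to the end. That is a bookkeeping burden rather than a conceptual difference.

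The genuine problem is in the final assembly. In your case (iii) you correctly keep \emph{both} cross-pairings $\{i=k,\,j=h\}$ and $\{i=h,\,j=k\}$ with $i\neq j$; that gives $2\Nl(\Nl-1)$ contributions, each equal (after centering) to $\mathbb{E}[\Zl\Zlmone]^2=\Cov[\Ql,\Qlmone]^2$, and hence a prefactor $\tfrac{2}{\Nl(\Nl-1)}$ on the squared-covariance term --- not the $\tfrac{1}{\Nl(\Nl-1)}$ you assert ``emerges cleanly'' and which Eq.~\eqref{eq:covariance} displays. The paper's own derivation retains only the matching $(k,h)=(i,j)$ in its term (c4), which is exactly how it arrives at the stated coefficient, so your counting and the stated lemma cannot both hold. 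Your first consistency check settles which count is right: setting $\Qlmone\equiv\Ql$, Eq.~\eqref{eq:var_var} is equivalent to $\tfrac{1}{N}(\cmfour-\cmtwo^2)+\tfrac{2}{N(N-1)}\cmtwo^2$, which agrees with the factor-$2$ count, while the degenerate case of Eq.~\eqref{eq:covariance} gives $\tfrac{1}{N}(\cmfour-\cmtwo^2)+\tfrac{1}{N(N-1)}\cmtwo^2$; so the check you claim will confirm the stated formula would in fact flag a mismatch (your second check does not discriminate, since the offending term vanishes when $\Cov[\Ql,\Qlmone]=0$). As written, the proposal is therefore internally inconsistent: either drop one pairing (and then justify why, which you cannot, since neither vanishes) or carry your correct count through and report the coefficient $2$, noting the discrepancy with Eq.~\eqref{eq:covariance} and with the appendix step that produces it.
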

\begin{Appendix}
\section{Proof: Covariance of variance}
\label{prf:covarianceofvariance}
\begin{proof}
Change to centered moments using $\Zli = \Qli - \moneapproxl$ and $\Zlmonei = \Qlmonei - \moneapproxlmone$:
\begin{eqnarray}
\cmtwoapproxl = \frac{1}{\Nl - 1} \sum_{i=1}^{\Nl} {\Zli}^2 - \frac{1}{\Nl (\Nl - 1) } \left(\sum_{i=1}^{\Nl} \Zli \right)^2 \\
\cmtwoapproxlmone = \frac{1}{\Nl - 1} \sum_{i=1}^{\Nl} {\Zlmonei}^2 - \frac{1}{\Nl (\Nl - 1) } \left(\sum_{i=1}^{\Nl} \Zlmonei \right)^2
\end{eqnarray}
Plug back in, and split up covariance
\begin{align}
\Cov[\cmtwoapproxl, \cmtwoapproxlmone] &= \frac{1}{(\Nlmone)^2} \Cov\left[ \Nlsum {\Zli}^2, \Nlsum {\Zlmonei}^2 \right] \tag{c1} \\
																			  &-  \frac{1}{\Nl (\Nlmone)^2} \Cov\left[ \Nlsum {\Zli}^2, \left(\Nlsum {\Zlmonei} \right)^2 \right] \tag{c2}\\
																			  &-  \frac{1}{\Nl (\Nlmone)^2} \Cov\left[ \left(\Nlsum {\Zli} \right)^2, \Nlsum {\Zlmonei}^2 \right] \tag{c3}\\
																			  &+  \frac{1}{\Nl^2 (\Nlmone)^2} \Cov\left[ \left(\Nlsum {\Zli} \right)^2, \left(\Nlsum {\Zlmonei} \right)^2 \right] \tag{c4}
\end{align}
Solve the four different terms independently:
\begin{equation}
\begin{split}
\text{(c1) } &\Cov\left[ \Nlsum {\Zli}^2, \Nlsum {\Zlmonei}^2 \right] = \Nl \Cov[{\Zl}^2, {\Zlmone}^2] \\
					&= \Nl \Cov[\Ql^2 - 2 \Ql \monel + {\monel}^2, \Qlmone^2 - 2 \Qlmone \monelmone + {\monelmone}^2] \\
					&= \Nl \bigg( \Cov[\Ql^2,  \Qlmone^2]		 - 2 \monelmone \Cov[\Ql^2, \Qlmone] - 2 \monel \Cov[\Ql, \Qlmone^2] \\
					&+ 4 \monel \monelmone \Cov[\Ql, \Qlmone]\bigg)
\end{split}
\end{equation}
\begin{equation}
\begin{split}
\text{(c2) } &\Cov \left[ \Nlsum {\Zli}^2, \left(\Nlsum {\Zlmonei}\right)^2\right] = \Cov\left[ \Nlsum {\Zli}^2, \Nlsum \Nlsumj {\Zlmonei}{\Zlmonej}\right] \\
					&= \Cov\left[ \Nlsum {\Zli}^2, \Nlsum {\Zlmonei}^2\right] + \Cov\left[ \Nlsum {\Zli}^2, \Nlsum \Nlsumjneqi {\Zlmonei}{\Zlmonej}\right] \\
					&= \Cov\left[ \Nlsum {\Zli}^2, \Nlsum {\Zlmonei}^2\right] = (c1)\\
					&\text{since (making use of centered Z and independence of $\Zlmonei$ to $\Zlmonej$)} \\
					&\Cov\left[ \Nlsum {\Zli}^2, \Nlsum \Nlsumjneqi {\Zlmonei}{\Zlmonej}\right]  \\
					&= \mathbb{E}\left[ \left(\Nlsum {\Zli}^2\right) \Nlsum \Nlsumjneqi {\Zlmonei}{\Zlmonej}\right] - \mathbb{E}\left[\Nlsum {\Zli}^2\right] \mathbb{E}\left[\Nlsum \Nlsumjneqi {\Zlmonei}{\Zlmonej}\right] \\
					&= \mathbb{E}\left[ \left(\Nlsum {\Zli}^2\right) \Nlsum {\Zlmonei} \Nlsumjneqi {\Zlmonej}\right] - \mathbb{E}\left[\Nlsum {\Zli}^2\right] \mathbb{E}\left[\Nlsum {\Zlmonei} \Nlsumjneqi {\Zlmonej}\right] \\
					&=  \mathbb{E}\left[ \left(\Nlsum {\Zli}^2 \right) \Nlsum {\Zlmonei}\right] \mathbb{E}\left[\Nlsumjneqi {\Zlmonej}\right] - \mathbb{E}\left[\Nlsum {\Zli}^2\right] \mathbb{E}\left[\Nlsum {\Zlmonei}\right] \mathbb{E}\left[\Nlsumjneqi {\Zlmonej}\right] \\
					&=  \mathbb{E}\left[ \left(\Nlsum {\Zli}^2\right) \Nlsum {\Zlmonei}\right] \Nlsumjneqi \mathbb{E}[{\Zlmonej}] - \mathbb{E}\left[\Nlsum {\Zli}^2\right] \mathbb{E}\left[\Nlsum {\Zlmonei}\right] \Nlsumjneqi \mathbb{E}[{\Zlmonej}] \\
					&= 0, {\text{since }} \mathbb{E}[{\Zlmonei}] = \mathbb{E}[\Qlmonei - \monelmone] = 0
\end{split}
\end{equation}
\begin{equation}
\begin{split}
\text{(c3) } &\Cov\left[ \left(\Nlsum {\Zli}\right)^2, \Nlsum {\Zlmonei}^2\right] = \Cov\left[ \Nlsum \Nlsumj \Zli \Zlj, \Nlsum {\Zlmonei}^2\right] \\
			        &=\Cov\left[ \Nlsum {\Zli}^2 + \Nlsum \Nlsumjneqi \Zli \Zlj, \Nlsum {\Zlmonei}^2\right] \\
			        &= \Cov\left[ \Nlsum {\Zli}^2, \Nlsum {\Zlmonei}^2\right] + \Cov\left[\Nlsum \Nlsumjneqi \Zli \Zlj, \Nlsum {\Zlmonei}^2\right] \\
			        &=  \Cov\left[ \Nlsum {\Zli}^2, \Nlsum {\Zlmonei}^2\right] + \Cov\left[\Nlsum \Zli  \Nlsumjneqi\Zlj, \Nlsum {\Zlmonei}^2\right] \\
			        &=  \Cov\left[ \Nlsum {\Zli}^2, \Nlsum {\Zlmonei}^2\right] \\
			        &+ \mathbb{E}\left[\Nlsum \Zli  \Nlsumjneqi\Zlj  \Nlsum {\Zlmonei}^2\right] - \mathbb{E}\left[\Nlsum \Zli  \Nlsumjneqi\Zlj \right]\mathbb{E}\left[\Nlsum {\Zlmonei}^2\right]\\
			        &= \Cov\left[ \Nlsum {\Zli}^2, \Nlsum {\Zlmonei}^2 \right] \\
			        &+ \mathbb{E}\left[\Nlsum \Zli  \Nlsum {\Zlmonei}^2 \right] \mathbb{E}\left[\Nlsumjneqi\Zlj \right] - \mathbb{E}\left[\Nlsum \Zli \right] \mathbb{E}\left[\Nlsumjneqi\Zlj \right] \mathbb{E}\left[\Nlsum {\Zlmonei}^2 \right] \\
			        &= \Cov \left[ \Nlsum {\Zli}^2, \Nlsum {\Zlmonei}^2 \right] = (c1) , {\text{since }} \mathbb{E}[{\Zli}] = 0
\end{split}
\end{equation}
\begin{equation}
\begin{split}
\text{(c4) } &\Cov\left[ \left(\Nlsum {\Zli}\right)^2, \left(\Nlsum {\Zlmonei} \right)^2\right] = \Cov\left[ \Nlsum \Nlsumj {\Zli} \Zlj, \Nlsum \Nlsumj {\Zlmonei} \Zlmonej \right] \\
	                &= \Cov\left[ \Nlsum {\Zli}^2 + \Nlsum \Nlsumjneqi {\Zli} \Zlj, \Nlsum {\Zlmonei}^2 + \Nlsum \Nlsumjneqi {\Zlmonei} \Zlmonej \right] \\
	                &= \Cov\left[ \Nlsum {\Zli}^2, \Nlsum {\Zlmonei}^2 \right] + \Cov\left[\Nlsum {\Zli}^2, \Nlsum \Nlsumjneqi {\Zlmonei} \Zlmonej \right] \\
	                &+ \Cov\left[\Nlsum \Nlsumjneqi {\Zli} \Zlj, \Nlsum {\Zlmonei}^2 \right] + \Cov\left[\Nlsum \Nlsumjneqi {\Zli} \Zlj, \Nlsum \Nlsumjneqi {\Zlmonei} \Zlmonej \right] \\
	                &= \Cov\left[ \Nlsum {\Zli}^2, \Nlsum {\Zlmonei}^2 \right] + 0 + 0 + \Cov\left[\Nlsum \Nlsumjneqi {\Zli} \Zlj, \Nlsum \Nlsumjneqi {\Zlmonei} \Zlmonej \right] \\
	                &= (c1) + \Cov\left[\Nlsum \Nlsumjneqi {\Zli} \Zlj, \Nlsum \Nlsumjneqi {\Zlmonei} \Zlmonej \right] \\
	                &= (c1) + \Nlsum \Nlsumjneqi \Cov \left[{\Zli} \Zlj, {\Zlmonei} \Zlmonej \right] \\
	                &= (c1) + \Nlsum \Nlsumjneqi \mathbb{E}[{\Zli} \Zlj  {\Zlmonei} \Zlmonej] - \mathbb{E}[{\Zli} \Zlj] \mathbb{E}[{\Zlmonei} \Zlmonej] \\
	                &= (c1) + \Nlsum \Nlsumjneqi \mathbb{E}[{\Zli} {\Zlmonei}] \mathbb{E}[ \Zlj \Zlmonej] - \mathbb{E}[{\Zli}]\mathbb{E}[\Zlj] \mathbb{E}[{\Zlmonei}] \mathbb{E}[\Zlmonej] \\
	                &= (c1) + \Nlsum \Nlsumjneqi \mathbb{E}[{\Zli} {\Zlmonei}] \mathbb{E}[ \Zlj \Zlmonej] \\
	                &= (c1) + \Nlsum \Nlsumjneqi \mathbb{E}\left[(\Qli - \monel) (\Qlmonei - \monelmone)\right] \mathbb{E}\left[ (\Qlj - \monel) (\Qlmonej - \monelmone)\right]
\end{split}
\end{equation}
\begin{equation}
\begin{split}
\text{(c4) cont. } 
	                &= (c1) + \Nlsum \Nlsumjneqi \left(\mathbb{E}[\Qli \Qlmonei] - \mathbb{E}[\Qli \monelmone]  - \mathbb{E}[\monel \Qlmonei] + \mathbb{E}[\monel \monelmone] \right) \\
	                &\left( \mathbb{E}[\Qlj \Qlmonej] - \mathbb{E}[\Qlj \monelmone]  - \mathbb{E}[\monel \Qlmonej] + \mathbb{E}[\monel \monelmone]\right) \\
	                &= (c1) + \Nlsum \Nlsumjneqi \left(\mathbb{E}[\Qli \Qlmonei] - \mathbb{E}[\Qli] \monelmone  - \monel \mathbb{E}[\Qlmonei] +\monel \monelmone \right) \\
	                &\left( \mathbb{E}[\Qlj \Qlmonej] - \mathbb{E}[\Qlj] \monelmone  - \monel  \mathbb{E}[\Qlmonej] + \monel \monelmone \right) \\
	                &= (c1) + \Nlsum \Nlsumjneqi \left(\mathbb{E}[\Qli \Qlmonei] - \monel \monelmone  - \monel \monelmone +\monel \monelmone \right) \\
	                &\left( \mathbb{E}[\Qlj \Qlmonej] - \monel \monelmone  - \monel  \monelmone + \monel \monelmone \right) \\
	                &= (c1) + \Nlsum \Nlsumjneqi \left(\mathbb{E}[\Qli \Qlmonei] - \monel \monelmone \right) \left( \mathbb{E}[\Qlj \Qlmonej] - \monel \monelmone \right) \\
	                &= (c1) + \Nl \Nlmone (\mathbb{E}[\Ql \Qlmone] - \monel \monelmone)^2
\end{split}
\end{equation}
Bring it all together and simplify
\begin{align*}
\Cov[\cmtwoapproxl, \cmtwoapproxlmone] &= \frac{1}{(\Nlmone)^2} \Nl \bigg( \Cov[\Ql^2,  \Qlmone^2]- 2 \monelmone \Cov[\Ql^2, \Qlmone] \\
																			  &- 2 \monel \Cov[\Ql, \Qlmone^2]+ 4 \monel \monelmone \Cov[\Ql, \Qlmone]\bigg) \tag{c1}\\
																			  &-  \frac{1}{\Nl (\Nlmone)^2} \Nl \bigg( \Cov[\Ql^2,  \Qlmone^2] - 2 \monelmone \Cov[\Ql^2, \Qlmone] \\
																			  &- 2 \monel \Cov[\Ql, \Qlmone^2] + 4 \monel \monelmone \Cov[\Ql, \Qlmone]\bigg) \tag{c2}\\
																			  &-  \frac{1}{\Nl (\Nlmone)^2} \Nl \bigg( \Cov[\Ql^2,  \Qlmone^2]		 - 2 \monelmone \Cov[\Ql^2, \Qlmone] \\
																			  &- 2 \monel \Cov[\Ql, \Qlmone^2] + 4 \monel \monelmone \Cov[\Ql, \Qlmone]\bigg) \tag{c3}\\
																			  &+  \frac{1}{\Nl^2 (\Nlmone)^2} \bigg[ \Nl \bigg( \Cov[\Ql^2,  \Qlmone^2]		 - 2 \monelmone \Cov[\Ql^2, \Qlmone] \\
																			  &- 2 \monel \Cov[\Ql, \Qlmone^2] + 4 \monel \monelmone \Cov[\Ql, \Qlmone]\bigg) \\
																			  &+ \Nl (\Nlmone) (\mathbb{E}[\Ql \Qlmone] - \monel \monelmone)^2 \bigg] \tag{c4} \\
																			  &= \frac{\Nl^2 - 2\Nl + 1}{\Nl (\Nlmone)^2} \bigg( \Cov[\Ql^2,  \Qlmone^2]- 2 \monelmone \Cov[\Ql^2, \Qlmone] \\
																			  &- 2 \monel \Cov[\Ql, \Qlmone^2]+ 4 \monel \monelmone \Cov[\Ql, \Qlmone]\bigg) \\
																			  &+ \frac{1}{\Nl (\Nlmone)} (\mathbb{E}[\Ql \Qlmone] - \monel \monelmone)^2\\
																			  &= \frac{1}{\Nl} \bigg( \Cov[\Ql^2,  \Qlmone^2]- 2 \monelmone \Cov[\Ql^2, \Qlmone] \\
																			  &- 2 \monel \Cov[\Ql, \Qlmone^2]+ 4 \monel \monelmone \Cov[\Ql, \Qlmone]\bigg) \\
																			  &+ \frac{1}{\Nl (\Nlmone)} (\mathbb{E}[\Ql \Qlmone] - \monel \monelmone)^2
\end{align*}
Substitute covariance term and $\mu$ by expected value
\begin{align*}
\Cov[\cmtwoapproxl, \cmtwoapproxlmone] &= \frac{1}{\Nl} \bigg( \Cov[\Ql^2,  \Qlmone^2]- 2 \monelmone \Cov[\Ql^2, \Qlmone] \\
																			  &- 2 \monel \Cov[\Ql, \Qlmone^2]+ 4 \monel \monelmone \Cov[\Ql, \Qlmone]\bigg) \\
																			  &+ \frac{1}{\Nl (\Nlmone)} (\mathbb{E}[\Ql \Qlmone] - \monel \monelmone)^2 \\
																			  &= \frac{1}{\Nl} \bigg( \mathbb{E}[\Ql^2 \Qlmone^2] - \mathbb{E}[\Ql^2]\mathbb{E}[ \Qlmone^2] - 2 \mathbb{E}[\Qlmone] (\mathbb{E}[\Ql^2 \Qlmone] - \mathbb{E}[\Ql^2]\mathbb{E}[\Qlmone]) \\
																			  &- 2 \mathbb{E}[\Ql] (\mathbb{E}[\Ql \Qlmone^2] - \mathbb{E}[\Ql]\mathbb{E}[\Qlmone^2]) \\
																			  &+ 4 \mathbb{E}[\Ql] \mathbb{E}[\Qlmone] (\mathbb{E}[\Ql \Qlmone] - \mathbb{E}[\Ql]\mathbb{E}[\Qlmone])\bigg) \\
																			  &+ \frac{1}{\Nl (\Nlmone)} (\mathbb{E}[\Ql \Qlmone]^2 -2 \mathbb{E}[\Ql \Qlmone]\mathbb{E}[\Ql] \mathbb{E}[\Qlmone] + \mathbb{E}[\Ql]^2  \mathbb{E}[\Qlmone]^2) \\
																			  &= \frac{1}{\Nl} \bigg( \mathbb{E}[\Ql^2 \Qlmone^2] - \mathbb{E}[\Ql^2]\mathbb{E}[ \Qlmone^2] \\
																			  &- 2 \mathbb{E}[\Qlmone] \mathbb{E}[\Ql^2 \Qlmone] + 2 \mathbb{E}[\Qlmone]^2 \mathbb{E}[\Ql^2] \\
																			  &- 2 \mathbb{E}[\Ql] \mathbb{E}[\Ql \Qlmone^2] + 2 \mathbb{E}[\Ql]^2\mathbb{E}[\Qlmone^2] \\
																			  &+ 4 \mathbb{E}[\Ql] \mathbb{E}[\Qlmone] \mathbb{E}[\Ql \Qlmone] - 4 \mathbb{E}[\Ql]^2\mathbb{E}[\Qlmone]^2)\bigg) \\
																			  &+ \frac{1}{\Nl (\Nlmone)} \bigg(\mathbb{E}[\Ql \Qlmone]^2 - 2 \mathbb{E}[\Ql \Qlmone] \mathbb{E}[\Ql]  \mathbb{E}[\Qlmone] + (\mathbb{E}[\Ql]  \mathbb{E}[\Qlmone])^2 \bigg) \\
\end{align*}
\end{proof}
\end{Appendix}

We note that the product of expected values results in biased estimators, even if each expected value is independently approximated with unbiased estimators. Therefore, as explained in~\ref{APP:mean_products}, we derived unbiased estimators for the double, triple and quadruple products of expected values in Eq.~\eqref{eq:covcostterm}. These expressions are derived in the lemmas reported in \ref{APP:mean_products}, along with their full derivation.

\begin{Appendix}
 \label{APP:mean_products}
\begin{lemma}
Let $(\moneapproxl \moneapproxlmone)_{\text{biased}} = \frac{1}{\Nl} \Nlsum \Qli \frac{1}{\Nl} \Nlsum \Qlmonei$ be a biased estimator for the product of expected value estimators. Then, an unbiased estimator is given as
\begin{equation}
\label{eq:unbiasedpairmean}
\moneapproxl \moneapproxlmone = \frac{\Nl}{\Nl - 1} (\moneapproxl \moneapproxlmone)_{\text{biased}}  -  \frac{1}{\Nl - 1} \moneapproxl[\Ql \Qlmone]
\end{equation}
\end{lemma}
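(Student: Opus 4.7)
The plan is a direct expectation calculation. I first expand the biased estimator as the double sum
\begin{equation*}
(\moneapproxl \moneapproxlmone)_{\text{biased}} = \frac{1}{\Nl^2} \Nlsum \Nlsumj \Qli \Qlmonej,
\end{equation*}
and split this into the diagonal contribution ($j = i$) and the off-diagonal contribution ($j \neq i$). This is the only place where the subtlety of the setup enters, and it is the main obstacle to watch for: when $j = i$, the pair $(\Qli, \Qlmonei)$ is evaluated at the same stochastic sample $\theta_\ell^{(i)}$ (coupling on a single level, as emphasized in the paragraph right after Eq.~\eqref{eq:mlmc_mean}), so $\mathbb{E}[\Qli \Qlmonei] = \mathbb{E}[\Ql \Qlmone]$; when $j \neq i$, samples are independent and $\mathbb{E}[\Qli \Qlmonej] = \mathbb{E}[\Ql]\mathbb{E}[\Qlmone]$.

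Counting $\Nl$ diagonal terms and $\Nl(\Nl-1)$ off-diagonal terms gives
\begin{equation*}
\mathbb{E}\bigl[(\moneapproxl \moneapproxlmone)_{\text{biased}}\bigr] = \frac{1}{\Nl}\, \mathbb{E}[\Ql \Qlmone] + \frac{\Nl-1}{\Nl}\, \mathbb{E}[\Ql]\,\mathbb{E}[\Qlmone],
\end{equation*}
which exhibits the bias explicitly: the biased estimator mixes the cross-moment $\mathbb{E}[\Ql \Qlmone]$ into the product of means, with weight $1/\Nl$ that vanishes only asymptotically.

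To remove that contamination I use the unbiased single-level estimator $\moneapproxl[\Ql \Qlmone] = \frac{1}{\Nl} \Nlsum \Qli \Qlmonei$, whose expectation is exactly $\mathbb{E}[\Ql \Qlmone]$. Taking the linear combination prescribed by Eq.~\eqref{eq:unbiasedpairmean} yields
\begin{equation*}
\mathbb{E}\!\left[\frac{\Nl}{\Nl - 1}\,(\moneapproxl \moneapproxlmone)_{\text{biased}} - \frac{1}{\Nl - 1}\,\moneapproxl[\Ql \Qlmone]\right] = \frac{1}{\Nl-1}\mathbb{E}[\Ql\Qlmone] + \mathbb{E}[\Ql]\mathbb{E}[\Qlmone] - \frac{1}{\Nl-1}\mathbb{E}[\Ql\Qlmone],
\end{equation*}
where the two $\mathbb{E}[\Ql \Qlmone]$ contributions cancel and the remainder is $\mathbb{E}[\Ql]\,\mathbb{E}[\Qlmone]$, as desired. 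The coefficients $\Nl/(\Nl-1)$ and $1/(\Nl-1)$ are in fact forced by this cancellation, which is the quickest way to arrive at the stated formula without guessing.
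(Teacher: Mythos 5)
Your proof is correct and follows essentially the same route as the paper's: expand the biased product as a double sum, split it into the diagonal terms (coupled samples, giving $\mathbb{E}[\Ql \Qlmone]$) and off-diagonal terms (independent samples, giving $\mathbb{E}[\Ql]\mathbb{E}[\Qlmone]$), and verify that the stated linear combination with $\moneapproxl[\Ql \Qlmone]$ cancels the cross-moment contamination. Computing the expectation of the biased estimator first rather than of the full combination is only a cosmetic reorganization of the paper's argument.
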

\begin{proof}
\begin{equation}
\begin{split}
\mathbb{E}[ \moneapproxl \moneapproxlmone ] &= \mathbb{E}\bigg[\frac{\Nl}{\Nl - 1} (\moneapproxl \moneapproxlmone)_{\text{biased}}  -  \frac{1}{\Nl - 1} \moneapproxl[\Ql \Qlmone] \bigg] \\
&= \frac{\Nl}{\Nl - 1} \mathbb{E}\bigg[(\moneapproxl \moneapproxlmone)_{\text{biased}}\bigg]  -  \frac{1}{\Nl - 1} \mathbb{E}\bigg[\moneapproxl[\Ql \Qlmone] \bigg] \\
&= \frac{\Nl}{\Nl - 1} \mathbb{E}\bigg[\frac{1}{\Nl^2} \Nlsum \Nlsumj \Qli \Qlmonej \bigg]  -  \frac{1}{\Nl - 1} \mathbb{E}\bigg[\frac{1}{\Nl} \Nlsum \Qli \Qlmonei \bigg] \\
&= \frac{\Nl}{\Nl - 1} \frac{1}{\Nl^2} \Nlsum \Nlsumj \mathbb{E}\bigg[\Qli \Qlmonej \bigg]  -  \frac{1}{\Nl - 1} \frac{1}{\Nl} \Nlsum \mathbb{E}\bigg[\Qli \Qlmonei \bigg] \\
&= \frac{\Nl}{\Nl - 1} \frac{1}{\Nl^2} \Nlsum \mathbb{E}\bigg[\Qli \Qlmonei \bigg] + \frac{\Nl}{\Nl - 1} \frac{1}{\Nl^2} \Nlsum \Nlsumjneqi \mathbb{E}\bigg[\Qli\bigg] \mathbb{E}\bigg[\Qlmonej \bigg] \\ &-  \frac{1}{\Nl - 1} \frac{1}{\Nl} \Nlsum \mathbb{E}\bigg[\Qli \Qlmonei \bigg] \\
&= \frac{1}{\Nl - 1} \mathbb{E}\bigg[\Ql \Qlmone \bigg] + \mathbb{E}\bigg[\Ql\bigg] \mathbb{E}\bigg[\Qlmone \bigg] -  \frac{1}{\Nl - 1} \mathbb{E}\bigg[\Ql \Qlmone \bigg] \\
&= \mathbb{E}\bigg[\Ql\bigg] \mathbb{E}\bigg[\Qlmone \bigg] 
\end{split}
\end{equation}
\end{proof}

\begin{lemma}
Let $\moneapproxlone \moneapproxltwo \moneapproxlthree)_{\text{biased}} = \frac{1}{\Nl} \Nlsum \Qlonei \frac{1}{\Nl} \Nlsum \Qltwoi \frac{1}{\Nl} \Nlsum \Qlthreei$ be a biased estimator for the triple products of expected value estimators. Additionally, assume an unbiased product of mean estimators based on Eq.~\eqref{eq:unbiasedpairmean}. Then, an unbiased estimator is given as
\begin{equation}
\label{eq:unbiasedtriplemean}
\begin{split}
\moneapproxlone \moneapproxltwo \moneapproxlthree = &\frac{\Nl^2}{(\Nl - 1)(\Nl - 2)} (\moneapproxlone \moneapproxltwo \moneapproxlthree)_{\text{biased}} \\
																										   &- \frac{1}{\Nl - 2} \bigg(\moneapproxl[\Qlone \Qltwo] \moneapproxl[\Qlthree]
																										   + \moneapproxl[\Qlone \Qlthree] \moneapproxl[\Qltwo]
																										   +  \moneapproxl[\Qltwo \Qlthree] \moneapproxl[\Qlone] \bigg) \\
																										   &- \frac{1}{(\Nl - 1)(\Nl - 2)} \moneapproxl[\Qlone \Qltwo \Qlthree] 
\end{split}
\end{equation}
\end{lemma}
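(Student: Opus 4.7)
My approach is to verify unbiasedness directly by taking the expectation of the right-hand side and showing that the three families of terms collapse to $\mathbb{E}[\Qlone]\mathbb{E}[\Qltwo]\mathbb{E}[\Qlthree]$. The essential device, mirroring the previous lemma for pair products, is to expand the biased triple-product estimator as a triple sum over sample indices and partition that sum according to which of the three indices coincide, exploiting independence of samples at distinct indices within the same level.

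First, I would write
\[
(\moneapproxlone \moneapproxltwo \moneapproxlthree)_{\biasedtxt} = \frac{1}{\Nl^3} \Nlsum \Nlsumj \Nlsumk \Qlonei \Qltwoj \Qlthreek
\]
and decompose the triple sum over $(i,j,k)$ into three disjoint index regimes: (a) $i=j=k$, contributing $\Nl$ terms each with expectation $\mathbb{E}[\Qlone \Qltwo \Qlthree]$; (b) exactly two indices equal, comprising the three symmetric sub-cases $i=j\neq k$, $i=k\neq j$, and $j=k\neq i$, each contributing $\Nl(\Nl-1)$ terms with expectations $\mathbb{E}[\Qlone \Qltwo]\mathbb{E}[\Qlthree]$, $\mathbb{E}[\Qlone \Qlthree]\mathbb{E}[\Qltwo]$, and $\mathbb{E}[\Qltwo \Qlthree]\mathbb{E}[\Qlone]$ respectively; and (c) all three indices distinct, contributing $\Nl(\Nl-1)(\Nl-2)$ terms each with expectation $\mathbb{E}[\Qlone]\mathbb{E}[\Qltwo]\mathbb{E}[\Qlthree]$.

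Next, I would take expectations of the remaining correction terms. The single-sum term $\moneapproxl[\Qlone \Qltwo \Qlthree]$ is trivially unbiased for $\mathbb{E}[\Qlone \Qltwo \Qlthree]$, while each pair-times-mean term such as $\moneapproxl[\Qlone \Qltwo]\, \moneapproxl[\Qlthree]$ is unbiased for $\mathbb{E}[\Qlone \Qltwo]\mathbb{E}[\Qlthree]$ by the stated assumption that Eq.~\eqref{eq:unbiasedpairmean} is applied with the pair variable $\Qlone\Qltwo$ and the single variable $\Qlthree$ (and analogously for the two symmetric permutations). Collecting coefficients by expectation type, the coefficient of $\mathbb{E}[\Qlone]\mathbb{E}[\Qltwo]\mathbb{E}[\Qlthree]$, contributed only by regime (c), reduces to $\tfrac{\Nl^2}{(\Nl-1)(\Nl-2)} \cdot \tfrac{\Nl(\Nl-1)(\Nl-2)}{\Nl^3} = 1$; the coefficient of $\mathbb{E}[\Qlone \Qltwo \Qlthree]$ becomes $\tfrac{\Nl^2}{(\Nl-1)(\Nl-2)} \cdot \tfrac{\Nl}{\Nl^3} - \tfrac{1}{(\Nl-1)(\Nl-2)} = 0$; and, by symmetry, the coefficient of each of the three pair expectations becomes $\tfrac{\Nl^2}{(\Nl-1)(\Nl-2)} \cdot \tfrac{\Nl(\Nl-1)}{\Nl^3} - \tfrac{1}{\Nl-2} = 0$, which gives the claim.

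The main obstacle is combinatorial bookkeeping: one must enumerate the three symmetric two-equal regimes rather than conflating them into one, apply sample independence only across genuinely distinct indices, and verify that the three prefactors $\tfrac{\Nl^2}{(\Nl-1)(\Nl-2)}$, $-\tfrac{1}{\Nl-2}$, and $-\tfrac{1}{(\Nl-1)(\Nl-2)}$ are calibrated precisely so that every spurious term cancels. Once the index-regime decomposition is in place, the remaining algebra is routine.
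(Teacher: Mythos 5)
Your proposal is correct and follows essentially the same route as the paper's proof: take the expectation of the right-hand side, expand the biased triple product as a triple sum partitioned by index-coincidence patterns (all distinct, the three two-equal cases, all equal), treat the pair-times-mean correction terms as unbiased via Eq.~\eqref{eq:unbiasedpairmean}, and check that the coefficients cancel so only $\mathbb{E}[\Qlone]\mathbb{E}[\Qltwo]\mathbb{E}[\Qlthree]$ survives. The coefficient bookkeeping you give matches the paper's computation exactly.
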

\begin{proof}
\begin{equation}
\begin{split}
&\mathbb{E}\bigg[ \moneapproxlone \moneapproxltwo \moneapproxlthree \bigg] = \mathbb{E}\bigg[ \frac{\Nl^2}{(\Nl - 1)(\Nl - 2)} (\moneapproxlone \moneapproxltwo \moneapproxlthree)_{\text{biased}} - \frac{1}{(\Nl - 1)(\Nl - 2)} \moneapproxl[\Qlone \Qltwo \Qlthree] \\
																										   &- \frac{1}{\Nl - 2} \bigg(\moneapproxl[\Qlone \Qltwo] \moneapproxl[\Qlthree] 
																										   + \moneapproxl[\Qlone \Qlthree] \moneapproxl[\Qltwo] 
																										   +  \moneapproxl[\Qltwo \Qlthree] \moneapproxl[\Qlone] \bigg)\bigg]  \\
																										   &= \frac{\Nl^2}{(\Nl - 1)(\Nl - 2)} \mathbb{E}\bigg[ (\moneapproxlone \moneapproxltwo \moneapproxlthree)_{\text{biased}}\bigg] 
																										   - \frac{1}{\Nl - 2} \bigg(\mathbb{E}\bigg[\moneapproxl[\Qlone \Qltwo] \moneapproxl[\Qlthree]\bigg] \\
																										   &- \frac{1}{(\Nl - 1)(\Nl - 2)} \mathbb{E}\bigg[\moneapproxl[\Qlone \Qltwo \Qlthree] \bigg] 
																										   + \mathbb{E}\bigg[\moneapproxl[\Qlone \Qlthree] \moneapproxl[\Qltwo]\bigg] 
																										   +  \mathbb{E}\bigg[\moneapproxl[\Qltwo \Qlthree] \moneapproxl[\Qlone]\bigg] \bigg) \\
																										   &= \frac{\Nl^2}{(\Nl - 1)(\Nl - 2)} \frac{1}{\Nl^3} \Nlsum \Nlsumj \Nlsumk \mathbb{E}[ \Qlonei \Qltwoj \Qlthreek] - \frac{1}{(\Nl - 1)(\Nl - 2)} \mathbb{E}\bigg[ \Qlone \Qltwo \Qlthree \bigg] \\
																										   &- \frac{1}{\Nl - 2} \bigg( \mathbb{E}\bigg[ \Qlone \Qltwo \bigg] \mathbb{E}\bigg[ \Qlthree \bigg]
																										   + \mathbb{E}\bigg[ \Qlone \Qlthree \bigg] \mathbb{E}\bigg[\Qltwo \bigg]
																										   +  \mathbb{E}\bigg[ \Qltwo \Qlthree \bigg] \mathbb{E}\bigg[ \Qlone \bigg] \bigg) \\
																										   &= \frac{1}{\Nl (\Nl - 1)(\Nl - 2)} \bigg( \\ 
																										   													&\quad \Nl (\Nl -1) (\Nl - 2) \monelone \moneltwo \monelthree \text{ (case: $i \neq j \neq k$)} \\
																																							&+ \Nl (\Nl - 1) \mathbb{E}[\Qlone \Qltwo] \mathbb{E}[\Qlthree] \text{ (case: $i == j \neq k$)} \\
																																							&+ \Nl (\Nl - 1) \mathbb{E}[\Qlone \Qlthree] \mathbb{E}[\Qltwo] \text{ (case: $i == k \neq j$)} \\
																																							&+ \Nl (\Nl - 1) \mathbb{E}[\Qltwo \Qlthree] \mathbb{E}[\Qlone] \text{ (case: $j == k \neq i$)} \\
																																							&+ \Nl \mathbb{E}[\Qlone \Qltwo \Qlthree]\bigg)- \frac{1}{(\Nl - 1)(\Nl - 2)} \mathbb{E}\bigg[ \Qlone \Qltwo \Qlthree \bigg] \\
																										   &- \frac{1}{\Nl - 2} \bigg( \mathbb{E}\bigg[ \Qlone \Qltwo \bigg] \mathbb{E}\bigg[ \Qlthree \bigg] 
																										   + \mathbb{E}\bigg[ \Qlone \Qlthree \bigg] \mathbb{E}\bigg[\Qltwo \bigg]
																										   +  \mathbb{E}\bigg[ \Qltwo \Qlthree \bigg] \mathbb{E}\bigg[ \Qlone \bigg] \bigg) \\
																										   &= \monelone \moneltwo \monelthree
\end{split}
\end{equation} 
\end{proof}

\begin{lemma}
Let $(\moneapproxl^2 \moneapproxlmone^2)_{\text{biased}} = \frac{1}{\Nl} \Nlsum \Qlone \frac{1}{\Nl} \Nlsum \Qlone \frac{1}{\Nl} \Nlsum \Qlmonei  \frac{1}{\Nl} \Nlsum \Qlmonei $ be a biased estimator. Additionally, assume an unbiased estimator for double and triple product of mean estimators is given based on Eq.~\eqref{eq:unbiasedpairmean} and Eq.~\eqref{eq:unbiasedtriplemean}, respectively. Then, an unbiased estimator for $\moneapproxl^2 \moneapproxlmone^2$ is given as
\begin{equation}
\begin{split}
{\moneapproxl}^2 {\moneapproxlmone}^2 = &\frac{\Nl^3}{(\Nl -1) (\Nl - 2) (\Nl - 3)} (\moneapproxl^2 \moneapproxlmone^2)_{\text{biased}}  \\
 																			- \frac{1}{\Nl - 3} \bigg(&\moneapproxl[\Ql^2] \moneapproxl[\Qlmone]^2 +4\moneapproxl[\Ql \Qlmone] \moneapproxl[\Ql] \moneapproxl[\Qlmone] \\
 																			&+\moneapproxl[\Ql]^2 \moneapproxl[\Qlmone^2] \bigg) \\ 
 																			- \frac{1}{(\Nl -2)(\Nl -3)} \bigg(&\moneapproxl[\Ql^2] \moneapproxl[\Qlmone^2] +2\moneapproxl[\Ql \Qlmone]^2 \\	
																															   +2&\moneapproxl[\Ql^2 \Qlmone] \moneapproxl[\Qlmone]+2\moneapproxl[\Ql] \moneapproxl[\Ql \Qlmone^2] \bigg)  \\		
																			- \frac{1}{(\Nl -1)(\Nl -2)(\Nl -3)} &\moneapproxl[\Ql^2 \Qlmone^2].		
\end{split}
\end{equation}
\end{lemma}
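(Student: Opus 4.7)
The plan is to compute the expectation of the right-hand side directly and show that it equals $\monel^{2}\monelmone^{2}$, in direct analogy with the proofs of Eqs.~\eqref{eq:unbiasedpairmean} and~\eqref{eq:unbiasedtriplemean} but at the quadruple-product level. The key identity to unpack is
\begin{equation*}
\mathbb{E}\bigl[(\moneapproxl^{2}\moneapproxlmone^{2})_{\text{biased}}\bigr]
\;=\;\frac{1}{\Nl^{4}}\sum_{i=1}^{\Nl}\sum_{j=1}^{\Nl}\sum_{k=1}^{\Nl}\sum_{h=1}^{\Nl}\mathbb{E}\bigl[\Qli\,\Qlj\,\Qlmonek\,\Qlmoneh\bigr],
\end{equation*}
which must be stratified by the equality pattern of $(i,j,k,h)$. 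Since samples with distinct indices are independent, but $\Qli$ and $\Qlmonei$ share the draw $\theta_\ell^{(i)}$ and are therefore dependent, the factorization of each summand depends on how the indices match across the fine-level block $\{i,j\}$ and the coarse-level block $\{k,h\}$.

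First, I would enumerate the partitions. Exploiting the symmetries $i\leftrightarrow j$ and $k\leftrightarrow h$, the fifteen set-partitions of four elements collapse into a handful of representative classes: all four distinct; exactly one intra-side match (e.g., $i=j$ or $k=h$); a single cross match (e.g., $i=k$); two cross matches ($i=k$ and $j=h$); both intra-side matches ($i=j$ together with $k=h$); triples such as $i=j=k\neq h$ with their four analogues; and the full coincidence $i=j=k=h$. For each class I would record its multiplicity as a polynomial in $\Nl$ (leading $\Nl(\Nl-1)(\Nl-2)(\Nl-3)$, descending to $\Nl$ for the singleton class) together with its expectation factorization. Only the all-distinct class yields $\monel^{2}\monelmone^{2}$; the remaining classes produce mixed expectations such as $\mathbb{E}[\Ql^{2}]\monelmone^{2}$, $\mathbb{E}[\Ql\Qlmone]\monel\monelmone$, $\mathbb{E}[\Ql^{2}\Qlmone]\monelmone$, $\mathbb{E}[\Ql\Qlmone]^{2}$, and $\mathbb{E}[\Ql^{2}\Qlmone^{2}]$.

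Second, I would expand the expectations of the subtracted groups in the claimed formula. Each summand there is a product of sample-mean estimators whose unbiasing is already controlled by the preceding lemmas: the pair-product formula~\eqref{eq:unbiasedpairmean} handles $\mathbb{E}[\moneapproxl[A]\moneapproxl[B]]$ in terms of $\mathbb{E}[A]\mathbb{E}[B]$ and a covariance-type correction, while the triple-product formula~\eqref{eq:unbiasedtriplemean} handles expressions such as $\moneapproxl[\Ql^{2}]\moneapproxl[\Qlmone]^{2}$. Matching the algebraic residues from the subtracted groups against the mixed expectations produced in the first step should show that every non-target contribution cancels. The prefactors $\Nl^{3}/((\Nl-1)(\Nl-2)(\Nl-3))$, $1/(\Nl-3)$, $1/((\Nl-2)(\Nl-3))$, and $1/((\Nl-1)(\Nl-2)(\Nl-3))$ are precisely those needed to reciprocate the binomial multiplicities $\Nl(\Nl-1)(\Nl-2)(\Nl-3)$, $\Nl(\Nl-1)(\Nl-2)$, $\Nl(\Nl-1)$, and $\Nl$ arising from the stratification, leaving a residual of $\monel^{2}\monelmone^{2}$.

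The main obstacle is combinatorial bookkeeping rather than any conceptual subtlety. With fifteen partitions in play, each requiring correct identification of block membership (which indices belong to the fine-level factor versus the coarse-level factor), a small slip in multiplicities or in labelling mixed moments would break the cancellation. The cleanest organization is a table with one row per class, listing the tuple count as a polynomial in $\Nl$, the expectation factorization, and its contribution after multiplication by the leading prefactor. Once this table is built and verified, the proof collapses to confirming that the coefficients of each mixed-expectation type sum to zero, which is direct algebra in the spirit of~\ref{prf:unbiasedestimatorforfourthcentralmoment} and~\ref{prf:covarianceofvariance}.
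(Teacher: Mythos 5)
Your proposal follows essentially the same route as the paper's proof: expand $\mathbb{E}\bigl[(\moneapproxl^{2}\moneapproxlmone^{2})_{\text{biased}}\bigr]$ as a quadruple sum, stratify the fifteen index-coincidence patterns into the classes (all distinct, one pair, two pairs, triplets, quadruple) with multiplicities $\Nl(\Nl-1)(\Nl-2)(\Nl-3)$ down to $\Nl$, and cancel the mixed-moment contributions against the subtracted groups, whose expectations reduce to exact moment products by the pair- and triple-product lemmas. The combinatorial bookkeeping you outline is exactly what the paper carries out, so your plan is correct and matches the paper's argument.
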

\begin{proof}
\begin{equation}
\begin{split}
&\mathbb{E}\bigg[{\moneapproxl}^2 {\moneapproxlmone}^2 \bigg] = \mathbb{E}\bigg[\frac{\Nl^3}{(\Nl -1) (\Nl - 2) (\Nl - 3)} (\moneapproxl^2 \moneapproxlmone^2)_{\text{biased}}  \\
&- \frac{1}{\Nl - 3} \bigg(\moneapproxl[\Ql^2] \moneapproxl[\Qlmone]^2 +4\moneapproxl[\Ql \Qlmone] \moneapproxl[\Ql] \moneapproxl[\Qlmone] 
+\moneapproxl[\Ql]^2 \moneapproxl[\Qlmone^2] \bigg) \\ 
&- \frac{1}{(\Nl -2)(\Nl -3)} \bigg(\moneapproxl[\Ql^2] \moneapproxl[\Qlmone^2] +2\moneapproxl[\Ql \Qlmone]^2 \\	
&+2\moneapproxl[\Ql^2 \Qlmone] \moneapproxl[\Qlmone]+2\moneapproxl[\Ql] \moneapproxl[\Ql \Qlmone^2] \bigg) - \frac{1}{(\Nl -1)(\Nl -2)(\Nl -3)} \moneapproxl[\Ql^2 \Qlmone^2] \bigg] \\
&= \frac{\Nl^3}{(\Nl -1) (\Nl - 2) (\Nl - 3)} \frac{1}{\Nl^4} \Nlsum \Nlsumj \Nlsumk \Nlsumh \mathbb{E}[ \Qli \Qlj \Qlmonek \Qlmoneh ] \\
&- \frac{1}{\Nl - 3}\bigg(\mathbb{E}[\Ql^2] \mathbb{E}[\Qlmone]^2 +4\mathbb{E}[\Ql \Qlmone] \mathbb{E}[\Ql] \mathbb{E}[\Qlmone]  +\mathbb{E}[\Ql]^2 \mathbb{E}[\Qlmone^2] \bigg)\\ 
&- \frac{1}{(\Nl -2)(\Nl -3)} \bigg(\mathbb{E}[\Ql^2] \mathbb{E}[\Qlmone^2] +2\mathbb{E}[\Ql \Qlmone]^2	+ 2\mathbb{E}[\Ql^2 \Qlmone] \mathbb{E}[\Qlmone]+2\mathbb{E}[\Ql] \mathbb{E}[\Ql \Qlmone^2] \bigg) \\
&- \frac{1}{(\Nl -1)(\Nl -2)(\Nl -3)} \mathbb{E}[\Ql^2 \Qlmone^2] \\
&= \frac{1}{\Nl (\Nl -1) (\Nl - 2) (\Nl - 3)} \bigg(\Nl (\Nl -1) (\Nl - 2) (\Nl - 3) {\monel}^2 {\monelmone}^2 \\
&(\text{one pair: ij, ik, ih, jk, jh, kh}) + \Nl (\Nl -1) (\Nl - 2) \bigg(\mathbb{E}[\Ql^2] \mathbb{E}[\Qlmone]^2 +\mathbb{E}[\Ql \Qlmone] \mathbb{E}[\Ql] \mathbb{E}[\Qlmone] \\
&+\mathbb{E}[\Ql \Qlmone] \mathbb{E}[\Ql] \mathbb{E}[\Qlmone]  +\mathbb{E}[\Ql] \mathbb{E}[\Ql \Qlmone] \mathbb{E}[\Qlmone]  +\mathbb{E}[\Ql] \mathbb{E}[\Qlmone] \mathbb{E}[\Ql \Qlmone]  \\
&+\mathbb{E}[\Ql]^2 \mathbb{E}[\Qlmone^2] \bigg) \\
&(\text{two pairs: ih kh, ik jh, ih jk}) + \Nl (\Nl -1) \bigg(\mathbb{E}[\Ql^2] \mathbb{E}[\Qlmone^2] +\mathbb{E}[\Ql \Qlmone]^2  +\mathbb{E}[\Ql \Qlmone]^2 \bigg) \\		
&(\text{triplets: ijk, ijh, jhk, ihk}) + \Nl (\Nl -1) \bigg(\mathbb{E}[\Ql^2 \Qlmone] \mathbb{E}[\Qlmone] +\mathbb{E}[\Ql^2 \Qlmone] \mathbb{E}[\Qlmone] \\				
&+\mathbb{E}[\Ql] \mathbb{E}[\Ql \Qlmone^2] +\mathbb{E}[\Ql] \mathbb{E}[\Ql \Qlmone^2] \bigg) +(\text{quadruple}) \Nl \mathbb{E}[\Ql^2 \Qlmone^2] \bigg) \\	
&- \frac{1}{\Nl - 3}\bigg(\mathbb{E}[\Ql^2] \mathbb{E}[\Qlmone]^2 +4\mathbb{E}[\Ql \Qlmone] \mathbb{E}[\Ql] \mathbb{E}[\Qlmone]  +\mathbb{E}[\Ql]^2 \mathbb{E}[\Qlmone^2] \bigg)\\ 
&- \frac{1}{(\Nl -2)(\Nl -3)} \bigg(\mathbb{E}[\Ql^2] \mathbb{E}[\Qlmone^2] +2\mathbb{E}[\Ql \Qlmone]^2	+ 2\mathbb{E}[\Ql^2 \Qlmone] \mathbb{E}[\Qlmone]+2\mathbb{E}[\Ql] \mathbb{E}[\Ql \Qlmone^2] \bigg) \\
&- \frac{1}{(\Nl -1)(\Nl -2)(\Nl -3)} \mathbb{E}[\Ql^2 \Qlmone^2] \\
&= {\monel}^2 {\monelmone}^2.
\end{split}
\end{equation}
\end{proof}
\end{Appendix}

Having derived these unbiased estimator we derive an unbiased estimator for the covariance based on the linearity of the expected value.

\begin{lemma}
Let $\cmtwoapproxl$ and $\cmtwoapproxlmone$ be unbiased single level estimators for the respective level $\ell$ and $\ell - 1$ as described in~\eqref{eq:unbiasedcmtwo}. Additionally, let $\moneapprox$ be unbiased estimators for the respective expected value as described in~\eqref{eq:unbiasedmone}. An unbiased estimator for the covariance term in Lemma~\ref{lem:covariance} is given as
\begin{equation}
\label{eq:covvarvar}
\begin{split}
\Cov[\cmtwoapproxl, \cmtwoapproxlmone] &\approx \frac{1}{N_\ell} \moneapprox[\cmtwoapproxl \cmtwoapproxlmone] \\
																			  &+ \frac{1}{N_\ell (N_\ell - 1)} \bigg(\moneapproxl[\Ql \Qlmone] - 2 \moneapproxl[\Ql \Qlmone] \moneapproxl \moneapproxlmone - (\moneapproxl \moneapproxlmone)^2 \bigg)
\end{split}
\end{equation}
where 
\begin{equation}
\label{eq:covcosttermunbiased}
\begin{split}
\moneapprox[\cmtwoapproxl \cmtwoapproxlmone] &= \moneapproxl\left[\Ql^2 \Qlmone^2\right] - 2 \moneapproxl\left[\Ql^2 \Qlmone\right] \moneapproxl\left[\Qlmone\right] \\
																					     &+ 2 \moneapproxl\left[\Qlmone\right]^2 \moneapproxl\left[\Ql^2\right] - 2 \moneapproxl\left[\Ql\right] \moneapproxl\left[\Ql \Qlmone^2\right] \\
																					     &+ 4 \moneapproxl\left[\Qlmone\right] \moneapproxl\left[\Ql\right] \moneapproxl\left[\Ql \Qlmone\right] + 2 \moneapproxl\left[\Ql\right]^2 \moneapproxl\left[\Qlmone^2\right] \\
																					     &- 4 \moneapproxl\left[\Ql\right]^2 \moneapproxl\left[\Qlmone\right]^2 - \moneapproxl\left[\Ql^2\right] \moneapproxl\left[\Qlmone^2\right].
\end{split}
\end{equation}
\begin{proof}
 See \ref{prf:unbiasedestimatorforcovarianceofvariance} for the proof.
\end{proof}
\end{lemma}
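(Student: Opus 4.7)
The plan is to reduce the problem to the unbiased product-of-means estimators already established earlier in the paper, invoking linearity of expectation. Recall from Lemma~\ref{lem:covariance} that the exact covariance in Eq.~\eqref{eq:covariance}--\eqref{eq:covcostterm} is a linear combination of three categories of quantities: single expectations of monomials $\mathbb{E}[\Ql^a \Qlmone^b]$, products of two such expectations (e.g.\ $\mathbb{E}[\Ql^2]\mathbb{E}[\Qlmone^2]$ or $\mathbb{E}[\Ql \Qlmone]^2$), products of three such expectations (e.g.\ $\mathbb{E}[\Ql^2 \Qlmone]\mathbb{E}[\Qlmone]$), and one quadruple product $\mathbb{E}[\Ql]^2 \mathbb{E}[\Qlmone]^2$.

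First, I would substitute each single expectation $\mathbb{E}[\Ql^a \Qlmone^b]$ by its standard sample-mean estimator $\moneapproxl[\Ql^a \Qlmone^b]$, which is unbiased by Eq.~\eqref{eq:unbiasedmone} applied to the transformed sample. Second, for each product of two expectations I would replace the naive plug-in estimator by the unbiased pair-product estimator of Eq.~\eqref{eq:unbiasedpairmean}; analogously I would use Eq.~\eqref{eq:unbiasedtriplemean} for every triple product, and the quadruple-product lemma for the term $\mathbb{E}[\Ql]^2 \mathbb{E}[\Qlmone]^2$. By linearity of expectation, the weighted sum of these unbiased estimators is itself an unbiased estimator of the target quantity, and on rewriting in the form of Eq.~\eqref{eq:covvarvar}--\eqref{eq:covcosttermunbiased} this yields the claim.

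Concretely, the step in the appendix proof would take the expectation of the right-hand side of Eq.~\eqref{eq:covvarvar}, move the expectation inside using linearity, and replace every occurrence of $\moneapproxl[\cdot]$-blocks by the exact expectation they target (as guaranteed by the product lemmas). The resulting expression should then match, term by term, the closed-form covariance given in Eqs.~\eqref{eq:covariance}--\eqref{eq:covcostterm} of Lemma~\ref{lem:covariance}. No new combinatorial identity is required beyond what has already been proved.

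The main obstacle is purely bookkeeping: the exact covariance contains many polynomial terms with different combinatorial prefactors $\tfrac{1}{\Nl}$ and $\tfrac{1}{\Nl(\Nl-1)}$, and the correction pieces that appear in the unbiased product estimators (the $\tfrac{1}{\Nl-1}$-type subtractions in Eq.~\eqref{eq:unbiasedpairmean} and the more involved corrections in Eq.~\eqref{eq:unbiasedtriplemean}) must cancel against each other precisely so that only the compact expression in Eqs.~\eqref{eq:covvarvar}--\eqref{eq:covcosttermunbiased} survives. The non-trivial aspect is therefore to match each product pattern in Eq.~\eqref{eq:covcostterm} with the correct unbiased replacement and to verify that the resulting correction terms coalesce into the simple $\tfrac{1}{\Nl(\Nl-1)}$-prefactored bracket claimed in Eq.~\eqref{eq:covvarvar}, without residual $\mathcal{O}(1/\Nl^2)$ contributions.
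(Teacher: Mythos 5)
Your proposal is correct and follows essentially the same route as the paper: the paper's appendix proof simply takes the expectation of the right-hand side of Eq.~\eqref{eq:covvarvar}, applies linearity, and invokes the unbiasedness of the pair, triple and quadruple product-of-means estimators from \ref{APP:mean_products} to recover the exact expression of Lemma~\ref{lem:covariance} term by term. Your worry about cancellations of the $\tfrac{1}{\Nl-1}$-type correction terms is moot, since each product occurring in Eqs.~\eqref{eq:covvarvar}--\eqref{eq:covcosttermunbiased} is already understood as the corresponding unbiased product estimator, so linearity of expectation gives the result directly with no residual bookkeeping.
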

\begin{Appendix}
\section{Proof: Unbiased estimator for covariance of variance}
\label{prf:unbiasedestimatorforcovarianceofvariance}
\begin{proof}
We use the previously proven unbiased estimators for products of expected values to show that the estimator is unbiased:
\begin{align*}
\mathbb{E}\bigg[\approxsym{\Cov}[\cmtwoapproxl, \cmtwoapproxlmone]\bigg] &= \frac{1}{N_\ell} \mathbb{E}\bigg[ \moneapprox[\cmtwoapproxl \cmtwoapproxlmone] \bigg] \\
																			  &+ \frac{1}{N_\ell (N_\ell - 1)} \bigg( \mathbb{E}\bigg[\moneapproxl[\Ql \Qlmone]\bigg] - 2 \mathbb{E}\bigg[\moneapproxl[\Ql \Qlmone] \moneapproxl \moneapproxlmone\bigg] - \mathbb{E}\bigg[(\moneapproxl \moneapproxlmone)^2\bigg] \bigg) \\
																			  &= \frac{1}{N_\ell} \mathbb{E}[\cmtwoapproxl \cmtwoapproxlmone] \\
																			  &+ \frac{1}{N_\ell (N_\ell - 1)} \bigg( \mathbb{E}[\Ql \Qlmone] - 2 \mathbb{E}[\Ql \Qlmone] \monel \monelmone - (\monel \monelmone)^2 \bigg)
\end{align*}
\end{proof}
\end{Appendix}

Bringing all parts together and combing the covariance approximation in Eq.~\eqref{eq:covvarvar} with the unbiased variance of variance estimator in Eq.~\eqref{eq:unbiasedvarianceofvariance}, we obtain an unbiased estimator for Eq.~\eqref{eq:mlmcvarianceofvariance}.

\subsection{Multilevel sample allocation for the  standard deviation estimator}
\label{ssec:mlmcforstddev}
For the standard deviation, we leverage the previous results and use the following (biased) MLMC estimator
\begin{equation}
\sigmaapproxml := \sqrt{\cmtwoapproxml},
\end{equation}
based on the MLMC estimator defined in the previous section in Eq.~\eqref{eq:mlmcvarianceestimator}.

The resource allocation problem now reads 
\begin{equation}
\label{eq:mlmcstddev}
\begin{split}
\accentset{\ast}{N}_\ell^{\sigma}= \argmin_{\Nl^{\sigma}} C_T^{\sigma} := \mlsum \Cl \Nl^{\sigma}, \\
\text{s.t. } \mathbb{V}[\sigmaapproxml] = \epsilon_{\sigma}^2.
\end{split}
\end{equation}
where we \emph{target the standard deviation}. 

The complexity of this case stems from the lack of a closed-form solution for the estimator variance. We propose to approximate its variance by resorting again to the delta method,
which, by assuming a normal distribution for the underlying estimator, allows us to write
\begin{equation}
\label{eq:mlmcvarianceofstddev}
\mathbb{V}[\sigmaapproxml] \approx \dfrac{1}{4 \cmtwoapprox^{\text{ML}}} \mathbb{V}[\cmtwoapproxml],
\end{equation}
similarly to what we presented in Section~\ref{ssec:single_fidelity} for its single fidelity counterpart. Also in this case, the solution of the resource allocation problem in Eq.~\eqref{eq:mlmcstddev} requires a numerical optimization.

\subsection{Multilevel sample allocation for the scalarization estimator}
\label{ssec:mlmcforscalarization}
Arguably one of the most important reliability measures for OUU is the linear combination of mean and standard deviation of the QoI, which we denote as scalarization. This approach allows the formulation of the problem without resorting to a multi-objective formulation to deal with mean and standard deviation.  
We use, as often done in literature, a linear combination of the mean and the standard deviation, i.e., $R_{\mu+\alpha\sigma}^b(x) \approx \mathbb{E}[b(x, \theta)] + \alpha \mathbb{V}[b(x, \theta)]^{\frac{1}{2}} $
, where the weight $\alpha$ is introduced to control the variability of the solution. 
 
In this case, the definition of the MLMC estimator is straightforward, since we just need to sum the estimators for the mean and standard deviation defined in the previous sections: $\moneapproxml + \alpha \sigmaapproxml$. The resource allocation, with the variance of the estimator to be defined, is also consistent with the previous cases and reads 
\begin{equation}
\label{eq:mlmcscalarization}
\begin{split}
&\accentset{\ast}{N}_\ell^{\mathbb{E} + \alpha \sigma}= \argmin_{\Nl^{\mathbb{\mathbb{E} + \alpha \sigma}}} C_T^{\mathbb{\mathbb{E} + \alpha \sigma}} := \mlsum \Cl \Nl^{\mathbb{\mathbb{E} + \alpha \sigma}}, \\
&\text{s.t. } \mathbb{V}\left[ \moneapproxml + \alpha \sigmaapproxml\right] = \epsilon_{\mathbb{\mathbb{E} + \alpha \sigma}}^2.
\end{split}
\end{equation}
where we now \emph{target the scalarization}. 

The major challenge is to obtain a traceable expression for the estimator variance which, in a first step, can be expanded as
\begin{equation}
\label{eq:mlmcvarianceofscalarization}
\begin{split}
&\mathbb{V}\left[ \moneapproxml + \alpha \sigmaapproxml\right] = \mathbb{V}\left[\moneapproxml\right] + \alpha^2  \mathbb{V}\left[\sigmaapproxml\right] + 2 \alpha \Cov\left[\moneapproxml, \sigmaapproxml\right].
\end{split}
\end{equation}
We can reuse the previous results for the two variance terms $\mathbb{V}\left[\moneapproxml\right]$ from Eq.~\eqref{eq:mlmcvarianceofmean} and $\mathbb{V}\left[\sigmaapproxml\right]$ from Eq.~\eqref{eq:mlmcvarianceofstddev}. The 
covariance term 
\begin{equation}
\Cov\left[\moneapproxml, \sigmaapproxml\right] = \mlsum \Cov\left[\moneapproxl - \moneapproxlmone, \sigmaapproxl - \sigmaapproxlmone\right],
\end{equation}
however, requires additional derivations and approximations. Indeed, we can write the term further for each level as
\begin{equation}
\label{eq:mlmccovarianceterm}
\begin{split}
&\Cov\left[\moneapproxl - \moneapproxlmone, \sigmaapproxl - \sigmaapproxlmone\right] = \\ 
&\Cov\left[\moneapproxl, \sigmaapproxl \right]  - \Cov\left[\moneapproxl, \sigmaapproxlmone\right]  - \Cov\left[\moneapproxlmone, \sigmaapproxl \right]  + \Cov\left[\moneapproxlmone, \sigmaapproxlmone\right].
\end{split}
\end{equation}

The main difficulty in treating this term is the presence of the square root (in the standard deviation estimator); we propose three different approximations for this terms:
\begin{itemize}
 \item An upper bound based on correlation, named covariance \emph{Pearson} upper bound;
 \item An approximation based on the correlation between mean and variance, named covariance approximation with \emph{correlation lift};
 \item An approximation based on bootstrapping, named covariance approximation using \emph{bootstrap}.
\end{itemize}
We use the terms in italic as a shorthand to refer to the different approximation strategies later on in the result section.

\subsubsection{Covariance Pearson upper bound}
\label{sssec:covarianceupperbound}
The first approach is to use an upper bound for the covariance term by employing its relation with the Pearson correlation coefficient. 
It is given as 
\begin{equation}
\rho[\moneapproxml, \sigmaapproxml]  = \frac{\Cov\left[\moneapproxml,  \sigmaapproxml \right]}{\sqrt{\mathbb{V}[\moneapproxml]\mathbb{V}[\sigmaapproxml]}},
\end{equation}
where $-1 \leq \rho[\moneapproxml, \sigmaapproxml]  \leq 1$. We can use this lower and upper bound on $\rho[\moneapproxml, \sigmaapproxml]$ to get an upper bound for~\eqref{eq:mlmcvarianceofscalarization} as
 \begin{equation}
\label{eq:variancescalarizationbound}
 \mathbb{V}\left[\moneapproxml + \alpha \sigmaapproxml\right] \leq \mathbb{V}\left[\moneapproxml\right] + \alpha^2  \mathbb{V}\left[\sigmaapproxml\right] + 2 |\alpha| \sqrt{\mathbb{V}[\moneapproxml]\mathbb{V}[\sigmaapproxml]}.
\end{equation}
Using the Pearson correlation we, however, obtain a very conservative estimate since we assume $\rho[\moneapproxml, \sigmaapproxml] = 1$ given that the covariance term can be even negative. 

\subsubsection{Covariance approximation with correlation lift}
\label{sssec:covariancecorrlift}
In the second approach, we present an approximation for the covariance term instead of an upper bound. We use a relationship  for the covariance of the mean and variance estimator (see~\cite{Dodge1999, Oneill2014}):
\begin{lemma}
\label{lem:covmeanvariance}
The covariance of the unbiased sample estimators for mean $\moneapproxl = \frac{1}{\Nl} \Nlsum \Qli$ and variance $\cmtwoapproxl = \frac{1}{\Nl - 1} \Nlsum (\Qli - \moneapproxl)^2$ is given as
\begin{equation}
\label{eq:covmeanvariance}
\Cov[\moneapproxl, \cmtwoapproxl] = \frac{\cmthreel}{\Nl}.
\end{equation} 
where $\cmthreel = \mathbb{E}[(\Ql - \monel)^3]$ is the third central moment.
\end{lemma}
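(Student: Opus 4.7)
The plan is to prove the identity by first centering the random samples, which both eliminates the nuisance mean terms and makes the combinatorial bookkeeping transparent. Specifically, I would introduce the shifted variables $Y_i := \Qli - \monel$, so that $\mathbb{E}[Y_i] = 0$, $\mathbb{E}[Y_i^2] = \cmtwol$, and $\mathbb{E}[Y_i^3] = \cmthreel$. Because variance is translation-invariant, the estimator $\cmtwoapproxl$ is unchanged by this shift, while $\moneapproxl - \monel = \bar{Y} := \frac{1}{\Nl}\sum_i Y_i$. The covariance thus collapses to a single expectation:
\begin{equation*}
\Cov[\moneapproxl, \cmtwoapproxl] = \Cov[\bar{Y}, \cmtwoapproxl] = \mathbb{E}[\bar{Y}\, \cmtwoapproxl],
\end{equation*}
since $\mathbb{E}[\bar{Y}] = 0$.

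Next, I would use the standard algebraic identity $\sum_{i=1}^{\Nl}(Y_i - \bar{Y})^2 = \sum_{i=1}^{\Nl} Y_i^2 - \Nl \bar{Y}^2$ to rewrite $\cmtwoapproxl = \frac{1}{\Nl-1}\bigl(\sum_i Y_i^2 - \Nl \bar{Y}^2\bigr)$. This reduces the problem to evaluating two expectations, $\mathbb{E}[\bar{Y} \sum_i Y_i^2]$ and $\mathbb{E}[\bar{Y}^3]$. Both are handled by expanding the sums and invoking independence of the $Y_i$'s together with $\mathbb{E}[Y_i] = 0$: in each resulting product of three centered factors, only the terms in which all indices coincide survive, yielding $\cmthreel$ once per surviving term. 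Counting gives $\mathbb{E}[\bar{Y} \sum_i Y_i^2] = \cmthreel$ and $\mathbb{E}[\bar{Y}^3] = \cmthreel/\Nl^2$.

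Finally, substituting back produces
\begin{equation*}
\mathbb{E}[\bar{Y}\, \cmtwoapproxl] = \frac{1}{\Nl - 1}\left(\cmthreel - \Nl \cdot \frac{\cmthreel}{\Nl^2}\right) = \frac{\cmthreel}{\Nl - 1} \cdot \frac{\Nl - 1}{\Nl} = \frac{\cmthreel}{\Nl},
\end{equation*}
which is the claimed formula. The only real obstacle is the combinatorics in step two — carefully identifying which index coincidences produce non-vanishing contributions in the triple sums defining $\mathbb{E}[\bar{Y}^3]$ and $\mathbb{E}[\bar{Y}\sum_i Y_i^2]$ — but the zero-mean property reduces the bookkeeping to counting diagonal terms, so no heavy computation is required.
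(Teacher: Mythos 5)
Your proposal is correct and follows essentially the same route as the paper's proof: centering the samples, using $\Cov[\moneapproxl,\cmtwoapproxl]=\EE{(\moneapproxl-\monel)\,\cmtwoapproxl}$, expanding the variance estimator via $\sum_i (Y_i-\bar Y)^2=\sum_i Y_i^2-\Nl\bar Y^2$, and counting the surviving diagonal terms in the resulting triple sums. The intermediate identities you state ($\EE{\bar Y\sum_i Y_i^2}=\cmthreel$ and $\EE{\bar Y^3}=\cmthreel/\Nl^2$) are exactly the paper's auxiliary computations in rescaled form, and your final algebra matches.
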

\begin{proof}
See \ref{prf:samelevelcovmoneapproxlcmtwoapproxl} for the proof.
\end{proof}

\begin{Appendix}
\section{Proof: Same level $\Cov[\moneapproxl, \cmtwoapproxl]$}
\label{prf:samelevelcovmoneapproxlcmtwoapproxl}
\begin{proof}
To proof this relation we first need a few ingredients by following the proof given in \cite{Zhang2007}. Similar to the proof for the covariance term of Eq.~\eqref{eq:covariance} we use centered variables $\Zli = \Qli - \monel$ and $\Zlmonei = \Qlmonei - \monelmone$. Using that variable we know that
\begin{equation}
\label{eq:centeredvartonormalvar}
\begin{split}
\cmtwoapproxl[\Zl] &= \frac{1}{\Nl-1} \Nlsum\left(\Zli -  \frac{1}{\Nl} \Nlsumj \Zlj \right) ^2 \\
									&= \frac{1}{\Nl-1} \Nlsum\left(\Qli - \monel -  \frac{1}{\Nl} \Nlsumj \Qlj - \monel \right) ^2 \\
									&= \frac{1}{\Nl-1} \Nlsum\left(\Qli -  \frac{1}{\Nl} \Nlsumj \Qlj \right) ^2 \\
									&= \cmtwoapproxl[\Ql] \\
\end{split}
\end{equation}
We furthermore know that 
\begin{equation}
\label{eq:varformulations}
\begin{split}
\cmtwoapproxl[\Zl] &= \frac{1}{\Nl-1} \Nlsum\left(\Zli -  \frac{1}{\Nl} \Nlsumj \Zlj \right) ^2 \\
								   &= \frac{1}{\Nl-1} \Nlsum (\Zli)^2 - \frac{\Nl}{\Nl-1} \left(\frac{1}{\Nl}\Nlsum\Zli\right)^2
\end{split}
\end{equation}

Additionally, we will later on need the following two equalities for the product of centered random variables:
\begin{equation}
\label{eq:centeredsinglepairllproduct}
\begin{split}
\mathbb{E}[\Nlsum \Nlsumj \Zli (\Zlj)^2] &=\mathbb{E}[ \underbrace{\Nl (\Zli)^3}_{i=j} + \underbrace{\Nl (\Nl-1) \Zli (\Zlj)^2}_{i \neq j} ] \\
																	 &= \Nl \mathbb{E}[(\Zli)^3] + \Nl (\Nl-1) \underbrace{\mathbb{E}[ \Zli ]}_{=0} \mathbb{E}[ (\Zlj)^2 ] \\
																	 &=\Nl \cmthreel
\end{split}
\end{equation}
and
\begin{equation}
\label{eq:centeredcubedllproduct}
\begin{split}
\mathbb{E}[\Nlsum \Nlsumj \Nlsumk  \Zli \Zlj \Zlk] & = \mathbb{E}[ \underbrace{\Nl (\Zli)^3}_{i=j=k} + 3 \underbrace{\Nl (\Nl-1) (\Zli)^2 \Zlj}_{i \neq j = k \lor i = j \neq k \lor i \neq k = j  }  + \Nl (\Nl-1) (\Nl - 2) \Zli \Zlj \Zlk]  \\
&= \Nl \mathbb{E}[ (\Zli)^3] + 3 \Nl (\Nl-1)  \mathbb{E}[(\Zli)^2]  \underbrace{\mathbb{E}[\Zlj]}_{=0} + \Nl (\Nl-1) (\Nl - 2)  \underbrace{\mathbb{E}[\Zli]}_{=0}  \underbrace{\mathbb{E}[\Zlj]}_{=0} \underbrace{\mathbb{E}[\Zlk]}_{=0} \\
&= \Nl \cmthreel 
\end{split}
\end{equation}

Next, we use the following relationship
\begin{equation}
\label{eq:covmeanvarstart}
\begin{split}
\Cov[\moneapproxl, \cmtwoapproxl] &= \mathbb{E}[\moneapproxl \cmtwoapproxl] - \mathbb{E}[\moneapproxl] \mathbb{E}[\cmtwoapproxl] \\
																  &= \mathbb{E}[(\moneapproxl + \monel - \monel )  \cmtwoapproxl] - \monel \cmtwol \\
																  &= \mathbb{E}[(\moneapproxl - \monel) \cmtwoapproxl + \monel \cmtwoapproxl] - \monel \cmtwol \\
																  &= \mathbb{E}[(\moneapproxl - \monel) \cmtwoapproxl] + \monel \mathbb{E}[\cmtwoapproxl] - \monel \cmtwol \\
																  &= \mathbb{E}[(\moneapproxl - \monel) \cmtwoapproxl]
\end{split}
\end{equation}

Given Eq.~\eqref{eq:centeredvartonormalvar} and Eq.~\eqref{eq:varformulations} we can rewrite Eq.~\eqref{eq:covmeanvarstart} in centered form
\begin{equation}
\begin{split}
 \mathbb{E}\left[(\moneapproxl[\Ql] - \monel[\Ql]) \cmtwoapproxl[\Ql]\right]  
 				&= \mathbb{E}\left[\moneapproxl[\Zli] \cmtwoapproxl[\Zli] \right]\\
 			    &=  \mathbb{E}\left[ \left(\frac{1}{\Nl} \Nlsum \Zli \right) \left(\frac{1}{\Nl-1} \Nlsum (\Zli)^2 - \frac{\Nl}{\Nl-1} (\frac{1}{\Nl}\Nlsum\Zli)^2\right)\right]\\
 			    &=  \frac{1}{\Nl} \frac{1}{\Nl-1}   \mathbb{E}\left[ \Nlsum \Zli  \left(\Nlsum (\Zli)^2 -\frac{1}{\Nl} (\Nlsum\Zli)^2\right)\right]\\
 			    &=  \frac{1}{\Nl} \frac{1}{\Nl-1}   \mathbb{E}\left[ \Nlsum \Zli  \Nlsum (\Zli)^2 -\frac{1}{\Nl} \Nlsum \Zli  (\Nlsum\Zli)^2 \right]\\
 			    &=  \frac{1}{\Nl} \frac{1}{\Nl-1}   \mathbb{E}\left[ \Nlsum \Nlsumj \Zli (\Zlj)^2 -\frac{1}{\Nl} \Nlsum \Nlsumj \Nlsumk \Zli \Zlj \Zlk \right]\\
 			    &=  \frac{1}{\Nl} \frac{1}{\Nl-1}   \mathbb{E}\left[ \Nlsum \Nlsumj \Zli (\Zlj)^2 \right] -\frac{1}{\Nl}   \frac{1}{\Nl} \frac{1}{\Nl-1}   \mathbb{E}\left[\Nlsum \Nlsumj \Nlsumk \Zli \Zlj \Zlk \right].
\end{split}
\end{equation}

Next, we can use the relations Eq.~\eqref{eq:centeredsinglepairllproduct} and Eq.~\eqref{eq:centeredcubedllproduct} to finalize the proof
\begin{equation}
\begin{split}
\Cov[\moneapproxl, \cmtwoapproxl] &= \frac{1}{\Nl} \frac{1}{\Nl-1}   \mathbb{E}\left[ \Nlsum \Nlsumj \Zli (\Zlj)^2 \right] -\frac{1}{\Nl}   \frac{1}{\Nl} \frac{1}{\Nl-1}   \mathbb{E}\left[\Nlsum \Nlsumj \Nlsumk \Zli \Zlj \Zlk \right]  \\ 
&= \frac{1}{\Nl} \frac{1}{\Nl-1} \Nl \cmthreel - \frac{1}{\Nl}   \frac{1}{\Nl} \frac{1}{\Nl-1}  \Nl \cmthreel  \\
&=\frac{1}{\Nl-1}\cmthreel - \frac{1}{\Nl (\Nl-1)}  \cmthreel  \\
&=\frac{\cmthreel}{\Nl} \\
\end{split}
\end{equation}
\end{proof}
\end{Appendix}

In Eq.~\eqref{eq:mlmccovarianceterm} we see that we also have covariance terms with estimators of different levels. Specifically, we have term $\Cov[\moneapproxl, \cmtwoapproxlmone]$ and $\Cov[\moneapproxlmone, \cmtwoapproxl]$ for which the previous result does not hold. We derive similar relationships for those terms, i.e., when we have a difference of one level in the estimators but a dependence on the samples:
\begin{lemma}
\label{lem:covmeanvariancelmone}
The covariance of the unbiased sample estimators for mean $\moneapproxl = \frac{1}{\Nl} \Nlsum \Qli$ and variance $\cmtwoapproxlmone = \frac{1}{\Nl - 1} \Nlsum (\Qlmonei - \moneapproxlmone)^2$ is given as
\begin{equation}
\label{eq:covmeanvariancelmone}
\Cov[\moneapproxl, \cmtwoapproxlmone] = \frac{1}{\Nl}\left[\mathbb{E}[\Ql (\Qlmone)^2] - \mathbb{E}[\Ql]\mathbb{E}[(\Qlmone)^2]- 2 \mathbb{E}[\Qlmone] \mathbb{E}[\Ql \Qlmone] + 2 \mathbb{E}[\Ql] \mathbb{E}[\Qlmone]^2\right].
\end{equation} 
\end{lemma}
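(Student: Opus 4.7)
The plan is to mimic the structure of the proof of Lemma~\ref{lem:covmeanvariance} (same-level case), but tracking the additional dependence between $\Ql$ and $\Qlmone$ evaluations at a common sample $\theta_\ell^{(i)}$. The key simplification is that samples across different indices $i \neq j$ remain independent, so only ``diagonal'' terms in the multi-index sums survive.

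First I would reduce to a centered form. Define $\Zli = \Qli - \monel$ and $\Zlmonei = \Qlmonei - \monelmone$. Since the sample variance is shift-invariant, $\cmtwoapproxlmone[\Qlmone] = \cmtwoapproxlmone[\Zlmone]$, and because $\mathbb{E}[\moneapproxl - \monel] = 0$ we get
\begin{equation*}
\Cov[\moneapproxl, \cmtwoapproxlmone] \;=\; \mathbb{E}\!\left[(\moneapproxl - \monel)\,\cmtwoapproxlmone\right] \;=\; \mathbb{E}\!\left[\left(\tfrac{1}{\Nl}\Nlsum \Zli\right)\cmtwoapproxlmone[\Zlmone]\right].
\end{equation*}
Using the identity $\cmtwoapproxlmone[\Zlmone] = \frac{1}{\Nl-1}\Nlsum (\Zlmonei)^2 - \frac{1}{\Nl(\Nl-1)}\bigl(\Nlsum \Zlmonei\bigr)^2$ from Eq.~\eqref{eq:varformulations}, the covariance becomes a sum of two multi-index expectations, analogous to the manipulation in Eq.~\eqref{eq:centeredsinglepairllproduct}--\eqref{eq:centeredcubedllproduct}.

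Next I would evaluate the two resulting sums by case analysis on repeated indices, using independence across $i$ and the zero-mean property $\mathbb{E}[\Zli] = \mathbb{E}[\Zlmonei] = 0$. Concretely, for $\mathbb{E}\bigl[\Nlsum\Nlsumj \Zli (\Zlmonej)^2\bigr]$, only the diagonal $i=j$ survives, contributing $\Nl\,\mathbb{E}[\Zl \Zlmone^2]$. For the triple sum $\mathbb{E}\bigl[\Nlsum\Nlsumj\Nlsumk \Zli \Zlmonej \Zlmonek\bigr]$, the cases $i\neq j=k$, $j\neq i$ with one index matched, and all-distinct each collapse via the zero means, leaving only $i=j=k$, which contributes $\Nl\,\mathbb{E}[\Zl \Zlmone^2]$. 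Combining with the prefactors gives
\begin{equation*}
\Cov[\moneapproxl, \cmtwoapproxlmone] \;=\; \frac{1}{\Nl(\Nl-1)}\left(\Nl - \tfrac{\Nl}{\Nl}\right)\mathbb{E}[\Zl \Zlmone^2] \;=\; \frac{1}{\Nl}\,\mathbb{E}\!\left[(\Ql-\monel)(\Qlmone-\monelmone)^2\right].
\end{equation*}

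Finally, expanding $(\Ql - \monel)(\Qlmone - \monelmone)^2$ and taking expectations (using $\mathbb{E}[\Ql]=\monel$, $\mathbb{E}[\Qlmone]=\monelmone$) collects the terms into the claimed four-term form $\mathbb{E}[\Ql\Qlmone^2] - \mathbb{E}[\Ql]\mathbb{E}[\Qlmone^2] - 2\mathbb{E}[\Qlmone]\mathbb{E}[\Ql\Qlmone] + 2\mathbb{E}[\Ql]\mathbb{E}[\Qlmone]^2$. The main obstacle is careful bookkeeping in the triple-sum case analysis: the pattern matching the same-level proof works, but one must verify that the mixed levels do not invalidate the cancellations, which they do not, because independence across the sample index $i$ is preserved and only the zero-mean property of either $\Zli$ or $\Zlmonej$ is needed to kill each non-diagonal term.
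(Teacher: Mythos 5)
Your proposal is correct and follows essentially the same route as the paper's proof: reduce to centered variables via $\Cov[\moneapproxl, \cmtwoapproxlmone] = \mathbb{E}[(\moneapproxl - \monel)\,\cmtwoapproxlmone]$, expand the sample variance into the double and triple index sums, keep only the diagonal terms using independence across samples and the zero-mean property, and combine the prefactors to obtain $\frac{1}{\Nl}\mathbb{E}[(\Ql-\monel)(\Qlmone-\monelmone)^2]$ before expanding into the four-term expression. The only cosmetic difference is that you expand the centered expectation at the end, whereas the paper does so within the sum evaluation.
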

\begin{proof}
See \ref{prf:samelevelcovmoneapproxlcmtwoapproxlmone} for the proof.
\end{proof}

\begin{Appendix}
\section{Proof: Lower level variance $\Cov[\moneapproxl, \cmtwoapproxlmone]$}
\label{prf:samelevelcovmoneapproxlcmtwoapproxlmone}
\begin{proof}
Similarly to the first proof we use the relation
\begin{equation}
\label{eq:covmeanvarstartlmone}
\begin{split}
\Cov[\moneapproxl, \cmtwoapproxlmone] &= \mathbb{E}[\moneapproxl \cmtwoapproxlmone] - \mathbb{E}[\moneapproxl] \mathbb{E}[\cmtwoapproxlmone] \\
																  &= \mathbb{E}[(\moneapproxl + \monel - \monel )  \cmtwoapproxlmone] - \monel \cmtwolmone \\
																  &= \mathbb{E}[(\moneapproxl - \monel) \cmtwoapproxlmone + \monel \cmtwoapproxlmone] - \monel \cmtwolmone \\
																  &= \mathbb{E}[(\moneapproxl - \monel) \cmtwoapproxlmone] + \monel \mathbb{E}[\cmtwoapproxlmone] - \monel \cmtwolmone \\
																  &= \mathbb{E}[(\moneapproxl - \monel) \cmtwoapproxlmone]
\end{split}
\end{equation}

The relation for product of centered random variables gets a bit more complex:
\begin{equation}
\label{eq:centeredsinglepairllmoneproduct}
\begin{split}
\mathbb{E}&[\Nlsum \Nlsumj \Zli (\Zlmonej)^2] =\mathbb{E}[ \underbrace{\Nl \Zli (\Zlmonei)^2}_{i=j} + \underbrace{\Nl (\Nl-1) \Zli (\Zlmonej)^2}_{i \neq j} ] \\
																	 &= \Nl \mathbb{E}[ \Zli (\Zlmonei)^2] + \Nl (\Nl-1) \underbrace{\mathbb{E}[ \Zli ]}_{=0} \mathbb{E}[ (\Zlmonej)^2 ] \\
																	 &=\Nl \mathbb{E}[ \Zli (\Zlmonei)^2] \\
																	 &= \Nl ( \Cov[\Zli, (\Zlmonei)^2] + \underbrace{\mathbb{E}[\Zli]}_{=0} \mathbb{E}[(\Zlmonei)^2])\\
																	 &= \Nl ( \Cov[\Qli - \monel, (\Qlmonei - \monelmone)^2]) \\
																	 &= \Nl ( \Cov[\Qli, (\Qlmonei)^2 - 2 \Qlmonei \monelmone + (\monelmone)^2]) \\
																	 &= \Nl ( \Cov[\Qli, (\Qlmonei)^2] - 2\monelmone \Cov[\Qli, \Qlmonei] + \underbrace{\Cov[\Qli, (\monelmone)^2])}_{=0} \\
																	 &= \Nl \left[\left( \mathbb{E}[\Qli (\Qlmonei)^2] - \mathbb{E}[\Qli]\mathbb{E}[(\Qlmonei)^2]\right)- 2\monelmone \left(\mathbb{E}[\Qli  \Qlmonei] - \mathbb{E}[\Qli] \mathbb{E}[\Qlmonei]\right)\right]\\
																	 &= \Nl \left[\left( \mathbb{E}[\Qli (\Qlmonei)^2] - \mathbb{E}[\Qli]\mathbb{E}[(\Qlmonei)^2]\right)- 2\mathbb{E}[\Qlmonei] \left(\mathbb{E}[\Qli \Qlmonei] - \mathbb{E}[\Qli] \mathbb{E}[\Qlmonei]\right)\right]\\
																	 &= \Nl \left[\mathbb{E}[\Qli (\Qlmonei)^2] - \mathbb{E}[\Qli]\mathbb{E}[(\Qlmonei)^2]+ 2 \mathbb{E}[\Qlmonei] \mathbb{E}[\Qli \Qlmonei] - 2 \mathbb{E}[\Qli] \mathbb{E}[\Qlmonei]^2\right]\\
\end{split}
\end{equation}
and
\begin{equation}
\label{eq:centeredcubedllmoneproduct}
\begin{split}
\mathbb{E}&[\Nlsum \Nlsumj \Nlsumk  \Zli \Zlmonej \Zlmonek] = \mathbb{E}[ \underbrace{\Nl \Zli (\Zlmonei)^2}_{i=j=k} \\
&+ \underbrace{\Nl (\Nl-1) \Zli (\Zlmonej)^2}_{i \neq j = k}  
+ \underbrace{\Nl (\Nl-1) \Zli \Zlmonei \Zlmonek}_{i = j \neq k}  
+ \underbrace{\Nl (\Nl-1) \Zli (\Zlmonej)^2}_{i \neq k = j  }  \\
&+ \Nl (\Nl-1) (\Nl - 2) \Zli \Zlmonej \Zlmonek]  \\
&= \mathbb{E}[ \Nl \Zli (\Zlmonei)^2] \\
&+ \Nl (\Nl-1) \underbrace{\mathbb{E}[ \Zli]}_{=0} \mathbb{E}[ (\Zlmonej)^2]  
+\mathbb{E}[ \Zli \Zlmonei]  \underbrace{\mathbb{E}[\Zlmonek]}_{=0}
+ \underbrace{\mathbb{E}[ \Zli]}_{=0} \mathbb{E}[ (\Zlmonej)^2]\\
&+ \Nl (\Nl-1) (\Nl - 2) \underbrace{\mathbb{E}[ \Zli ]}_{=0} \underbrace{\mathbb{E}[ \Zlmonej ]}_{=0} \underbrace{\mathbb{E}[ \Zlmonek]}_{=0}\\
&= \mathbb{E}[ \Nl \Zli (\Zlmonei)^2] \\
& \overbrace{=}^{\eqref{eq:centeredsinglepairllmoneproduct}}  \Nl \left[\mathbb{E}[\Qli (\Qlmonei)^2] - \mathbb{E}[\Qli]\mathbb{E}[(\Qlmonei)^2]- 2 \mathbb{E}[\Qlmonei] \mathbb{E}[\Qli \Qlmonei] + 2 \mathbb{E}[\Qli] \mathbb{E}[\Qlmonei]^2\right]\\
\end{split}
\end{equation}

\begin{equation}
\begin{split}
 \mathbb{E}&\left[(\moneapproxl[\Ql] - \monel[\Ql]) \cmtwoapproxlmone[\Ql]\right]  = \mathbb{E}\left[\moneapproxl[\Zli] \cmtwoapproxlmone[\Zli] \right]\\
 			    &=  \mathbb{E}\left[ \left(\frac{1}{\Nl} \Nlsum \Zli \right) \left(\frac{1}{\Nl-1} \Nlsum (\Zlmonei)^2 - \frac{\Nl}{\Nl-1} (\frac{1}{\Nl}\Nlsum\Zlmonei)^2\right)\right]\\
 			    &=  \frac{1}{\Nl} \frac{1}{\Nl-1}   \mathbb{E}\left[ \Nlsum \Zli  \left(\Nlsum (\Zlmonei)^2 -\frac{1}{\Nl} (\Nlsum\Zlmonei)^2\right)\right]\\
 			    &=  \frac{1}{\Nl} \frac{1}{\Nl-1}   \mathbb{E}\left[ \Nlsum \Zli  \Nlsum (\Zlmonei)^2 -\frac{1}{\Nl} \Nlsum \Zlmonei  (\Nlsum\Zlmonei)^2 \right]\\
 			    &=  \frac{1}{\Nl} \frac{1}{\Nl-1}   \mathbb{E}\left[ \Nlsum \Nlsumj \Zli (\Zlmonej)^2 -\frac{1}{\Nl} \Nlsum \Nlsumj \Nlsumk \Zli \Zlmonej \Zlmonek \right]\\
 			    &=  \frac{1}{\Nl} \frac{1}{\Nl-1}   \mathbb{E}\left[ \Nlsum \Nlsumj \Zli (\Zlmonej)^2 \right] -\frac{1}{\Nl}   \frac{1}{\Nl} \frac{1}{\Nl-1}   \mathbb{E}\left[\Nlsum \Nlsumj \Nlsumk \Zli \Zlmonej \Zlmonek \right].
\end{split}
\end{equation}

Again, we use the relation for the product of centered variables from Eq.~\eqref{eq:centeredsinglepairllmoneproduct} and Eq.~\eqref{eq:centeredcubedllmoneproduct}
\begin{equation}
\begin{split}
\Cov[\moneapproxl, \cmtwoapproxlmone]  &=  \frac{1}{\Nl} \frac{1}{\Nl-1}   \mathbb{E}\left[ \Nlsum \Nlsumj \Zli (\Zlmonej)^2 \right] -\frac{1}{\Nl}   \frac{1}{\Nl} \frac{1}{\Nl-1}   \mathbb{E}\left[\Nlsum \Nlsumj \Nlsumk \Zli \Zlmonej \Zlmonek \right] \\
&= \frac{1}{\Nl} \frac{1}{\Nl-1}   \Nl \left[\mathbb{E}[\Qli (\Qlmonei)^2] - \mathbb{E}[\Qli]\mathbb{E}[(\Qlmonei)^2]- 2 \mathbb{E}[\Qlmonei] \mathbb{E}[\Qli \Qlmonei] - 2 \mathbb{E}[\Qli] \mathbb{E}[\Qlmonei]^2\right] \\
& -\frac{1}{\Nl}   \frac{1}{\Nl} \frac{1}{\Nl-1}   \Nl \left[\mathbb{E}[\Qli (\Qlmonei)^2] - \mathbb{E}[\Qli]\mathbb{E}[(\Qlmonei)^2]- 2 \mathbb{E}[\Qlmonei] \mathbb{E}[\Qli \Qlmonei] - 2 \mathbb{E}[\Qli] \mathbb{E}[\Qlmonei]^2\right]\\
&=\frac{1}{\Nl}\left[\mathbb{E}[\Qli (\Qlmonei)^2] - \mathbb{E}[\Qli]\mathbb{E}[(\Qlmonei)^2]- 2 \mathbb{E}[\Qlmonei] \mathbb{E}[\Qli \Qlmonei] + 2 \mathbb{E}[\Qli] \mathbb{E}[\Qlmonei]^2\right]
\end{split}
\end{equation}
\end{proof}
\end{Appendix}

\begin{lemma}
\label{lem:covmeanlmonevariance}
The covariance of the unbiased sample estimators for mean $\moneapproxlmone = \frac{1}{\Nl} \Nlsum \Qlmonei$ and variance $\cmtwoapproxl = \frac{1}{\Nl - 1} \Nlsum (\Qli - \moneapproxl)^2$ is given as
\begin{equation}
\label{eq:covmeanlmonevariance}
\Cov[\moneapproxlmone, \cmtwoapproxl] = \frac{1}{\Nl}\left[\mathbb{E}[\Qlmone (\Ql)^2] - \mathbb{E}[\Qlmone]\mathbb{E}[(\Ql)^2]- 2 \mathbb{E}[\Ql] \mathbb{E}[\Qlmone \Ql] + 2 \mathbb{E}[\Qlmone] \mathbb{E}[\Ql]^2\right].
\end{equation} 
\end{lemma}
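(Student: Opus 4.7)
The proof proceeds in complete parallel with the proof of Lemma~\ref{lem:covmeanvariancelmone}, simply interchanging the roles of $\Ql$ and $\Qlmone$ throughout. The plan is to start from the identity
$\Cov[\moneapproxlmone, \cmtwoapproxl] = \mathbb{E}[(\moneapproxlmone - \monelmone)\,\cmtwoapproxl]$, which follows from adding and subtracting $\monelmone \cmtwoapproxl$ inside the expectation and using that $\cmtwoapproxl$ is an unbiased estimator for $\cmtwol$. This reduces the problem to a single expectation of a product, in which both factors involve the same sample indices $i=1,\dots,\Nl$ (recall that the level-$(\ell-1)$ samples used in $\moneapproxlmone$ are tied to those of level $\ell$, as in the shorthand $\Qlmonei = Q_{\ell-1}(x,\theta_\ell^{(i)})$).

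Next, I would pass to the centered variables $\Zli = \Qli - \monel$ and $\Zlmonei = \Qlmonei - \monelmone$, and use, as in~\eqref{eq:centeredvartonormalvar} and~\eqref{eq:varformulations}, that $\cmtwoapproxl[\Zl] = \cmtwoapproxl[\Ql]$ and $\cmtwoapproxl[\Zl] = \frac{1}{\Nl-1} \sum_{i} \Zli^{\,2} - \frac{1}{\Nl(\Nl-1)}\bigl(\sum_{i}\Zli\bigr)^2$. Substituting these gives
\begin{equation*}
\Cov[\moneapproxlmone, \cmtwoapproxl] = \frac{1}{\Nl(\Nl-1)}\,\mathbb{E}\!\left[\Nlsum\Nlsumj \Zlmonei\,\Zlj^{\,2}\right] - \frac{1}{\Nl^{2}(\Nl-1)}\,\mathbb{E}\!\left[\Nlsum\Nlsumj\Nlsumk \Zlmonei\,\Zlj\,\Zlk\right],
\end{equation*}
which is the mirror image of the key identity in the previous proof, with $\Zl$ and $\Zlmone$ interchanged.

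The two expectations then reduce exactly as in~\eqref{eq:centeredsinglepairllmoneproduct} and~\eqref{eq:centeredcubedllmoneproduct}: in the double sum only the diagonal $i=j$ survives (since $\mathbb{E}[\Zlmonei]=0$), and in the triple sum only the fully diagonal $i=j=k$ term survives, both equal to $\Nl \,\mathbb{E}[\Zlmonei\,\Zli^{\,2}]$. Expanding $\mathbb{E}[\Zlmonei\,\Zli^{\,2}]$ back in terms of $\Ql$ and $\Qlmone$ and collecting yields
\begin{equation*}
\mathbb{E}[\Zlmone\,\Zl^{\,2}] = \mathbb{E}[\Qlmone\,\Ql^{2}] - \mathbb{E}[\Qlmone]\,\mathbb{E}[\Ql^{2}] - 2\,\mathbb{E}[\Ql]\,\mathbb{E}[\Qlmone\,\Ql] + 2\,\mathbb{E}[\Qlmone]\,\mathbb{E}[\Ql]^{2},
\end{equation*}
and plugging this back produces the claimed formula, since the prefactor simplifies to $\frac{1}{\Nl-1} - \frac{1}{\Nl(\Nl-1)} = \frac{1}{\Nl}$.

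There is no serious obstacle here: the bookkeeping for the centered-variable expansion and for the case split of the double and triple index sums is identical to Lemma~\ref{lem:covmeanvariancelmone}. The only point requiring a bit of care is tracking which factor carries the square (now $\Zl$ rather than $\Zlmone$), which is why the final expression has $\mathbb{E}[\Qlmone \Ql^{2}]$ and $\mathbb{E}[\Ql^{2}]$ appearing where the previous lemma had $\mathbb{E}[\Ql \Qlmone^{2}]$ and $\mathbb{E}[\Qlmone^{2}]$. Once that substitution is made consistently throughout the mirrored argument, the statement follows.
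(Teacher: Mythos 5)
Your proposal is correct and follows exactly the route the paper takes: the paper's own proof simply says the argument is that of the previous lemma with $\ell$ and $\ell-1$ interchanged, which is precisely the mirrored centered-variable computation you carry out, including the reduction of the double and triple sums to $\Nl\,\mathbb{E}[\Zlmonei\,(\Zli)^2]$ and the prefactor simplification to $\frac{1}{\Nl}$.
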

\begin{proof}
The proof is the same as for \ref{prf:samelevelcovmoneapproxlcmtwoapproxlmone} by switching $\ell$ and $\ell - 1$.
\end{proof}

Finally, to estimate the covariance term, we make the assumption that the Pearson correlation coefficient of $\moneapprox$ and $\cmtwoapprox$ behaves similarly to the Pearson correlation of $\moneapprox$ and $\sigmaapprox$. Then, we can use the following relation
\begin{equation}
\label{eq:corrlift}
\begin{split}
\rho[\moneapproxli, \sigmaapproxlj] &\approx \rho[\moneapproxli, \cmtwoapproxlj] \\
\Leftrightarrow \frac{\Cov\left[\moneapproxli, \sigmaapproxlj\right]}{\sqrt{\mathbb{V}[\moneapproxli]\mathbb{V}[\sigmaapproxlj]}} &\approx \frac{\Cov\left[\moneapproxli, \cmtwoapproxlj\right]}{\sqrt{\mathbb{V}[\moneapproxli]\mathbb{V}[\cmtwoapproxlj]}} \\
\Leftrightarrow \Cov\left[\moneapproxli, \sigmaapproxlj\right] &\approx \Cov\left[\moneapproxli, \cmtwoapproxlj\right]\sqrt{\frac{\mathbb{V}[\cmtwoapproxlj]}{\mathbb{V}[\sigmaapproxlj]}}, (\ell_i - \ell_j) \in \{-1, 0, 1\},
\end{split}
\end{equation} 
to give us estimator for the different covariance terms in Eq.~\eqref{eq:mlmccovarianceterm}.

\subsubsection{Covariance approximation using bootstrapping}
\label{sssec:covariancebootstrap}
Third, we use bootstrapping as a direct approach for approximating the covariance. The idea of bootstrapping is quite simple. We draw with replacement from the already available set of samples to repeatably compute the estimators. By doing so, we can compute estimates, e.g., for the standard error or bias of an estimator. In our case, we use it to estimate the covariance of the two terms, as, e.g., shown in \cite{Efron1994, Efron2016}. 

Given the set of samples $\{Q_i\}_{i=1}^{\Nl}$ for each level $\ell = 1, ..., L$, we draw with replacement $B$ new \textit{bootstrapped} sets $\{Q_i^b\}_{i=1}^{\Nl}, b = 1, ..., B$. From those we can compute $B$ estimators for the bootstrap mean $\moneapproxbootl = \frac{1}{\Nl} \sum_{i=1}^{\Nl} Q_i^b$ and standard deviation $\sigmaapproxbootl  = \sqrt{\frac{1}{\Nl-1} \sum_{i=1}^{\Nl} (Q_i^b - \moneapproxbootl)^2}$. Finally, we can estimate the covariance as:
\begin{equation}
\label{eq:bootstrap}
\begin{split}
&\Cov[\moneapproxl, \sigmaapproxl] \approx \frac{1}{B-1} \sum_{b=1}^B \left(\moneapproxbootl - \frac{1}{B} \sum_{i=1}^B \moneapproxbootl \right) \left(\sigmaapproxbootl - \frac{1}{B} \sum_{i=1}^B \sigmaapproxbootl \right)
\end{split}
\end{equation}

Results show that this choice and approximation of the covariance term in \eqref{eq:mlmcvarianceofscalarization} is crucial to the performance of the overall method, which we will discuss in the result section.

\subsection{Analytic approximation}
\label{ssec:analyticapproximation}
While we can solve the resource allocation problem for the mean estimator given in Eq.~\eqref{eq:mlmcmean} analytically following the approach as described in Section~\ref{ssec:mlmcformean}, we cannot rely on a closed form solution for the higher order terms presented in our work. Hence, we rely on numerical optimization to solve the optimization problem approximately. We combine this with an approach introduced in~\cite{Krumscheid2020} where the authors introduce an analytical approximation of the resource allocation problem for higher order central moments. 

In their work they state the main assumption that the variance of any higher order sampling estimators, $\mathbb{V}[\approxsym{\mu}_i], i \geq 2,$ decreases in $\mathcal{O}(\frac{1}{N})$ with the number of samples $N$. Following that assumption, one can introduce the variance estimator $ \mathbb{V}[\approxsym{\mu}_i] = \frac{\mathcal{V}[\approxsym{\mu}_i]}{N}$ where the higher order terms of $N$ are now included into $\mathcal{V}[\approxsym{\mu}_i]$, while the only explicit dependence on $N$ is in the term $\frac{1}{N}$; $\mathbb{V}[\approxsym{\mu}_i]$ has the same structure as $\mathbb{V}[\approxsym{\mu}_1] = \frac{\mathbb{V}[Q]}{N}$. 

Extending this approximation to multiple levels, the general variance of a multilevel central moment estimator of higher order is given as 
\begin{equation}
\mathbb{V}[\approxsym{\mu}_{i, \textml}] = \mlsum \frac{\mathcal{V}[\approxsym{\mu}_{i, \ell}]}{\Nl}, i \geq 2,
\end{equation} 
where $\mathcal{V}[\approxsym{\mu}_{i, \ell}] = \mathbb{V}[\approxsym{\mu}_{i, \ell}] \Nl$.
We can solve the problem analytically as for the multilevel mean following~\cite{Giles2008} as
\begin{equation}
\lambda = \epsilon_{\mathbb{X}}^{-2} \mlsum \sqrt{\mathcal{V}[\approxsym{\mu}_{i, \ell}] \Cl},
\end{equation}
for different $\epsilon_{\mathbb{X}}, \mathbb{X} \in \{ \mathbb{E}, \mathbb{V}, \sigma, \mathbb{S} \}$, to compute the resource allocation
\begin{equation}
\Nl^{\mathbb{E}} = \left\lceil \lambda \sqrt{\frac{\mathcal{V}[\approxsym{\mu}_{i, \ell}]}{\Cl}} \right\rceil.
\end{equation}

In our work, we build upon this idea and extend it to the approximation of the variance of the standard deviation $\sigmaapproxml$ and the scalarization $\moneapproxml + \alpha \sigmaapproxml$. While the authors of the work in~\cite{Krumscheid2020} rely on h-statistics to derive general expressions for the variance of central moments of higher order, the variances of standard deviation and scalarization are a main contribution of our work. Additionally, we use this approach in combination with the numerical optimization to compute the resource allocation. Whereas this approach solves the problem analytically by disregarding higher order terms of $\Nl$, the numerical optimization solves the problem numerically while taking higher order terms into account. From an algorithmic standpoint, we can either use the analytic approximation directly, or just use it as an initial guess for the numerical optimization. The two approaches are compared in the next section.

\section{Numerical results}
\label{sec:numericalresults}
In this section, we show the performance of the MLMC estimators in the context of the optimal allocation of resources for different statistics. We consider two numerical problems. The first is a one-dimensional constrained toy problem that we named \say{Problem 18}, which we discuss in Section~\ref{ssec:problem18}. The second case is a popular and well-established optimization benchmark, namely the constrained Rosenbrock function from~\cite{Rosenbrock}. In both cases, we extend the literature test cases to stochastic problems and to also include multiple levels/approximations from which we can define the MLMC estimators. 

By using these examples, we present two sets of results. We first discuss the effectiveness of our MLMC estimators in the evaluation of the statistics, which corresponds to a forward UQ analysis for a fixed design. In this first set, we, hence, focus on sampling only without including the OUU workflow as outer loop. We test numerically, how well the new estimators match their specific target $\epsilon^2_\mathbb{X}, \mathbb{X} \in \{\mathbb{E}, \mathbb{V}, \sigma, \mathbb{E}+\alpha \sigma \}$, where this target $\epsilon^2_\mathbb{X}$ is computed from a Monte Carlo reference solution, i.e. a MC solution with prescribed computational cost. Thus, we can compare how well the newly developed multilevel Monte Carlo statistics match the respective MC reference. We furthermore can compare the performance between the different MLMC estimators targeting different statistics (using different sample allocations $\accentset{\ast}{N}_\ell^{\mathbb{X}}$), but still approximating a particular statistic of interest even if a different target was used for sample allocation. This is useful to demonstrate that, in general, if we prescribe a fixed MC cost and derive the corresponding precision for different targets, e.g., mean and standard deviation, the MLMC sample profile that guarantees the required accuracy in the mean, does not match the precision in the standard deviation at the same time. The reason for this is to demonstrate that, while different statistics can be obtained from post-processing a MLMC allocation that targets the mean, the OUU goals should rather be considered in the allocation in order to obtain the best allocation and the desired accuracy.

Here, we also compare algorithmic choices: we compare the numerical optimization of the resource allocation problem to the approach adapted from~\cite{Krumscheid2020} and presented in Section~\ref{ssec:analyticapproximation}. We further compare the use of iterations as presented in Section~\ref{ssec:implementationdetails}. 

Afterwards, we discuss the combined OUU workflow with both forward UQ at each design location and optimization over the design space.  Furthermore, we emphasize the impact of the covariance term approximation, as discussed in Sections~\ref{sssec:covarianceupperbound},~\ref{sssec:covariancecorrlift} and~\ref{sssec:covariancebootstrap}, on the efficiency of MLMC in the scalarization case. For all the results, we also show the performance enhancements that MLMC can provide with respect to its MC counterpart.

\subsection{Distance metrics}
\label{ssec:distancemetrics}
Before discussing the actual numerical results, we present here the error quantities that we use to extract a quantitative assessment of the OUU workflows. To compare different approaches we run $M^{\mathbb{X}} = 25$ optimization runs for each method where $\mathbb{X} \in \{ \mathbb{E}, \mathbb{S} \}$ stands for the MLMC target. A set of $M^{\text{MC}} = 25$ independent Monte Carlo optimization runs are also used as reference. For each of those runs, we extract the final optimal design $x_i^{\mathbb{X}} \in \mathbb{R}^d$ and $x_i^{\text{MC}} \in \mathbb{R}^d$ found after a fixed number of iterations of the optimizer. We define $x_{i,j}^{\mathbb{X}}$ as the $j$-th element of $x_{i}^{\mathbb{X}}$ and similarly for $x_{i, j}^{\text{MC}}$. This results in sets of final designs for MLMC and MC as $\{x_i^{\mathbb{X}} \}_{i=1}^{M^{\mathbb{X}}}$ and $\{x_i^{\text{MC}} \}_{i=1}^{M^{\text{MC}}}$. We define three metrics:

The first metric is the distance between the centers of the two sets of designs
\begin{equation*}
\text{Dist}^{\mathbb{X}}_C = e(\overline{x}^{\mathbb{X}}, \overline{x}^{\text{MC}}),
\end{equation*}
where $e(x, y) = \sqrt{\sum_{i=1}^d (x_i - y_i) ^2}$ is the Euclidean distance and $\overline{x} = \frac{1}{M} \sum_{i=1}^{M}[ x_{i, 0}, ..., x_{i, d}]$ denotes the vector of centers.  

The second metric considers the standard deviation values of the two sets
\begin{equation*}
\text{Dist}^{\mathbb{X}}_\sigma = d(\hat{\sigma}(x^{\mathbb{X}}), \hat{\sigma}({x}^{\text{MC}})),
\end{equation*}
where $d(x, y) = [|x_0 - y_0|,..., |x_d - y_d|] \in \mathbb{R}^d$ and $\hat{\sigma}(x) = \sqrt{\frac{1}{M-1} \sum_{i=1}^M d (x_{i}, \overline{x})^2}$. Note, that this is a $d$-dimensional metric which shows the difference in the standard deviation in each dimension.

Finally, as the third metric, we take the root-mean-square deviation~\cite{Coutsias2004,Omelyan2019}. Since we are also interested in difference between point clouds with respect to a reference solution, this is a fitting metric, which can also be interpreted as a combination of the previous two metrics. We use it to compute root-mean-square deviation to the Monte Carlo reference solution as
\begin{equation*}
\text{Dist}^{\mathbb{X}}_{\text{RMSdev}} = \sqrt{\frac{1}{M^{\mathbb{X}}} \sum_{i=1}^{N^{\mathbb{X}}}  e(x_i^{\mathbb{X}}, \overline{x}^{\text{MC}})^2 }.
\end{equation*}

\subsection{Implementation details}
\label{ssec:implementationdetails}
In order to couple a given UQ strategy with SNOWPAC, for a generic objective function $R_i^\textml$ it is necessary to provide an estimation of the standard error SE for $R_i^\textml$ as estimate for ${\varepsilon}^b$ in Eq.~\eqref{eq:lower_bound_on_trust_region} and Eq.~\eqref{eq:adjusted_black_box_evaluations}. While the standard error for the sample mean $\moneapprox$ is simply obtained as $\sqrt{\mathbb{V}[\moneapprox]} \approx \sqrt{\frac{\cmtwoapprox}{N}}$, we can now use the derived variances from the previous chapters to also compute the standard error for the other estimators as given in Eq.~\eqref{eq:mlmcvarianceofvariance}, Eq.~\eqref{eq:mlmcvarianceofstddev} and Eq.~\eqref{eq:mlmcvarianceofscalarization} by taking the square root. 

When computing the resource allocation problem, e.g., for the scalarization in Eq.~\eqref{eq:mlmcscalarization}, we need to know all the quantities. Of course, we do not know them in advance or can compute them with a high number of samples (which could be used to solve the problem itself). Hence, we solve the resource allocation problem iteratively. We start with a set of pilot samples which we evaluate and from which we compute a first estimate for the number of samples required on each level. We evaluate this suggested set of samples and can update our estimators. Using this update, we can again solve the resource allocation problem and get an updated estimate for the number of samples. We iteratively proceed until either the suggested number of samples is smaller than the already evaluated number of samples or we reach a maximum number of iterations. We compare the choice of iterations in the result section.

The developed MLMC algorithms and estimators are implemented in DAKOTA \cite{Dakota}, a software toolkit that offers state-of-the-art research and robust, usable algorithms for optimization and UQ. It furthermore includes SNOWPAC as an externally-developed solver. By its modular design, a number of surrogate models can be employed with the optimization algorithm of the user's choice. Applications can be easily coupled, e.g., through the exchange of input and output files. This offers a powerful tool to utilize and exchange a multitude of algorithms in a straight-forward manner. A schematic representation of the interaction among SNOWPAC, the forward UQ problem in Dakota, and the application of interest is visualized in Fig.~\ref{fig:dakotaloop}.
\begin{figure}[h]
\centering
\includegraphics[width=0.5\textwidth]{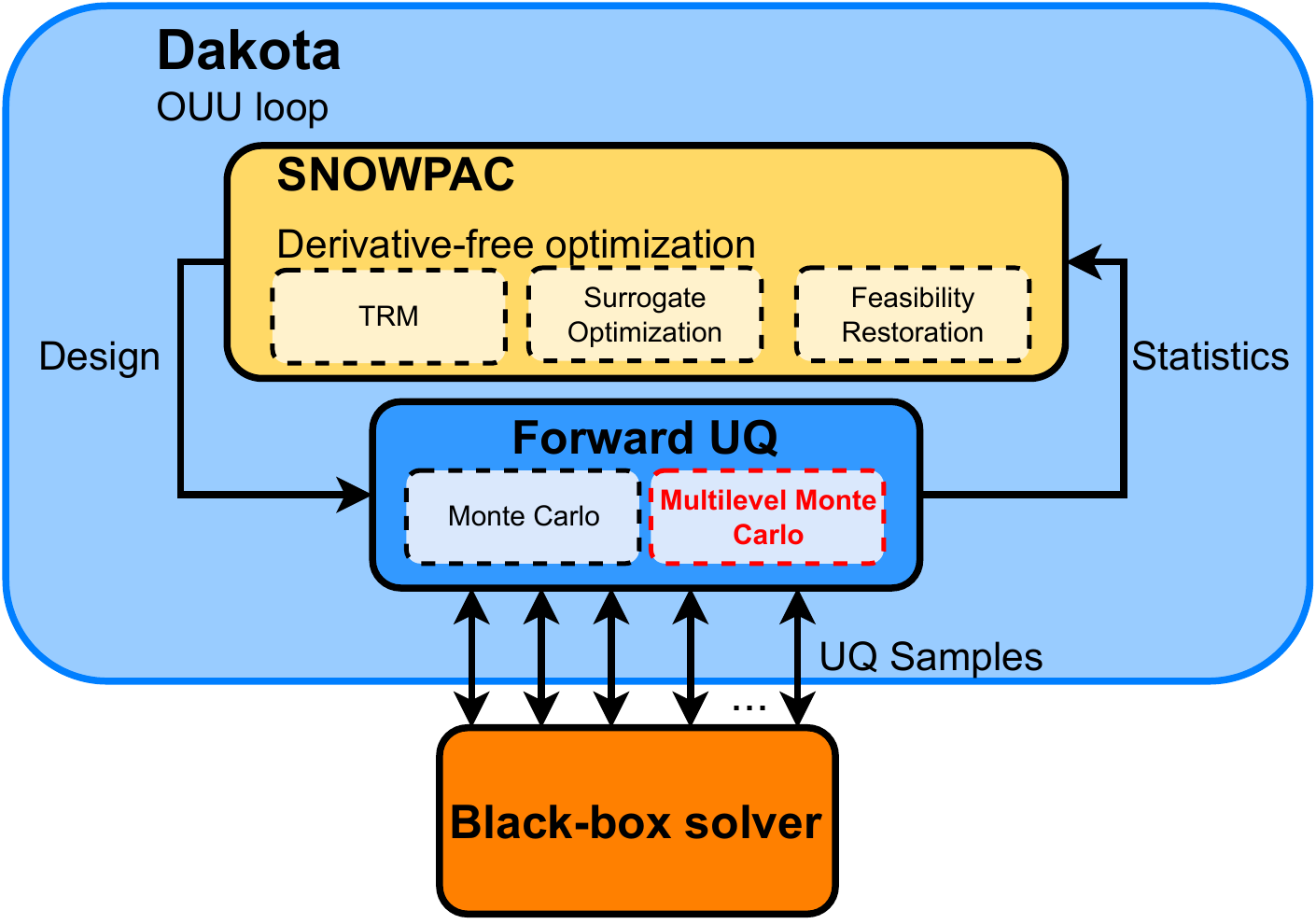}
\caption{Coupling of general surrogate models for a black-box solver with SNOWPAC as outerloop optimization method where the whole process is controlled by Dakota. SNOWPAC is using a trust region management (TRM), surrogate optimization and a feasiblity restoration mode. It communicates the next design step to Dakota which sets up the forward UQ problem at the current design. Dakota is responsible for the sampling, the communication with the black-box application and the collection of the results, employing the previously presented MLMC estimators. Finally, the computed statistics for the objective and constrained are forwarded to SNOWAC for the next optimization step. }
\label{fig:dakotaloop}
\end{figure}

\subsection{Problem 18}
\label{ssec:problem18}
We adapt problem 18 from the website~\cite{1DBenchmark}, which holds a collection of optimization benchmark problems for this first test case. It is a one-dimensional optimization problem where we minimize the function $f_{\text{det}}: \mathbb{R} \rightarrow \mathbb{R}$ in the deterministic case. We also include a linear constraint $g: \mathbb{R} \rightarrow \mathbb{R}$. The two functions are given as 
\begin{equation}
\begin{split}
&f_{\text{det}}(x) = \begin{cases}
    (x - 2)^2       & \quad \text{if } x \leq 3\\
    2 \ln(x - 2) + 1  & \quad \text{if } x > 3
  \end{cases} 
  \end{split}
\end{equation}
and 
\begin{equation}
\begin{split}
&g(x) = \frac{4\ln(1.5)}{5} (x - 1).
  \end{split}
\end{equation}

To create a stochastic problem, we add a random variable $\xi \sim \mathcal{U}(-0.5, 0.5)$ and use different correlation parameters to generate four different levels of $f$ for the multilevel case. This results in the four levels
\begin{equation}
\label{eq:problem18_4levels}
\begin{split}
&f_{4}(x, \xi) = f_{det}(x) + \xi^3, \\
&f_{3}(x, \xi) = f_{det}(x) + 1.1 \xi^3, \\
&f_{2}(x, \xi) = f_{det}(x) + \left( \frac{1}{60} x + 1.2 \right) \xi^3, \\
&f_{1}(x, \xi) = f_{det}(x) + \frac{3}{2} \xi^3,
\end{split}
\end{equation}
where we consider $f_{4}$ as the finest resolution while $\{f_i\}_{i=1}^3$ are coarser levels with set cost $C_1 < C_2 < C_3 < C_4$. We employ the cost ratio $\frac{C_i}{C_{i-1}} = 10$, with $C_4=1$. Due to the additive nature of the stochastic term, we can easily compute a reference solution, e.g., for $\mathbb{E}[f_4(x, \xi)] = f_{\text{det}}(x)$ or $\mathbb{V}[f_4(x, \xi)] = \mathbb{V}[\xi^3] = \frac{0.5^6}{7}$. Note also that for $f_{2}(x, \xi)$ the contribution of $\xi$ depends on $x$, which results in a varying correlation over the levels and, therefore, in a varying resource allocation over $x$. This will become relevant for the optimization results.

\subsubsection{Sampling}
As a first test case, we look at the sampling problem of estimating different measures for the objective function $f_4$ at a certain location $x$. We compare the performance of the standard single-level Monte Carlo estimator with the new multilevel Monte Carlo estimators presented in the previous sections. We assess the estimation quality for approximating the mean $\moneapprox$, the variance $\cmtwoapprox$, the standard deviation $\sigmaapprox$ and the scalarization term $\moneapprox + 3 \sigmaapprox$. Additionally, we compare the different algorithmic choices. The notation of these algorithmic choices for the legends of the upcoming figure are given here in parentheses. We compare the impact of iteratively computing the resource allocation for all estimators in a single (\textit{1 iter}) or 20 iterations (\textit{20 iter}) as described in Section~\ref{ssec:implementationdetails}. We also compare the impact of using a numerical optimization (\textit{Opt)} to compute the resource allocation  
compared to the analytic approximation (\textit{AA}) extended from~\cite{Krumscheid2020} and presented in Section~\ref{ssec:analyticapproximation}. Finally, for $\moneapprox + 3 \sigmaapprox$ we also compare the choice of the covariance approximation as described in Section~\ref{sssec:covarianceupperbound} (\textit{Pearson}),~\ref{sssec:covariancecorrlift} (\textit{CorrLift}) and~\ref{sssec:covariancebootstrap} (\textit{Bootstrap}). 

Without loss of generality, we fix the location $x = 1$ and compute the estimators $1000$ times with a random seed and, each time, we compute the respective resource allocation. From those samples of estimators we plot histograms to show their distribution; additionally, we compute the mean and variance of the distributions. We compare the different MLMC approaches on different targets among each other and also to a single level Monte Carlo.
By fixing $\epsilon_{\mathbb{X}}^2, \mathbb{X} \in \{\mathbb{E}, \mathbb{V}, \sigma, \mathbb{E}+\alpha \sigma \}, $ to be equal to the respective variance of the Monte Carlo reference solution (based on $1000$ samples) we expect the MLMC estimators to match the performance of the single level Monte Carlo at a reduced cost, since both target the same variance. This reference variance for $1000$ samples is computed analytically for this test case, e.g., for the mean: $\epsilon_{\mathbb{E}}^2 = \frac{\mathbb{V}[f_4(x, \xi)]}{1000} = \frac{0.5^6}{7000} \approx 2.2321\text{e-}6$. 

We first estimate the expectation at $\moneapprox$ as a proof of concept and to familiarize ourselves with the results. We use $\epsilon_{\mathbb{E}}^2 \approx 2.2321\text{e-}6$ for the variance $\mathbb{V}[\moneapproxml]$ to compute the resource allocation. The resulting histogram of $1000$ independently computed estimators is given in Figure~\ref{fig:4level_mean_x1} on the left. The two distributions clearly match, with red representing the reference Monte Carlo solution and blue representing the MLMC estimator, as described by \cite{Giles2008}. 
We also investigate the effect of the number of iterations on the quality of the results. If the resource allocation is obtained by taking 20 iterations (20 iter), we see a better match with the reference solution compared to a single iteration (1 iter) of the algorithm. This is due to the error in the statistics introduced by the limited pilot sampling in this latter case, while in the iterated case more samples are added and the statistics are refined until convergence. Finally, we consider the cost in Fig.~\ref{fig:4level_mean_x1} (right), where we are able to observe a large cost reduction compared to the single level Monte Carlo solution. Comparing the number of iterations, we see a narrower peak, and thus a more robust computational cost, when using an iterative approach compared to a single iteration. We also show this quantitatively in Table~\ref{tbl:4level_mean_x1}, where we compare expectations and variances computed from the histograms. We also know the exact value that we want to target, both for the expectation of our estimators as well as for its variance. We see that MLMC Mean (20 iter) performs best in approximating the expected value, but also the variance of the estimator is closest to the target.
\begin{figure}[h]
\centering
\includegraphics[width=0.49\textwidth]{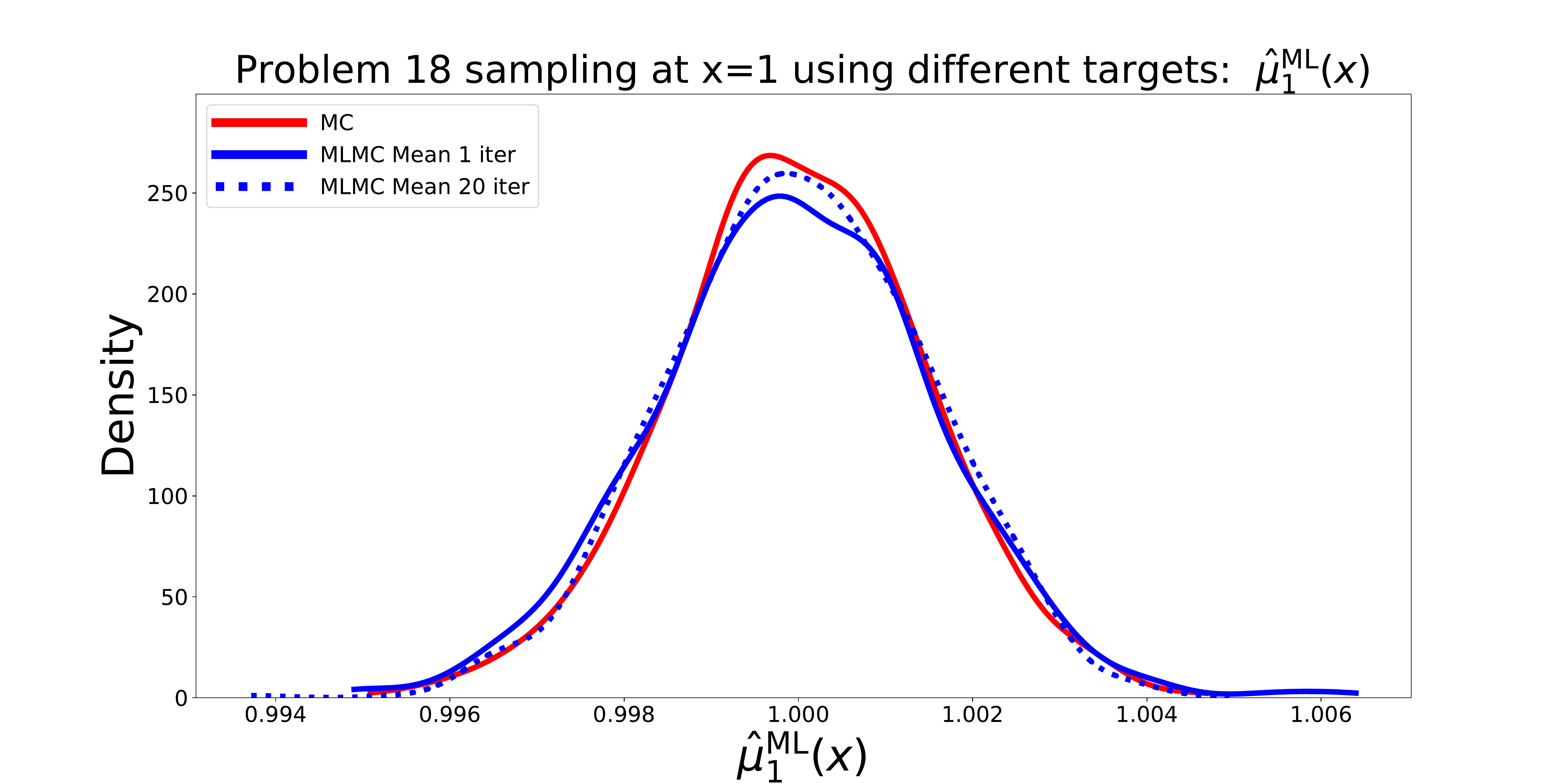}
\includegraphics[width=0.49\textwidth]{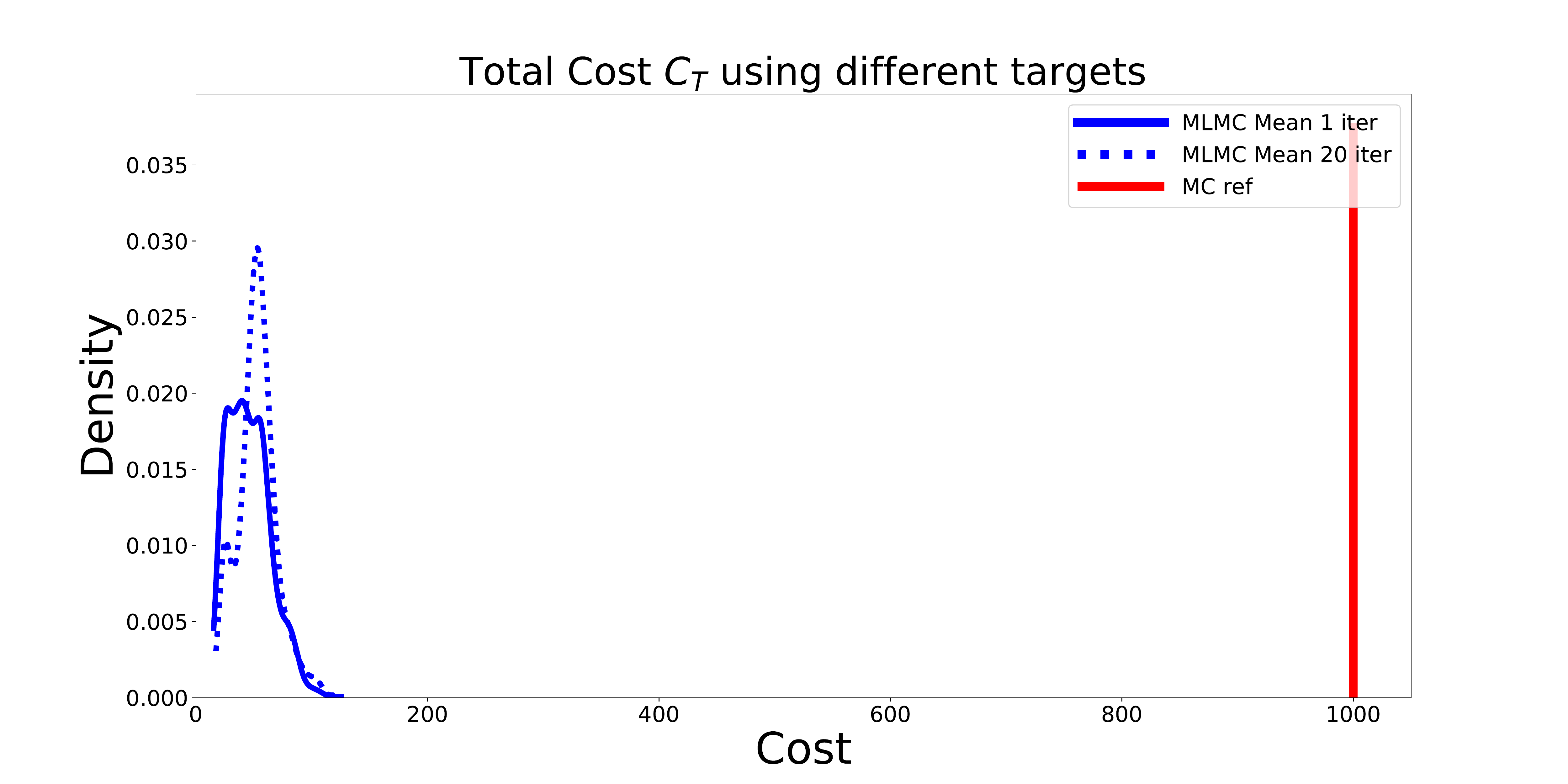}
\caption{\textbf{Mean.} Left: Histogram over $1000$ samples of $\moneapproxml$ for $x=1$ using the different estimators described in Eq.~\eqref{eq:mlmcmean} in blue compared to a reference Monte Carlo estimator $\moneapprox[f_4]$ in red. We compare different algorithmic choices of using an (non-)iterative approach with 1 (1 iter) as solid line or 20 iterations (20 iter) as dashed line for finding the resource allocation. The resource allocation problem is solved analytically. Right: Respective cost for the different estimators.}
\label{fig:4level_mean_x1}
\end{figure}

\begin{table}
\begin{center}
\begin{tabular}{| c || c | c || c | c | }
\hline
Method                &    Mean        &   Exact                    &  Variance  &  Exact \\ \hline
MC                    &    0.99999516  &  \multirow{3}{*}{1.0} &  2.1271e-6 &  \multirow{3}{*}{2.2321e-6} \\ 
MLMC Mean (1 iter)    &    0.99996403  &                            &  2.5850e-6 &  \\ 
MLMC Mean (20 iter)   &    0.99999945  &                            &  2.1821e-6 &  \\  
\hline
\end{tabular}
\end{center}
\caption{Expectations and variances from the histograms of the different approaches in Fig.~\ref{fig:4level_mean_x1}. The column labeled \textit{Exact} shows the target value for expectation and variance.}
\label{tbl:4level_mean_x1}
\end{table}

In the second case, we consider the variance $\cmtwoapprox$, and we compute the resource allocation using $\epsilon_{\mathbb{V}}^2 \approx 1.3823\text{e}-8$ as the target for its variance $\mathbb{V}[\cmtwoapproxml]$. This target is again computed analytically for a Monte Carlo reference solution using 1000 samples. We compare the result not only to a Monte Carlo reference solution but also to a MLMC estimator targeting the mean; in this case, as for the previous case, this estimator uses $\epsilon_{\mathbb{E}}^2 \approx 2.2321\text{e-}6$ as target for its variance $\mathbb{V}[\moneapproxml]$ as given in its resource allocation problem~\eqref{eq:mlmcmean}. The resulting histogram of $1000$ independently computed estimators is given in Fig.~\ref{fig:4level_variance_withmean_x1} where we only compare to the MLMC mean estimator on the left. This is the first case where we see the importance of allocating resources according to the statistics of interest. While we see a good match between the MLMC estimator targeting the variance in orange, we see that the MLMC estimator targeting the mean in blue is under-resolving the estimator, which results in a much wider peak. This wider peak is the result of an underestimation in the resource allocation. This also results in a much lower computational cost for standard MLMC (Note the logarithmic scale on x). This clearly shows the advantage of synchronizing the allocation target with the statistical goal; i.e., by just allocating resources for the mean we cannot expect to compute MLMC estimators for a different statistics while still preserving (or improving) on the MC reference. Regarding the algorithmic implementations for the variance, we compare either using the analytic approximation (AA) as described in Section~\ref{ssec:analyticapproximation} or combining that with a numerical optimization (Opt) in  Fig.~\ref{fig:4level_variance_x1}. Furthermore, for both options, we can use an iterative approach (1 iter or 20 iter). First, we notice an advantage in the iterative approach by reducing the variance in the cost distribution of the estimator. Second, using the numerical optimization in addition to the analytic approximation offers a small improvement in the approximation quality of the estimator targeting $\epsilon_{\mathbb{V}}^2$.  
\begin{figure}[h]
\centering
\includegraphics[width=0.49\textwidth]{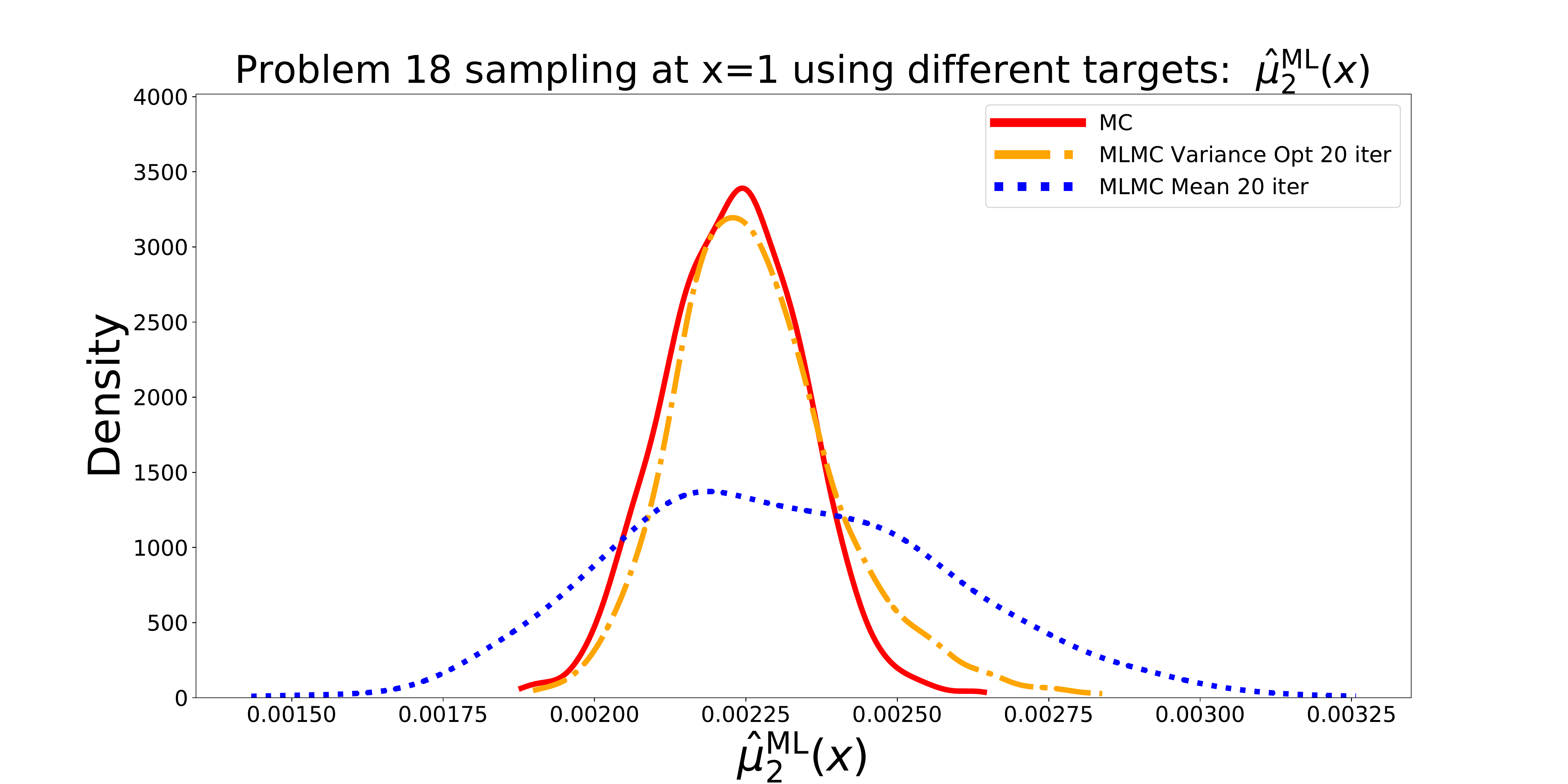}
\includegraphics[width=0.49\textwidth]{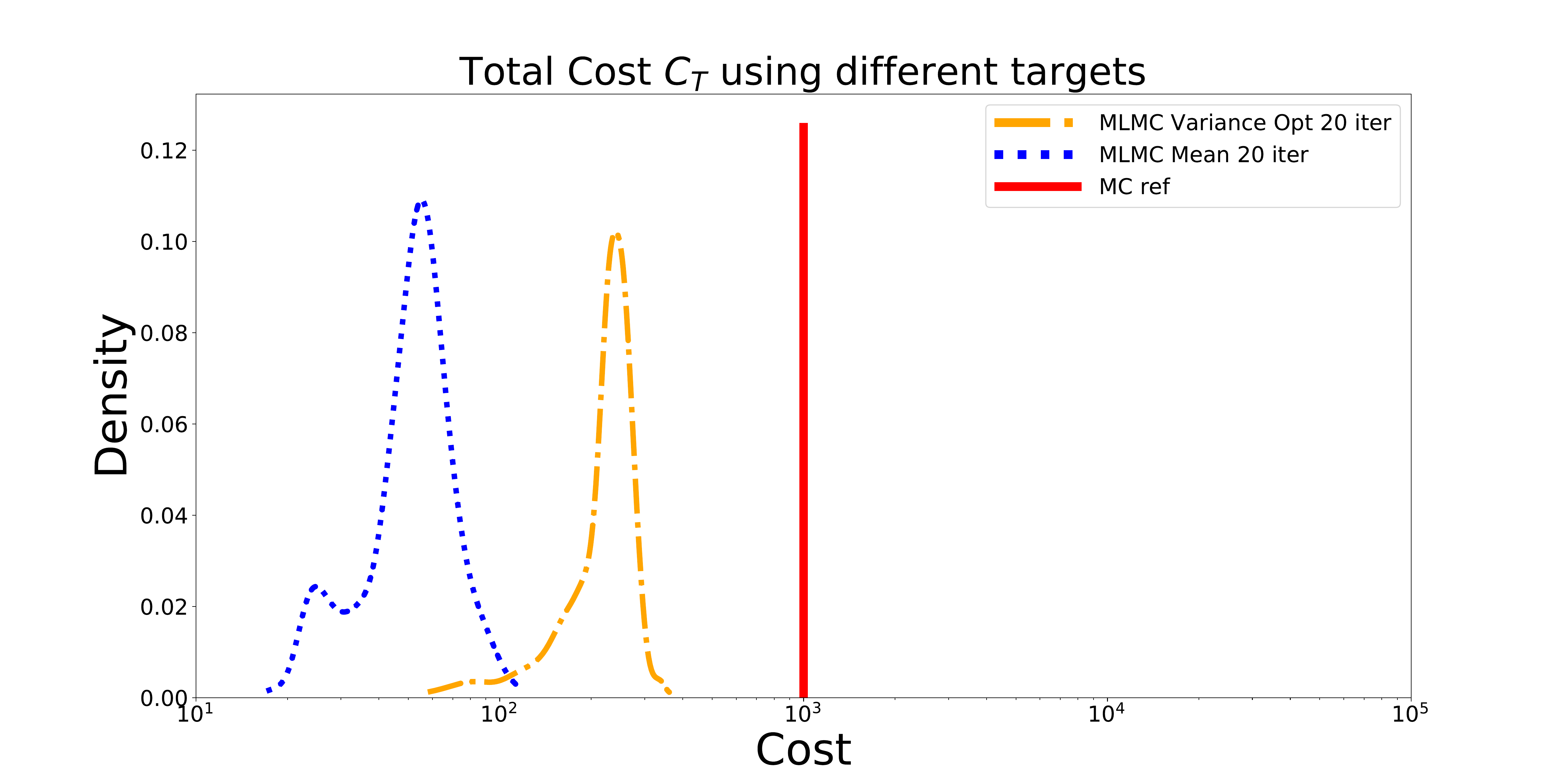}
\caption{\textbf{Variance. }Left: Histogram over $1000$ samples of $\cmtwoapproxml[x]$ for $x=1$ comparing the new estimator described in Eq.~\eqref{eq:mlmcvariance} in orange compared to a reference Monte Carlo estimator in red and using the standard MLMC estimator targeting the mean in blue. Right: Respective cost for the different estimators.}
\label{fig:4level_variance_withmean_x1}
\end{figure}
\begin{figure}[h]
\centering
\includegraphics[width=0.49\textwidth]{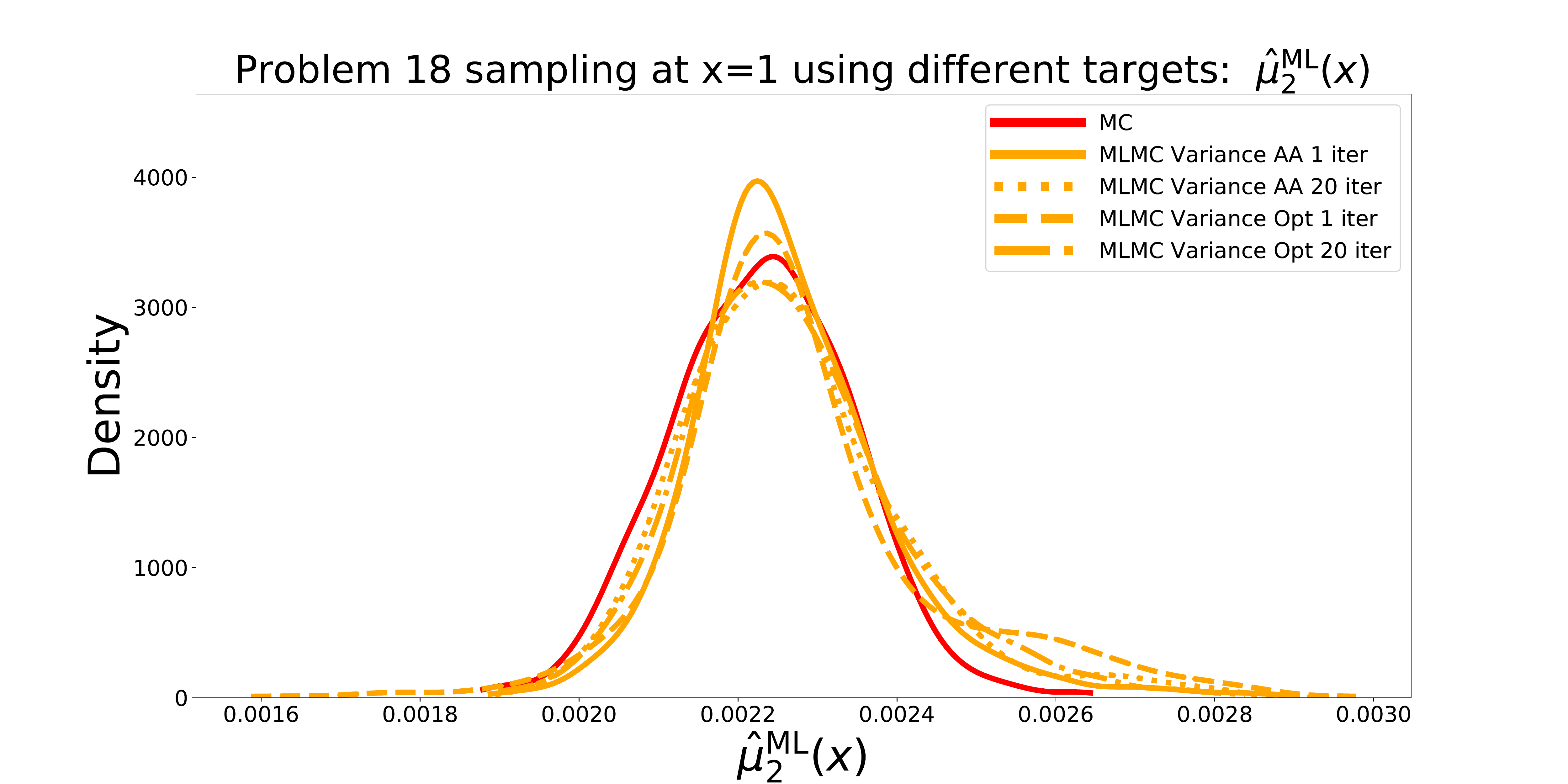}
\includegraphics[width=0.49\textwidth]{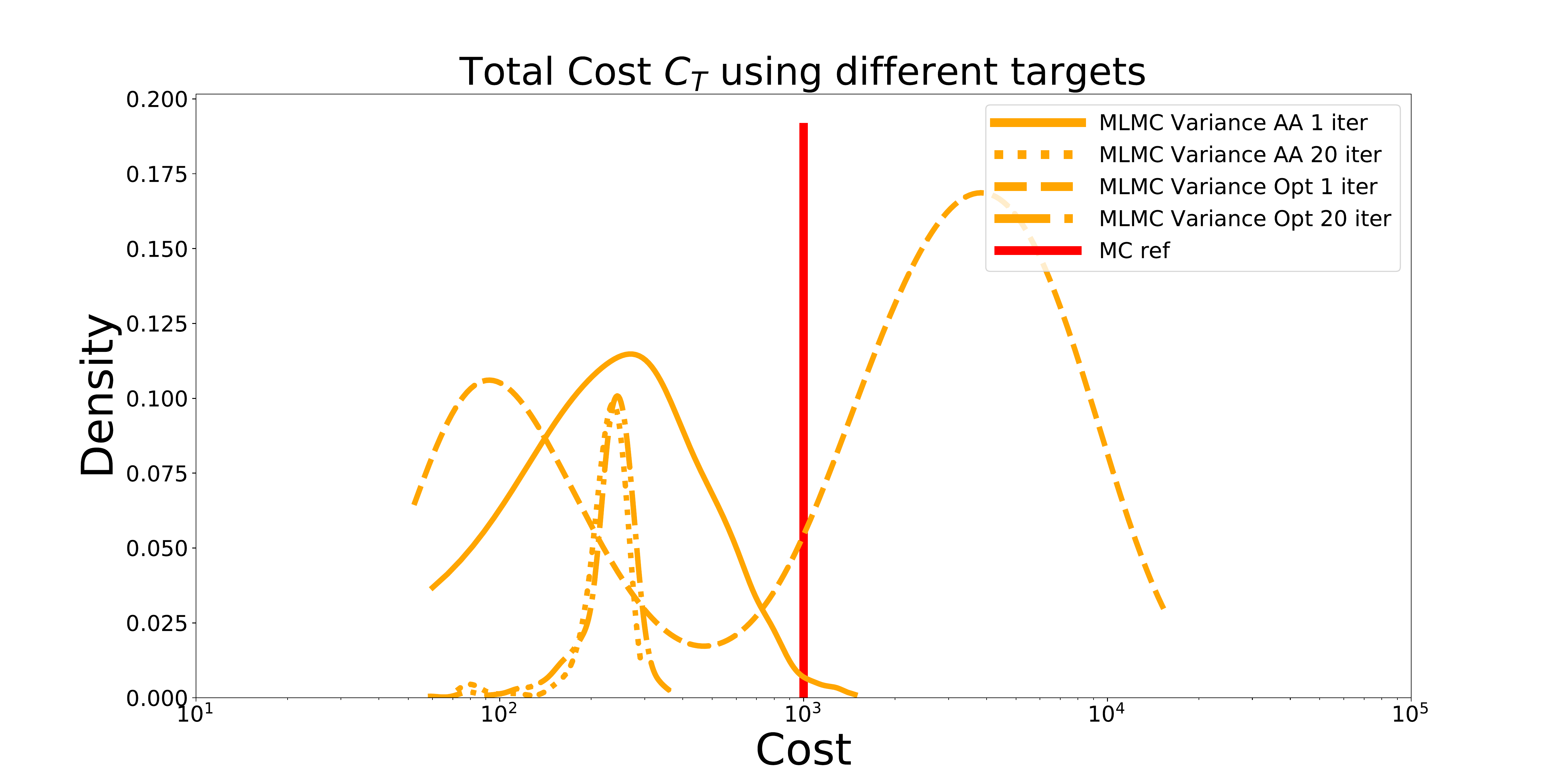}
\caption{\textbf{Variance. }Left: Histogram over $1000$ samples of $\cmtwoapproxml[x]$ for $x=1$ comparing different algorithmic choices for computing the new estimators described in Eq.~\eqref{eq:mlmcvariance} in orange compared to a reference Monte Carlo estimator in red. We compare using an (non-)iterative approach for finding the resource allocation and to use only the analytical approximation or combine it with numerical optimization (solid, dashed, dotted and dashed-dotted). Right: Respective cost for the different estimators.}
\label{fig:4level_variance_x1}
\end{figure}
This is also shown quantitatively in Table~\ref{tbl:4level_variance_x1}.  We see a closer match to the exact solution using our MLMC estimator for the variance, in its expectation as well as in the variance of the estimator itself. Comparing the different approaches, we note that MLMC Variance AA (1 iter) seems to perform best. However, we also have to take the computational cost into account where we point out the large variance of the cost of that approach, even sometimes exceeding the cost of standard Monte Carlo. Therefore, we again prefer the iterative approaches for a better robustness with respect to computational cost.
\begin{table}
\begin{center}
\begin{tabular}{| c || c | c || c | c | }
\hline
Method                &    Mean        &   Exact                    &  Variance  &  Exact \\ \hline
MC                    &    2.2311e-3  &  \multirow{6}{*}{2.2321e-3} &  1.3112e-8 &  \multirow{6}{*}{1.3823e-8} \\ 
MLMC Mean (20 iter)   &    2.3093e-3  &                            &  7.6308e-8 &  \\  
MLMC Variance AA (1 iter)   &    2.2656e-3  &                            &  1.5904e-8 &  \\  
MLMC Variance AA (20 iter)   &    2.2660e-3  &                            &  2.0144e-8 &  \\  
MLMC Variance Opt (1 iter)   &    2.2799e-3  &                            &  2.8185e-8 &  \\  
MLMC Variance Opt (20 iter)   &    2.2680e-3  &                            & 1.8959e-8 &  \\  
\hline
\end{tabular}
\end{center}
\caption{Expectations and variances from the histograms of the different approaches in Fig.~\ref{fig:4level_variance_withmean_x1} and Fig~\ref{fig:4level_variance_x1}. The column labeled \textit{Exact} shows the target value for expectation and variance.}
\label{tbl:4level_variance_x1}
\end{table}

While estimators for the variance were also presented in the work by \cite{Krumscheid2020} (albeit using h-statistics), we now move to the standard deviation $\sigmaapprox$ where we use $\epsilon_{\sigma}^2 \approx 1.5493\text{e-}6$ as the target for the variance $\mathbb{V}[\sigmaapproxml]$ to compute the resource allocation. This target is computed numerically since there is no analytically exact solution (without using an approximation like the delta method) by recomputing the estimator for $1000$ samples for $1000000$ times and computing its variance. Again, when targeting the mean, we use $\epsilon_{\mathbb{E}}^2 \approx 2.2321\text{e-}6$ as in the result for the expected value. This is the first case where we adapt the analytic approximation introduced by \cite{Krumscheid2020} to these new estimators as described in Section~\ref{ssec:analyticapproximation}. The resulting histograms of $1000$ independently computed estimators for different approaches are given in Fig.~\ref{fig:4level_sigma_withmean_x1} and Fig.~\ref{fig:4level_sigma_x1} on the left. Similarly to the variance case, we again see in Fig.~\ref{fig:4level_sigma_withmean_x1} that the MLMC estimator targeting the mean is not well suited for estimating the standard deviation. The MLMC estimator computed using Eq.~\eqref{eq:mlmcstddev}, on the other hand, matches the single level Monte Carlo estimator well. We see the advantage of using an iterative approach combined with numerical optimization in Fig.~\ref{fig:4level_sigma_x1}, especially when regarding the computational cost. When using only a single iteration, the computational cost highly vary, even exceeding the cost of Monte Carlo. The iterative approach on the other side results in a smaller variance in the cost. We also note that adapting the analytic approximation described in Section~\ref{ssec:analyticapproximation} combined with iterations works quite well in this case. These results are also documented quantitatively when we compute the expectations of the histogram and its variance (which approximates to the variance of the estimator) in Table~\ref{tbl:4level_sigma_x1}. We clearly note that better performance in expectation and variance of the estimator for the newly developed MLMC estimator for the standard deviation. As a side note, the effect of $\sigmaapproxml$ being a biased estimator is also visible in a small offset compared to the reference solution.
\begin{figure}[h]
\centering
\includegraphics[width=0.49\textwidth]{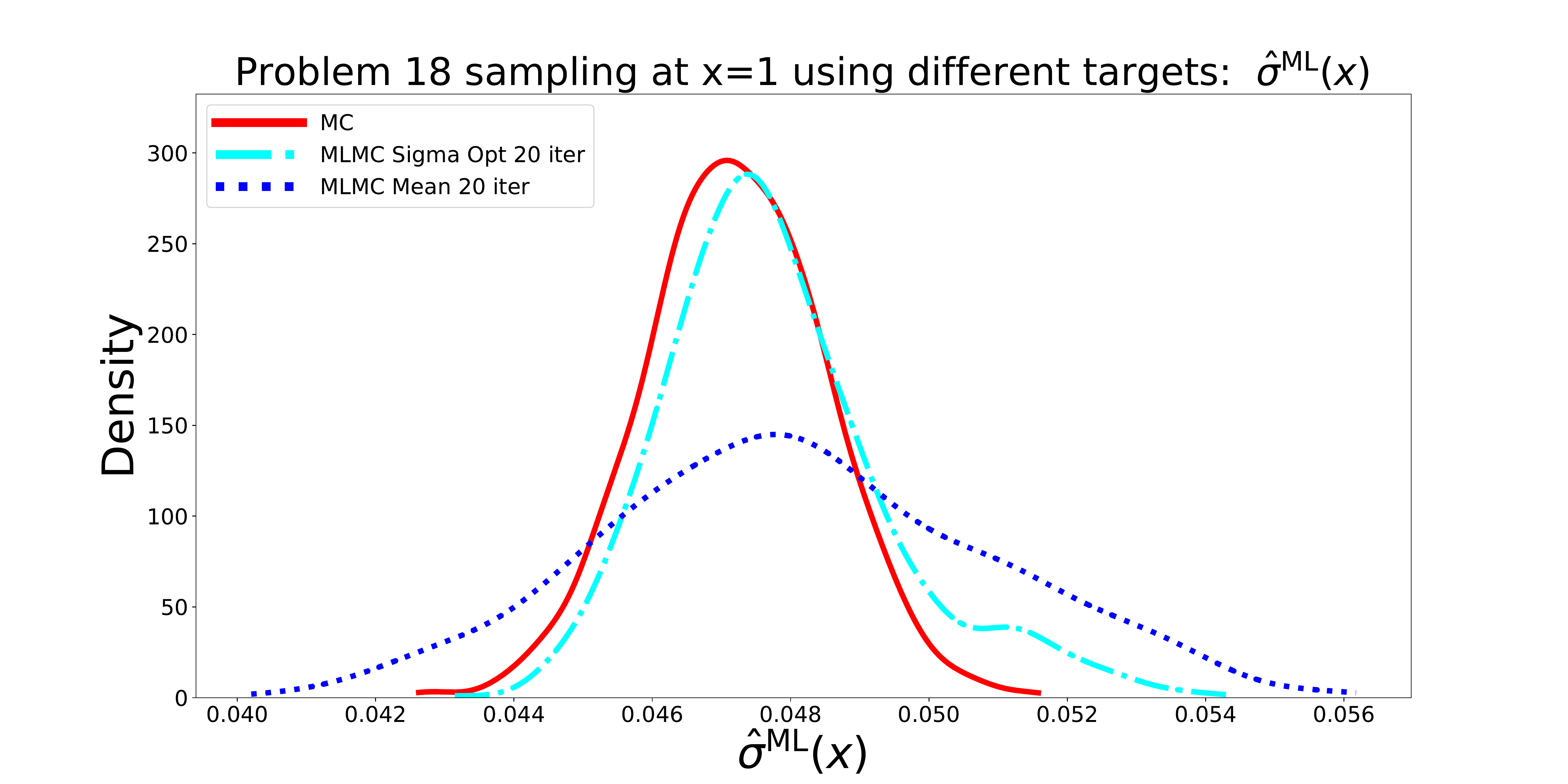}
\includegraphics[width=0.49\textwidth]{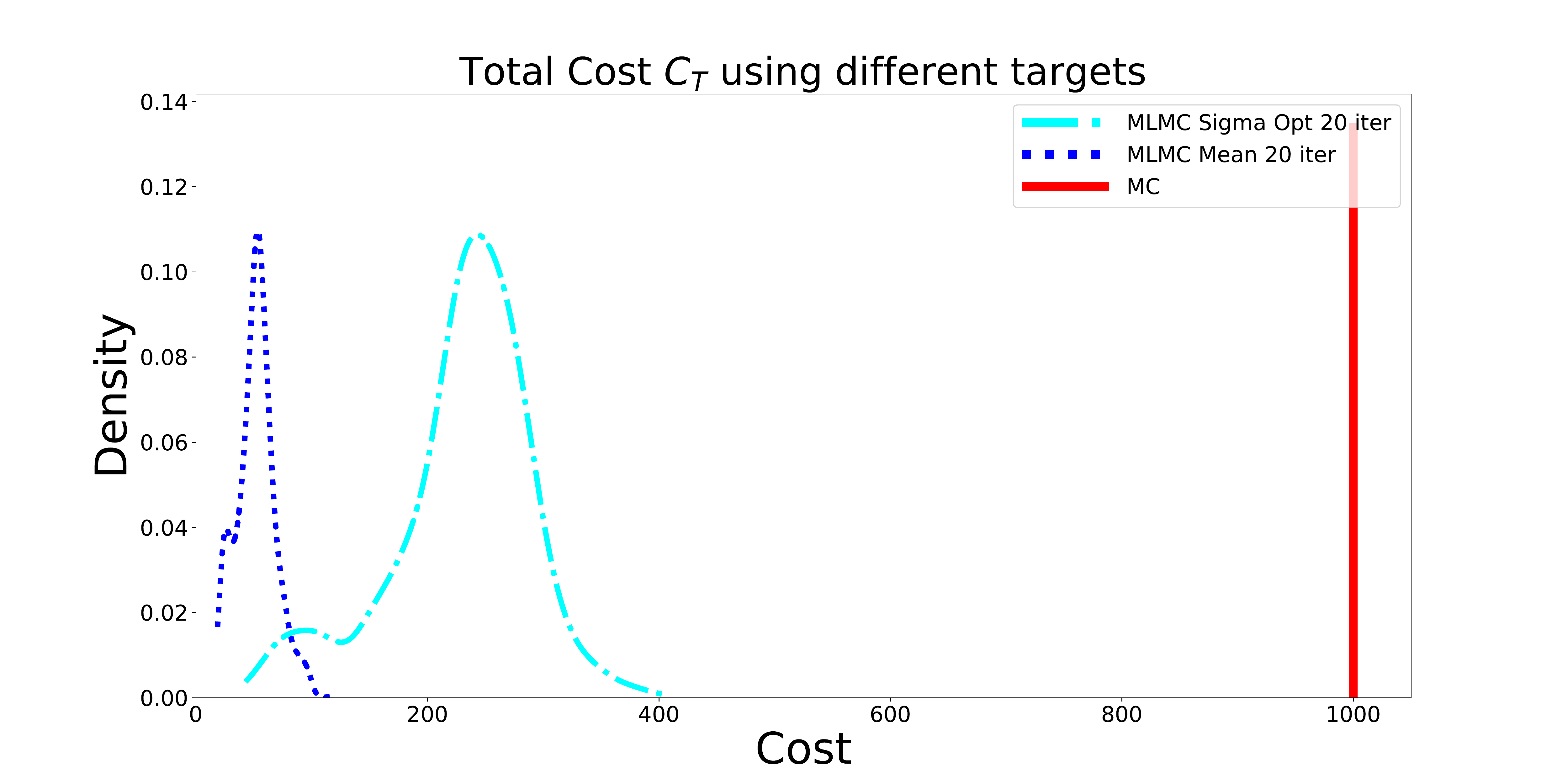}
\caption{\textbf{Standard deviation. }Left: Histogram over $1000$ samples of $\sigmaapproxml[x]$ for $x=1$ comparing the new estimator described in Eq.~\eqref{eq:mlmcstddev} in cyan compared to a reference Monte Carlo estimator in red and using the standard MLMC estimator targeting the mean in blue. Right: Respective cost for the different estimators.}
\label{fig:4level_sigma_withmean_x1}
\end{figure}
\begin{figure}[h]
\centering
\includegraphics[width=0.49\textwidth]{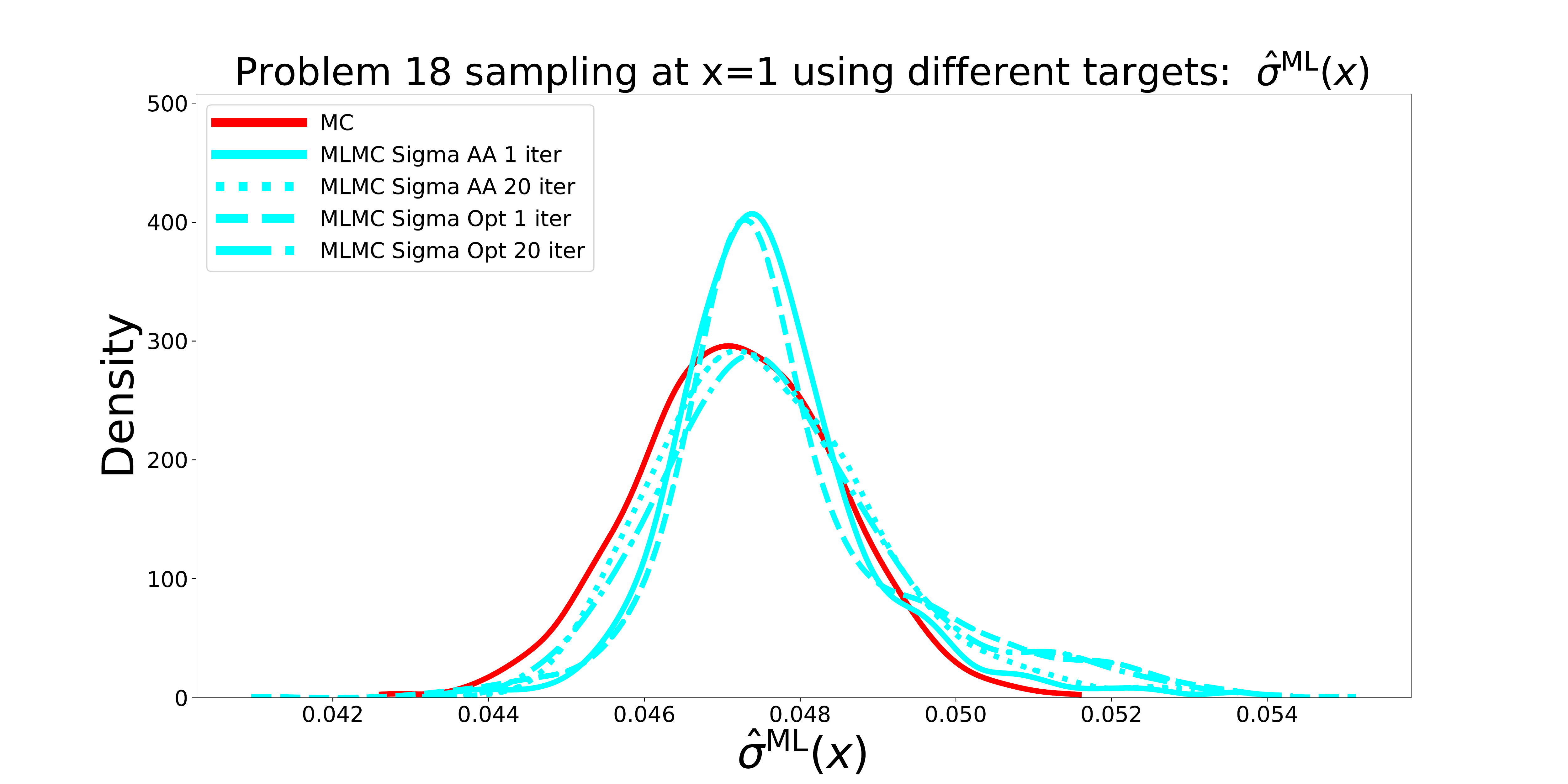}
\includegraphics[width=0.49\textwidth]{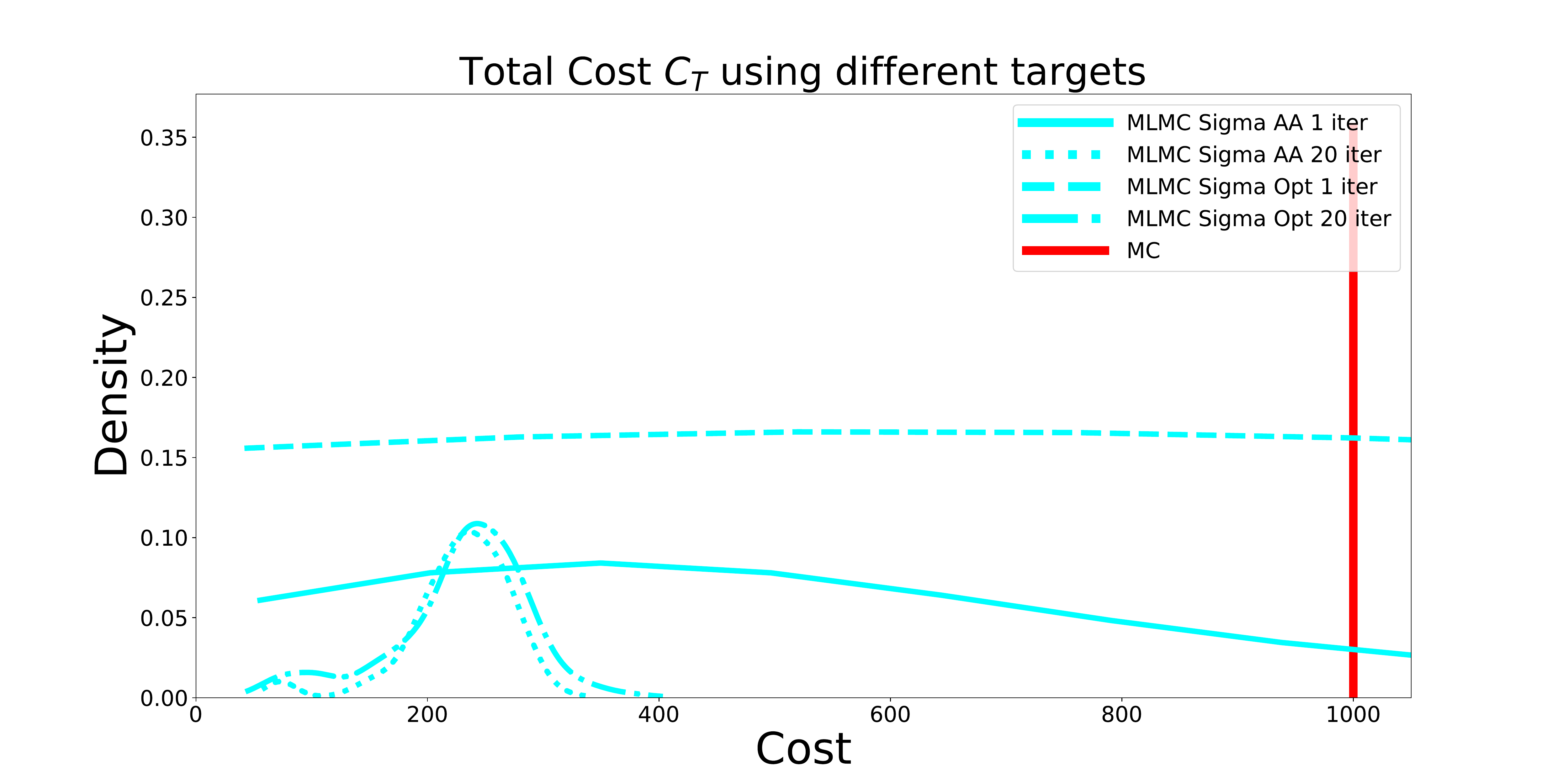}
\caption{\textbf{Standard deviation: }Left: Histogram over $1000$ samples of $\sigmaapproxml[x]$ for $x=1$ comparing different algorithmic choices for computing the new estimators described in Eq.~\eqref{eq:mlmcstddev} in cyan compared to a reference Monte Carlo estimator in red. We compare using an iterative approach for finding the resource allocation (1 iter or 20 iter) and to use only the analytical approximation (AA) or combine it with numerical optimization (Opt). Right: Respective cost for the different estimators.}
\label{fig:4level_sigma_x1}
\end{figure}
\begin{table}
\begin{center}
\begin{tabular}{| c || c | c || c | c | }
\hline
Method                &    Mean        &   Exact                    &  Variance  &  Exact \\ \hline
MC                    &    4.7198e-2  &  \multirow{6}{*}{4.7246e-2} &  1.6967e-6 &  \multirow{6}{*}{1.5493e-6} \\ 
MLMC Mean (20 iter)  & 4.8073e-2 & & 7.8906e-6 &\\ 
MLMC Sigma AA (1 iter)  & 4.7601e-2 &  &  1.6047e-6 &\\ 
MLMC Sigma AA (20 iter)  & 4.7598e-2 & &  2.0572e-6 & \\ 
MLMC Sigma Opt (1 iter)  & 4.7848e-2 &  & 2.6471e-6 &\\
MLMC Sigma Opt (20 iter)  & 4.7785e-2 & & 2.6936e-6 &\\ \hline
\end{tabular}
\end{center}
\caption{Expectations and variances from the histograms of the different approaches in Fig.~\ref{fig:4level_sigma_withmean_x1} and Fig~\ref{fig:4level_sigma_x1}. The column labeled \textit{Exact} shows the target value for expectation and variance.}
\label{tbl:4level_sigma_x1}
\end{table}

Finally, the primary contribution of this work is the new scalarization estimator, $\moneapprox + \alpha \sigmaapprox$. To compute the resource allocation, we choose $\alpha = 3$, which results in $\epsilon_{\mu + \alpha \sigma}^2 \approx 1.6175\text{e-}5$  as the target for the variance $\mathbb{V}[\moneapprox + \alpha \sigmaapprox]$. Also this target is computed numerically by repeatedly computing the estimator using $1000$ samples and computing its variance. By computing histograms from $1000$ samples of the estimator and extracting the corresponding computational cost, we compare three different approaches for computing the covariance term as described in Section~\ref{ssec:mlmcforscalarization}. We plot those against the Monte Carlo reference solution that we want to match. 

In the first case, we use the Pearson correlation to compute the covariance term of $\mathbb{V}[\moneapprox + \alpha \sigmaapprox]$ as described in Section~\ref{sssec:covarianceupperbound}. The result is visualized in Fig.~\ref{fig:4level_meansigma_x1_covpearson}. We see the effect of using the upper bound for estimating the variance: the estimators are over-resolved, which results in a smaller variance compared to the target. While we still have a good match with the reference solution, we incur an unnecessary amount of computational cost due to the conservative approximation. Note again that we see the effect of the biased estimator coming from the estimation of the standard deviation.
\begin{figure}[h]
\centering
\includegraphics[width=0.49\textwidth]{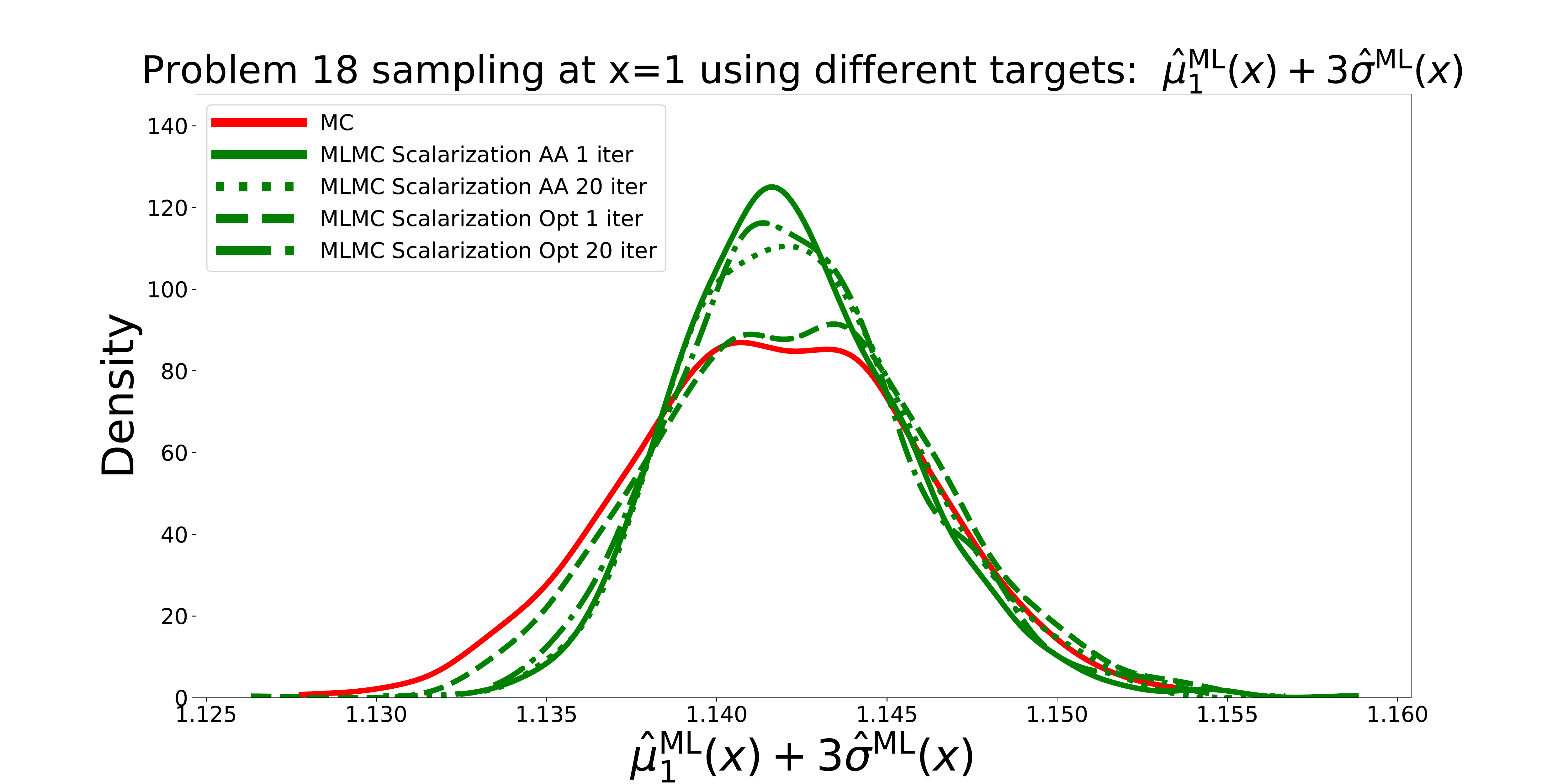}
\includegraphics[width=0.49\textwidth]{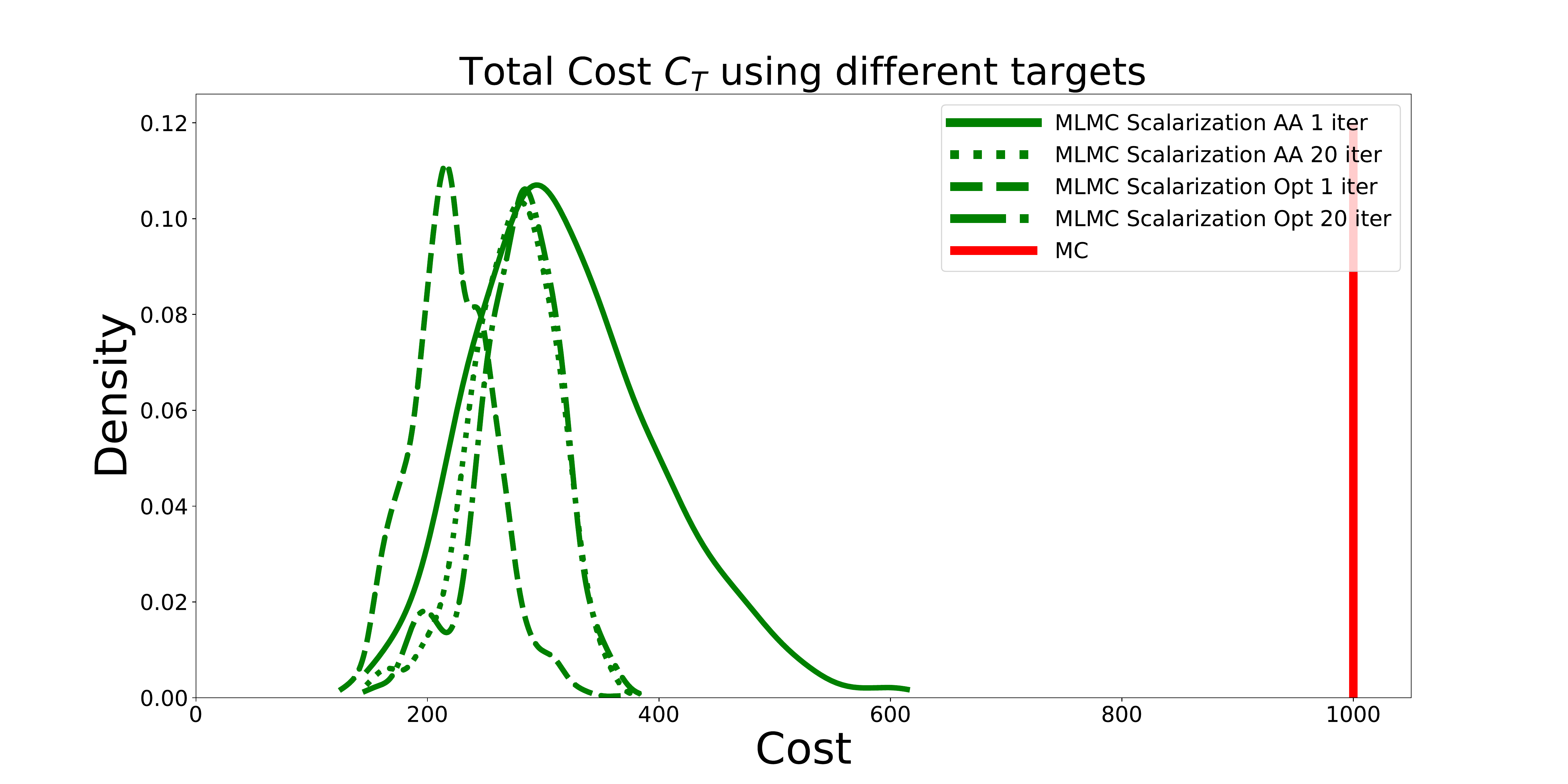}
\caption{\textbf{Scalarization (Pearson). }Left: Histogram over $1000$ samples of $\moneapprox[x] + \alpha \sigmaapprox[x]$ for $x=1$ using the scalarization estimator described in Eq.~\eqref{eq:mlmcscalarization} in green in combination with using the Pearson correlation property described in Section~\ref{sssec:covarianceupperbound} to bound the covariance term in Eq.~\eqref{eq:mlmcvarianceofscalarization}. We compare to a Monte Carlo reference estimator in red. We compare using an iterative approach for finding the resource allocation (1 iter or 20 iter) and to use only the analytical approximation (AA) or combine it with numerical optimization (Opt). Right: Respective cost for the different estimators.}
\label{fig:4level_meansigma_x1_covpearson}
\end{figure}

In the second case, we use the Bootstrap approximation to compute the covariance term of $\mathbb{V}[\moneapprox + \alpha \sigmaapprox]$ as described in Section~\ref{sssec:covariancebootstrap}. Hence, instead of a conservative upper bound, we now use an approximation. The improvement is visible in Fig.~\ref{fig:4level_meansigma_x1_covbootstrap}. We see an improved match of the histogram with the target function in the left figure. We also see the best results when using 20 iterations and a numerical optimization. When we look at the computational cost on the right, we can see that we require less computational cost than in Fig.~\ref{fig:4level_meansigma_x1_covpearson} given the approximation rather than the upper bound. One downside, not visualized here, is the computational cost of using the bootstrap. This gets especially expensive when combining it with the numerical optimization; while it might still be negligible when we apply the algorithm to expensive black-box functions, it is a non-negligible cost compared to the evaluation of analytic functions.
\begin{figure}[h]
\centering
\includegraphics[width=0.49\textwidth]{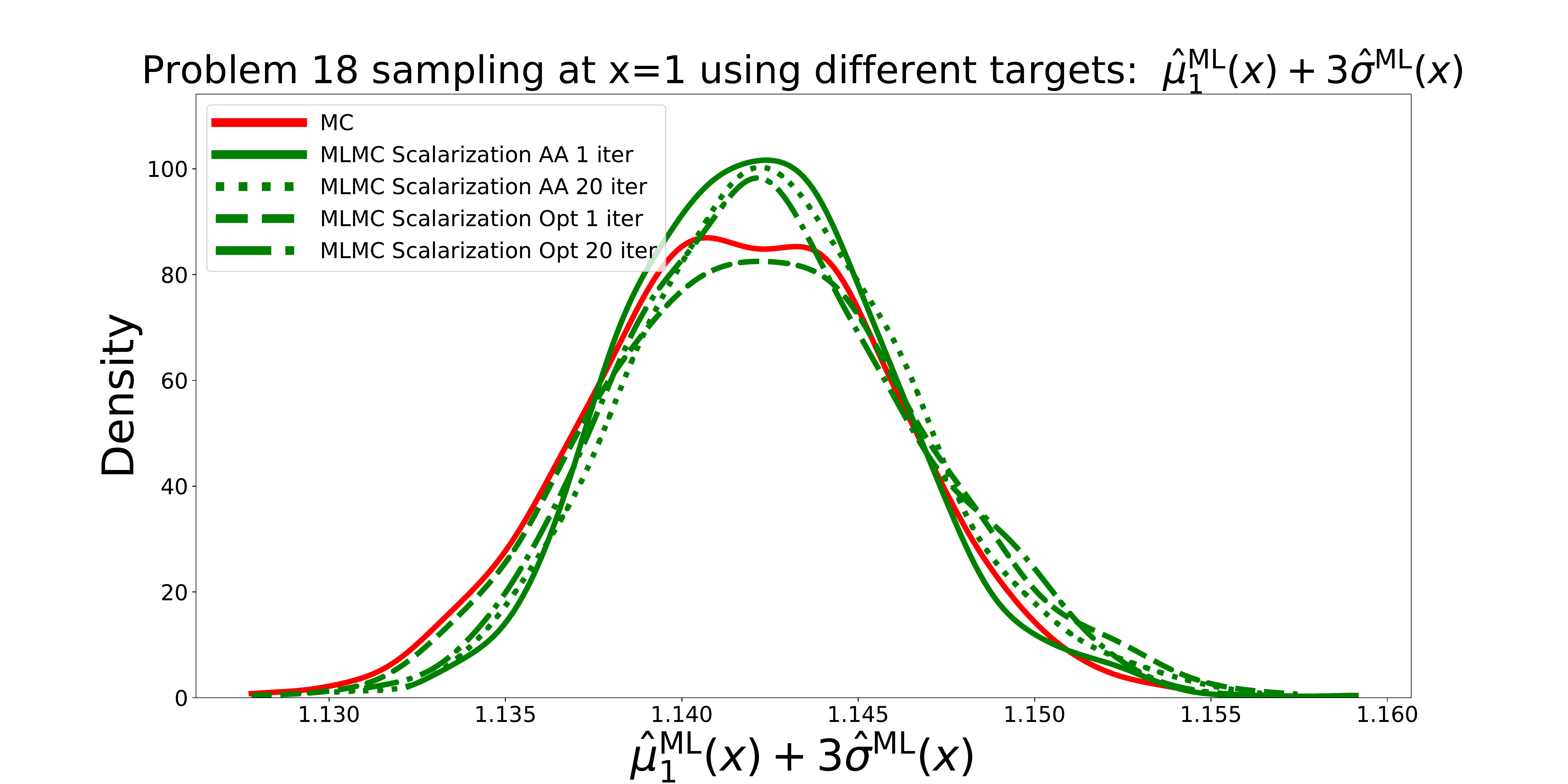}
\includegraphics[width=0.49\textwidth]{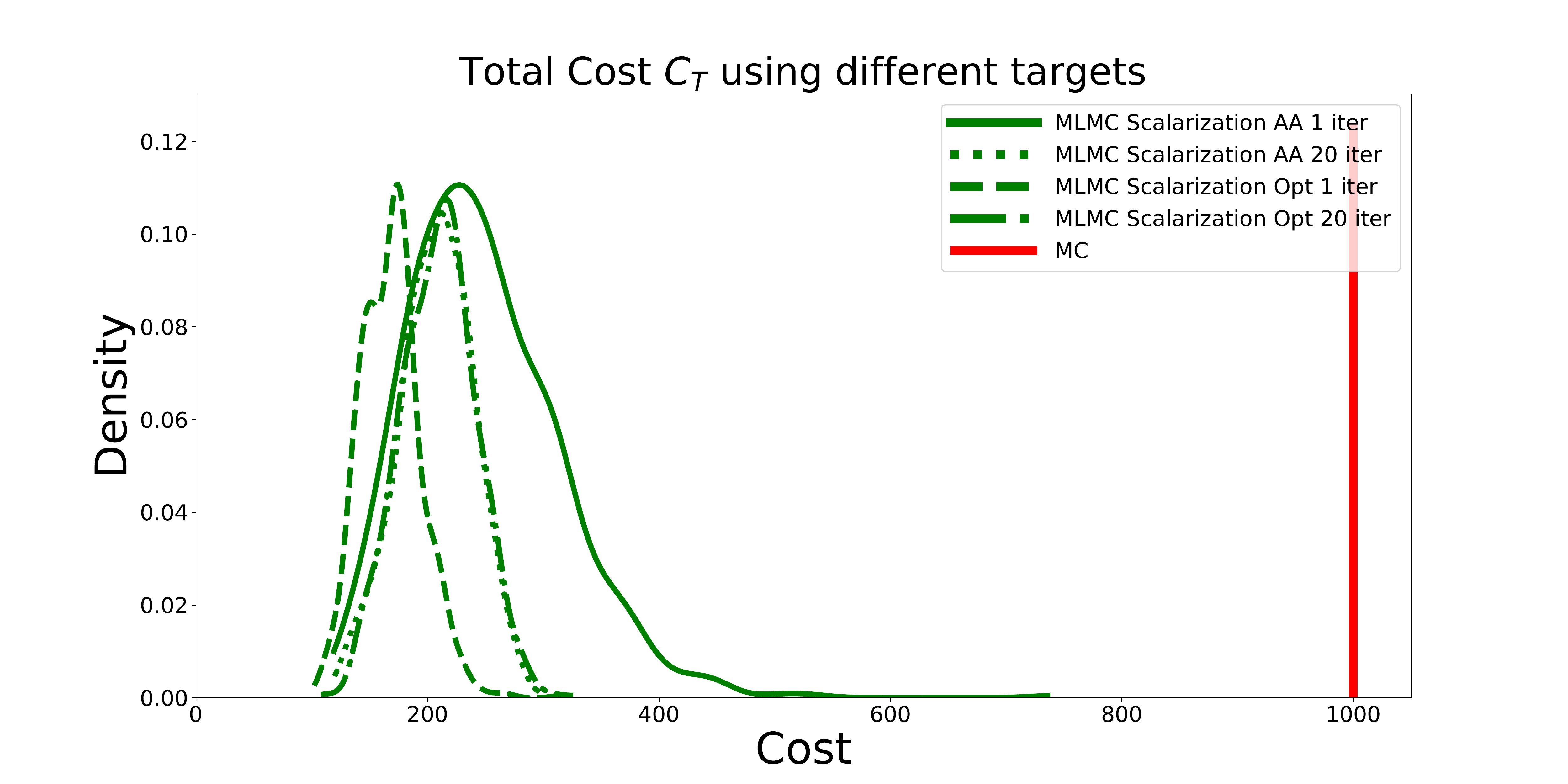}
\caption{\textbf{Scalarization (Bootstrap). }Left: Histogram over $1000$ samples of $\moneapprox[x] + \alpha \sigmaapprox[x]$ for $x=1$ using the scalarization estimator described in Eq.~\eqref{eq:mlmcscalarization} in green in combination with using the Bootstrap approximation described in Section~\ref{sssec:covariancebootstrap} to approximate the covariance term in Eq.~\eqref{eq:mlmcvarianceofscalarization}. We compare to a Monte Carlo reference estimator in red. We compare using an iterative approach for finding the resource allocation (1 iter or 20 iter) and to use only the analytical approximation (AA) or combine it with numerical optimization (Opt). Right: Respective cost for the different estimators.}
\label{fig:4level_meansigma_x1_covbootstrap}
\end{figure}

Therefore, we examine the third case, estimating the covariance term using the relationship between the covariance of mean and variance as described in Section~\ref{sssec:covariancecorrlift}. Fig.~\ref{fig:4level_meansigma_x1_covcorrlift} depicts the results. Again, we see a good match of histograms with the reference solution, with similar computational costs as for the bootstrap approach regarding the samples, but lower computational costs compared to using Pearson's correlation. The hidden computational cost for evaluating the covariance term itself is much cheaper than repeatedly evaluating the bootstrap term. Overall, this strategy seems to be the most efficient since it combines the best features of the two approaches: the low computational overhead of Pearson with the good approximation quality of bootstrap.
\begin{figure}[h] 
\centering
\includegraphics[width=0.49\textwidth]{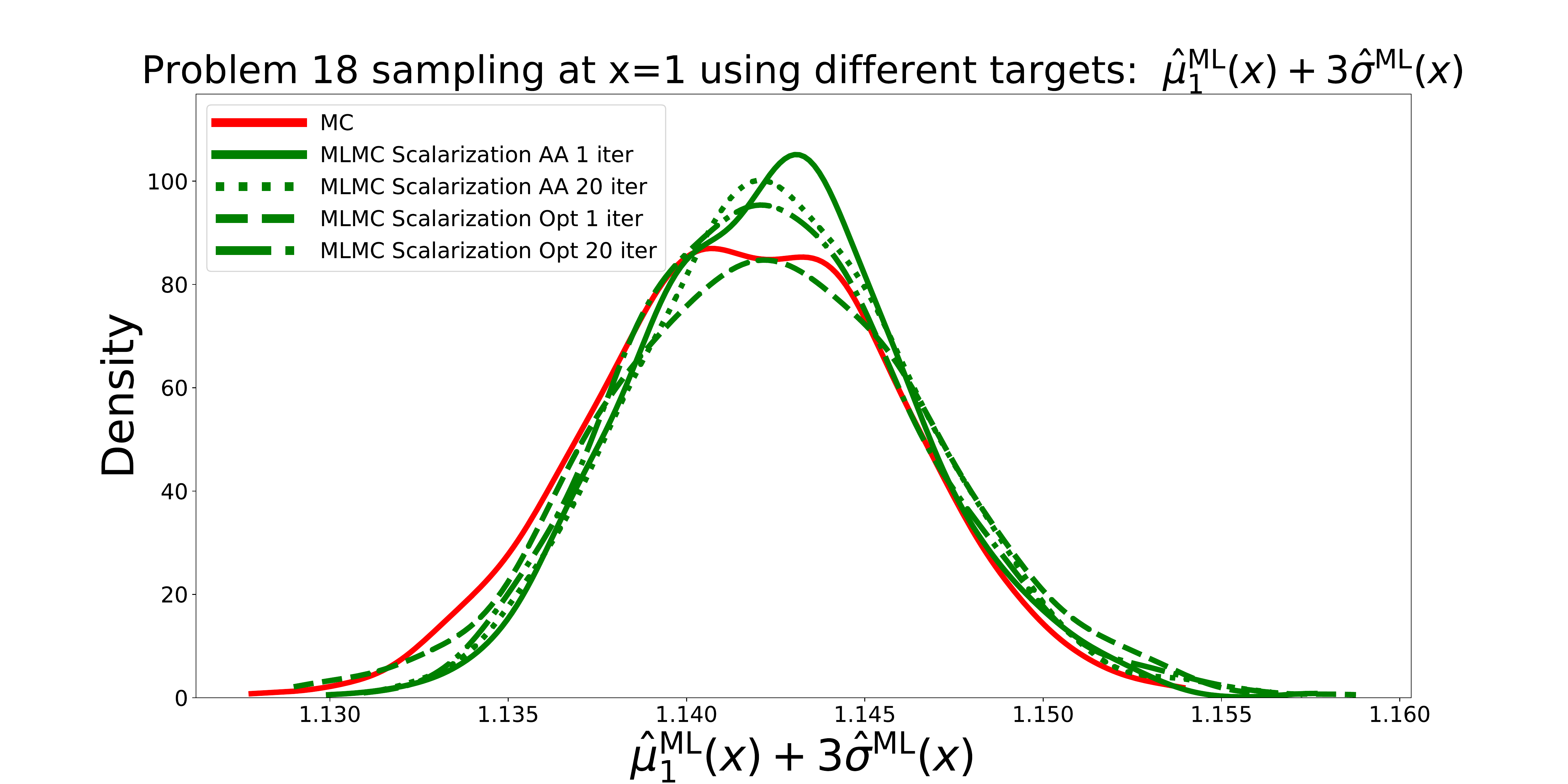}
\includegraphics[width=0.49\textwidth]{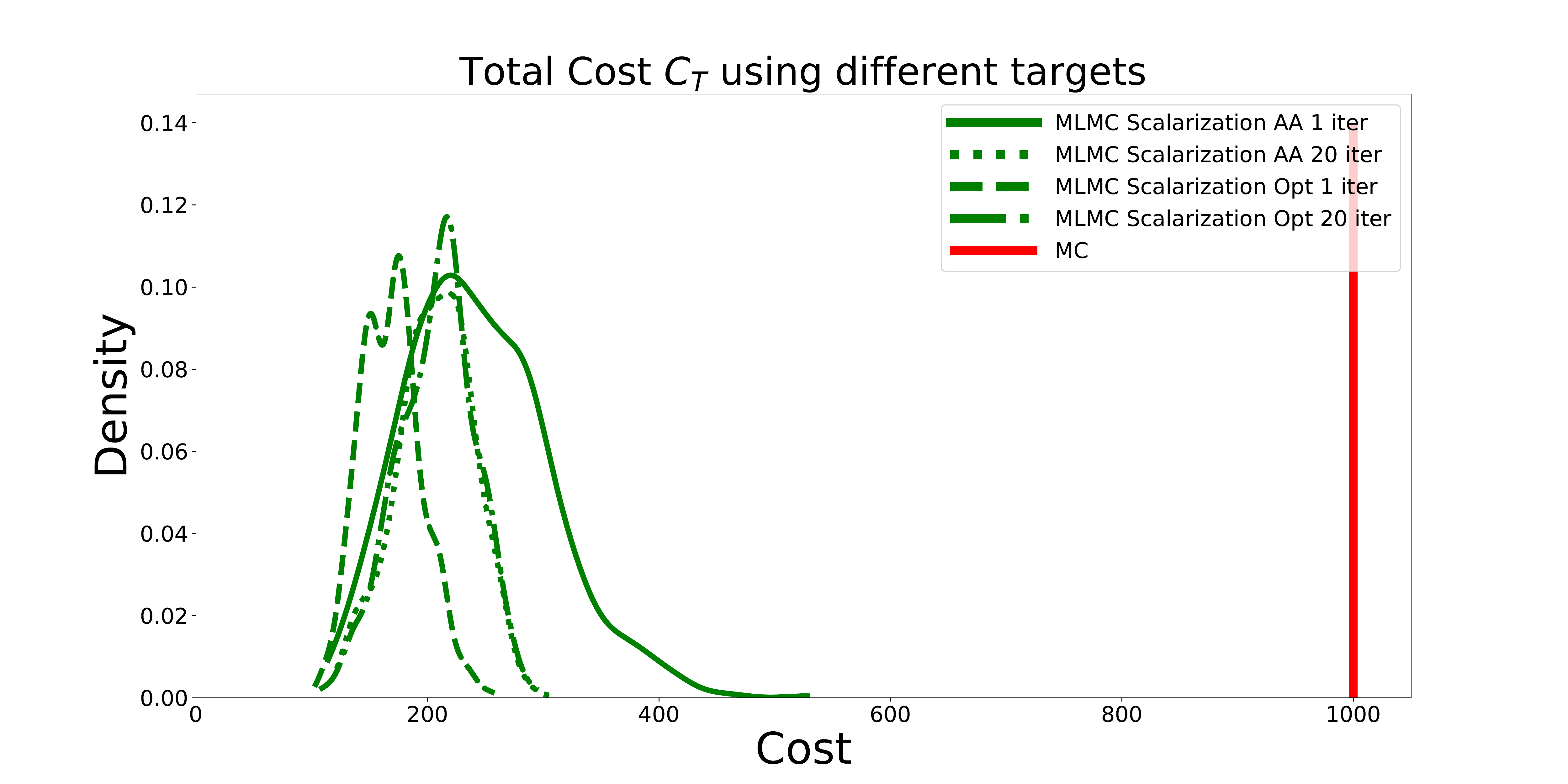}
\caption{\textbf{Scalarization Correlation Lift: }Left: Histogram over $1000$ samples of $\moneapprox[x] + \alpha \sigmaapprox[x]$ for $x=1$ using the scalarization estimator described in Eq.~\eqref{eq:mlmcscalarization} in green in combination with using the correlation lift approximation described in Section~\ref{sssec:covariancecorrlift} to bound the covariance term in Eq.~\eqref{eq:mlmcvarianceofscalarization}. We compare to a Monte Carlo reference estimator in red. We compare using an iterative approach for finding the resource allocation (1 iter or 20 iter) and to use only the analytical approximation (AA) or combine it with numerical optimization (Opt). Right: Respective cost for the different estimators.}
\label{fig:4level_meansigma_x1_covcorrlift}
\end{figure}

Finally, we also add the MLMC estimator targeting the mean in the resource allocation for the evaluation of the scalarization to the plot in Fig.~\ref{fig:4level_meansigma_x1_covcorrlift_withmean}. For comparison, we use again $\epsilon_{\mathbb{E}}^2 \approx 2.2321\text{e-}6$ as in the first case. Here, we use the correlation lift for approximating the scalarization and only show results for 20 iterations plus numerical optimization to clarify the presentation. We directly see the difference that the MLMC estimator targeting the mean greatly under-resolves, which results in a much larger variance at a lower computational cost. It does not devote nearly enough resources to achieve the accuracy we target given the Monte Carlo reference.
\begin{figure}[h] 
\centering
\includegraphics[width=0.49\textwidth]{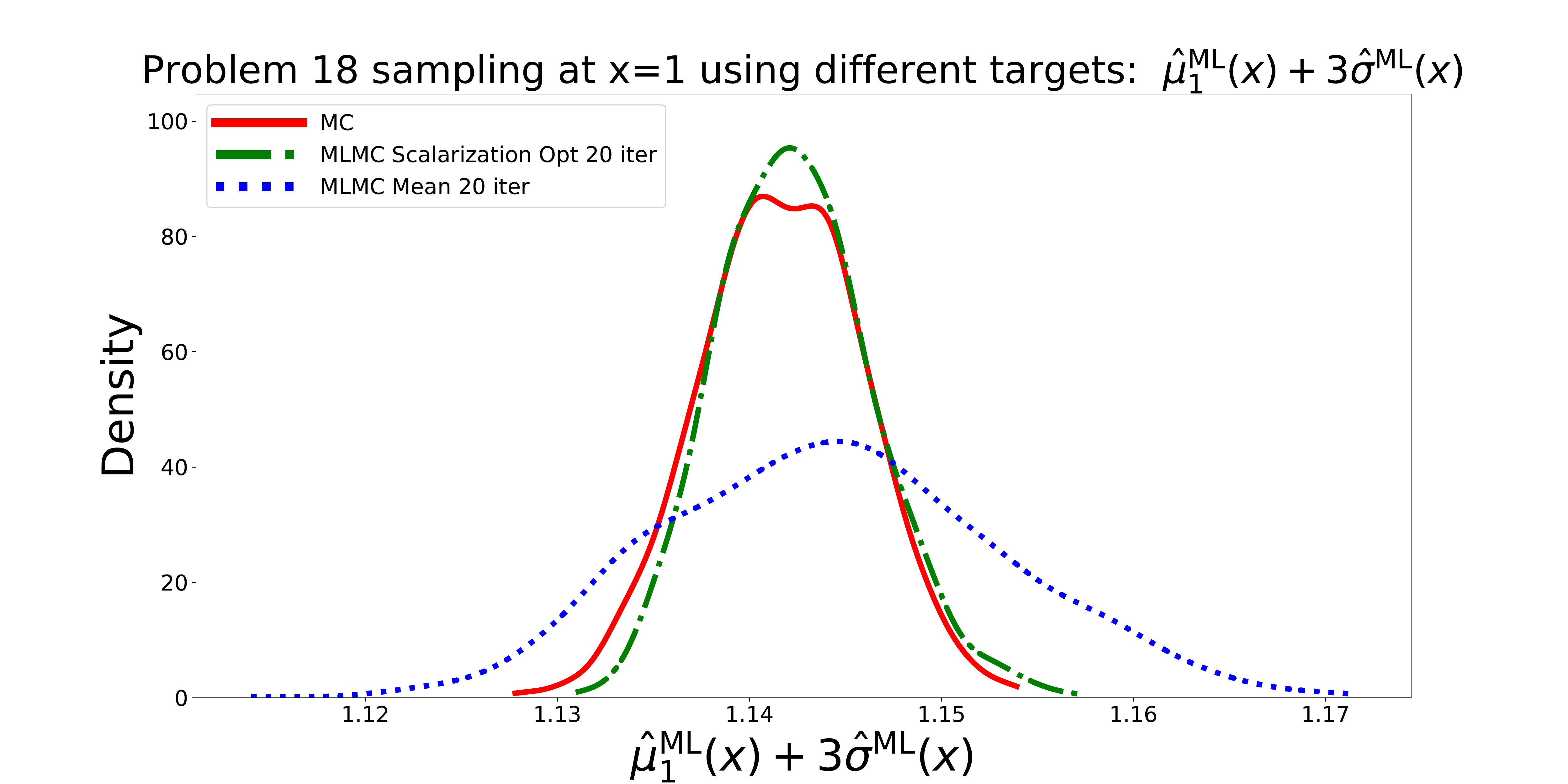}
\includegraphics[width=0.49\textwidth]{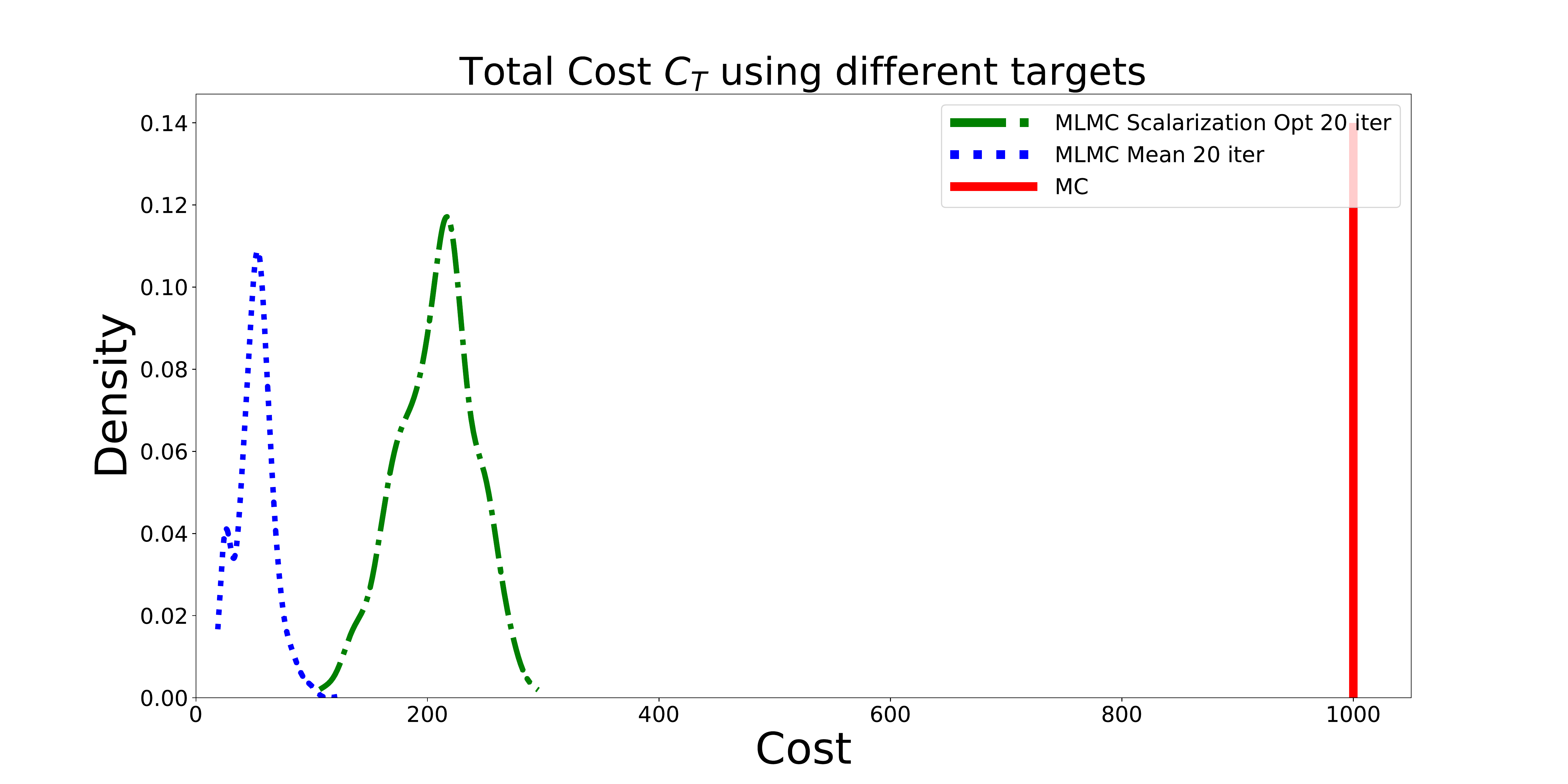}
\caption{\textbf{Scalarization (Correlation Lift including mean). }Left: Histogram over $1000$ samples of $\moneapprox[x] + \alpha \sigmaapprox[x]$ for $x=1$ using the different estimators described in Eq.~\eqref{eq:mlmcmean} in blue and Eq.~\eqref{eq:mlmcscalarization} in green compared to a reference Monte Carlo estimator in red. The covariance term of Eq.~\eqref{eq:mlmcvarianceofscalarization} is approximated using correlation lift approximation described in Section~\ref{sssec:covariancecorrlift}. We use numerical optimization and 20 iterations to find the resource allocation. Right: Respective cost for the different estimators.}
\label{fig:4level_meansigma_x1_covcorrlift_withmean}
\end{figure}

We show a quantitative comparisons for the scalarization case in Tables~\ref{tbl:4level_meansigma_x1_means} and~\ref{tbl:4level_meansigma_x1_variances}. We directly compare the three different approaches for the approximation of the covariance and their algorithmic implementations. First of all, we note again that all approaches improve the result compared to using the standard MLMC estimator which targets the mean. This reinforces the qualitative results in Fig.~\ref{fig:4level_meansigma_x1_covcorrlift_withmean}. Furthermore, while the approximation quality of the newly developed MLMC estimators targeting the scalarization are very similar, we note minor differences in the variance of the estimator. As in the previous results, using only a single iteration seem to match the target variance quite well when using the Pearson approximation, while using 20 iterations over-resolves which results in a smaller variance but larger computational cost. For both Bootstrap and Correlation Lift, we see that using 20 iterations improves the variance approximation. Given these results, the Correlation Lift approximation performs best and matches the target variance the closest (also taking into account computational cost) where we see a small improvement when using the numerical optimization compared to the analytic approximation.

\begin{table}[h]
\begin{center}
\begin{tabular}{| c || c | c | c | c |}
 \hline
& \multicolumn{4}{ c |}{Mean} \\ \hline
& Pearson & Bootstrap & Correlation Lift & Exact \\ \hline
MC  & \multicolumn{3}{ c | }{1.1417} & \multirow{6}{*}{1.1417} \\  \cline{2-4}
MLMC Mean (20 iter)  & 1.1443  & 1.1443 & 1.1443 & \\ 
MLMC Scalarization AA (1 iter)  & 1.1422 & 1.1423 & 1.1425 & \\ 
MLMC Scalarization AA (20 iter)  & 1.1424 & 1.1426 & 1.1426 &\\
MLMC Scalarization Opt (1 iter)  & 1.1423 & 1.1423  & 1.1423 & \\
MLMC Scalarization Opt (20 iter)  & 1.1423 &  1.1424 & 1.1424 & \\ \hline
\end{tabular}
\end{center}
\caption{Expectations from the histograms of the different approaches in Fig.~\ref{fig:4level_meansigma_x1_covpearson},~\ref{fig:4level_meansigma_x1_covbootstrap},~\ref{fig:4level_meansigma_x1_covcorrlift} and Fig~\ref{fig:4level_meansigma_x1_covcorrlift_withmean}. The column labeled \textit{Exact} shows the target value for the expectation.}
\label{tbl:4level_meansigma_x1_means}
\end{table}
\begin{table}[h]
\begin{center}
\begin{tabular}{| c || c | c | c | c |}
 \hline
& \multicolumn{4}{ c |}{Variance} \\ \hline
& Pearson & Bootstrap & Correlation Lift & Exact \\ \hline
MC  & \multicolumn{3}{ c |}{1.7739e-5} &  \multirow{6}{*}{1.6175e-5} \\ \cline{2-4}
MLMC Mean (20 iter)  & 1.1259e-5  & 7.7175e-5 & 7.7175e-5 & \\ 
MLMC Scalarization AA (1 iter)  &  1.6047e-5  & 1.3927e-5   & 1.4802e-5 & \\ 
MLMC Scalarization AA (20 iter)  &  1.1947e-5  &  1.5989e-5  & 1.5708e-5 & \\
MLMC Scalarization Opt (1 iter)   & 1.6534e-5 & 2.0454e-5   & 2.0591e-5 & \\
MLMC Scalarization Opt (20 iter)  & 1.1998e-5  & 1.6989e-5   & 1.6451e-5 &  \\ \hline
\end{tabular}
\end{center}
\caption{Variances from the histograms of the different approaches in Fig.~\ref{fig:4level_meansigma_x1_covpearson},~\ref{fig:4level_meansigma_x1_covbootstrap},~\ref{fig:4level_meansigma_x1_covcorrlift} and Fig~\ref{fig:4level_meansigma_x1_covcorrlift_withmean}. The column labeled \textit{Exact} shows the target value for the variance.}
\label{tbl:4level_meansigma_x1_variances}
\end{table}

We end this results section for sampling by looking at the samples' allocation profiles over the levels. We show the average resource allocation, i.e. the number of samples, for each level in Table~\ref{tbl:4level_sample_allocation} where we normalize with respect to the number of samples used at the finest level. The table shows the allocation for the different statistics (variance, standard deviation and scalarization) in which we compare the standard MLMC approach targeting the mean to our presented MLMC approaches targeting the respective statistics. Our major observation using this table is that we cannot simply scale the resource allocation of the standard MLMC targeting the mean for other statistics, but rather we see a redistribution of samples. Mainly, we see an increase in samples on the second level, while samples on the first and third level decrease. Hence, adapting the MLMC target to the statistic is critical and not avoidable by just scaling the standard MLMC estimator for the mean to vary its precision.
\begin{table}[h]
\begin{center}
\begin{tabular}{| c | c || c | c | c | c |}
 \hline
Statistic & Estimator Target & Level 1 & Level 2 & Level 3 & Level 4 \\ \hline
\multirow{2}{*}{Variance} & MLMC  Mean & 445.47 &  25.50 &  3.37 &   1\\ 
& MLMC Variance & 386.51 & 42.82 & 3.32 & 1\\ \hline
\multirow{2}{*}{Sigma} & MLMC  Mean & 452.31 & 25.92 & 3.43 & 1\\ 
& MLMC Sigma & 404.12 & 43.93 & 3.38 & 1\\ \hline
\multirow{2}{*}{Scalarization} & MLMC  Mean & 462.86 & 26.81 & 3.58 & 1\\ 
& MLMC Scalarization & 359.52 & 32.42 & 3.27 & 1\\ \hline
\end{tabular}
\end{center}
\caption{Averaged and normalized sample profiles for different statistics. For each sample profile we compare with respect to the standard MLMC estimator targeting the mean. We used 20 iterations for all approaches and numerical optimization for the MLMC estimators targeting variance, standard deviation and scalarization. It is possible to observe that for all cases a lower number of samples is used at the coarsest level, while a larger number of samples is used for the second level.}
\label{tbl:4level_sample_allocation}
\end{table}

With these results, we showed that using the correct MLMC estimators targeted at the OUU goal of interest is crucial to getting the accuracy expected at the lowest computational cost. We clearly note that using the MLMC estimator for the mean is not sufficient if we want to reach a certain target. The approaches that synchronize the OUU goal and the allocation target, on the other hand, are able to achieve the desired accuracy. 
Regarding the algorithmic choices, we consistently see the best results when using an iterative approach coupled with numerical optimization for finding the resource allocation. We also showed that the choice for the approximation of the covariance in the scalarization is crucial for performance. While the Pearson correlation might be a convenient and simple solution, it over-resolves the estimator, which results in unnecessary computational cost. While bootstrapping shows good results, it adds an additional computational cost for each evaluation due to the resampling. We see a good balance when using the correlation lift approach of Section~\ref{sssec:covariancecorrlift}. 

In the next section, we use these results when we move to optimization under uncertainty, where we find the optimal resource allocation in each optimization step. Based on the previous results, we restrict our algorithmic options to clarify the presentation. We restrict ourselves to using 20 iterations coupled with numerical optimization. We still compare the three different approaches for approximating the covariance term for the scalarization.

\subsubsection{Optimization under uncertainty}
We are interested in solving the following two optimizations problem. We consider $\Rmu{f}$ as a first test case and get
\begin{equation}
\label{eq:problem18ouu_mean}
\begin{split}
\min_x \,&\mathbb{E}[f_{4}(x, \xi)], \\
\text{s.t. } &f_{det}(x) \geq g(x).
\end{split}
\end{equation}
The second case considers the scalarization of mean and standard deviation, $\Rscal{f}$, to show our new developments 
\begin{equation}
\label{eq:problem18ouu_scalarization}
\begin{split}
\min_x \,&\mathbb{E}[f_{4}(x, \xi)] + 3 \sigma[f_{4}(x, \xi)], \\
\text{s.t. } &f_{det}(x) \geq g(x),
\end{split}
\end{equation}
where we pick $\alpha = 3$, a very common choice in the robust optimization field.

To get an intuition about the function, we visualize it in Fig.~\ref{fig:problem18ouu} for both test cases. We see the effect of the scalarization, which pushes the function up by its standard deviation. For both test cases, we proceed as follows for our numerical test. For the optimization runs, we begin at the initial position $x=0.25$. Next, we run 25 independent runs for each case using the different estimators presented in this work. In the resulting plots, we plot the 25 final designs found for each optimization run. We additionally plot the average computational cost for a single iteration by keeping track of the computational cost over the optimization process. The MLMC estimators are using the four levels as described in Eq.~\eqref{eq:problem18_4levels}. By comparing to a Monte Carlo reference solution that uses 1000 samples on the finest level $f_4$, the corresponding $\epsilon^2$ is obtained. Note here that we do not use a convergence criterion but stop the optimization after 100 iterations. Therefore, as in the sampling case, we expect the MLMC estimators to perform similarly to the Monte Carlo estimator since it is targeting the same accuracy in $\epsilon^2$. We are not too concerned about the final design since the optimization problem itself is very simple.
\begin{figure}
\centering
\includegraphics[width=0.5\textwidth]{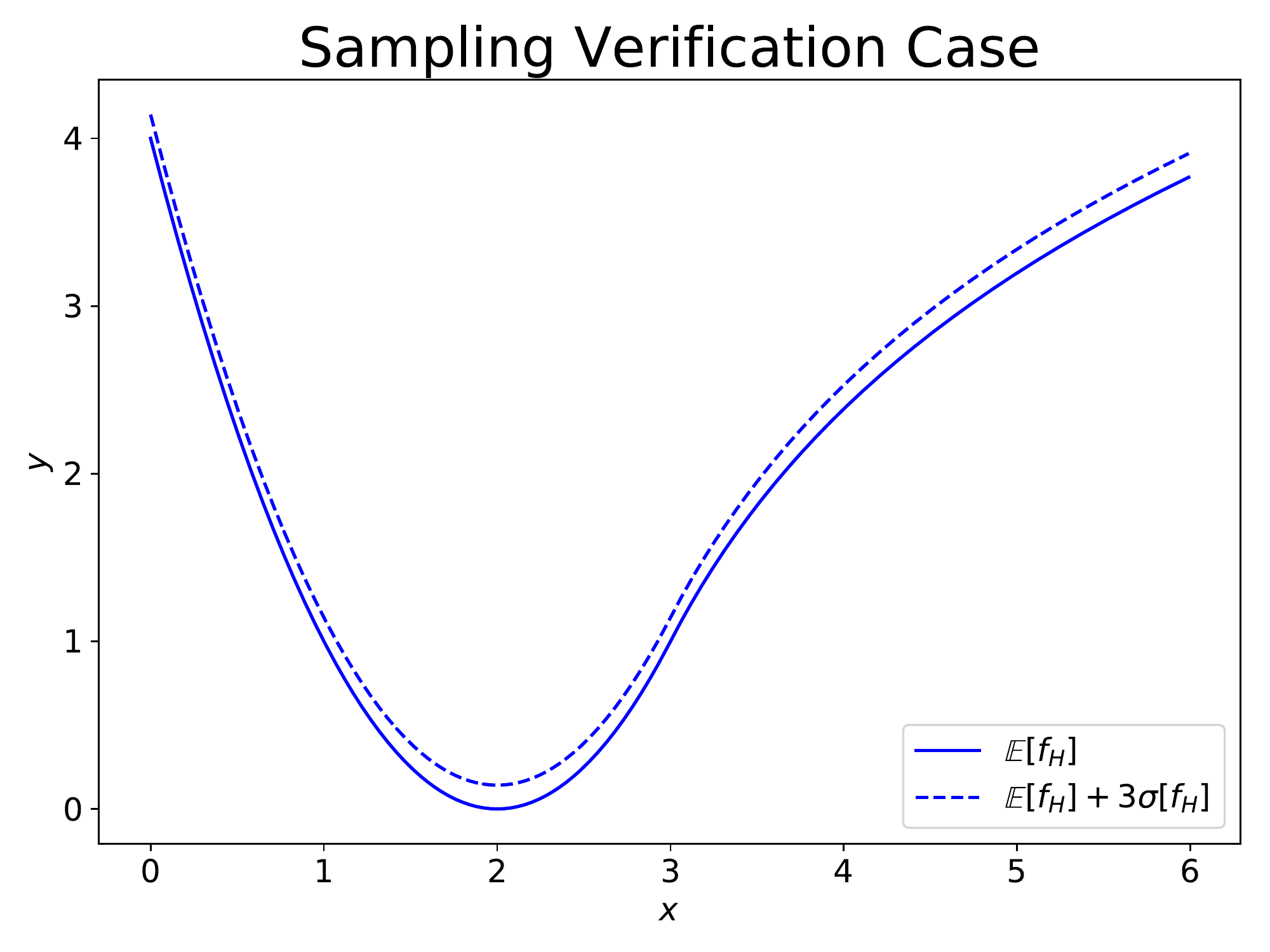}
\caption{Functions for expected value $\mathbb{E}[f_H]$ (solid) and scalarization $\mathbb{E}[f_H] + 3 \sigma[f_H]$ (dashed) of problem 18.}
\label{fig:problem18ouu}
\end{figure}

In the first case, we consider the expectation where we solve the optimization problem as given in Eq.~\eqref{eq:problem18ouu_mean} targeting $\epsilon_{\mathbb{E}}^2 \approx 2.2321\text{e-}6$. As we can see, we have a very nice match between the final designs found by the MLMC mean in blue crosses and the MC designs in red dots. If we look at the average computational cost for a single iteration, we see the big advantage of using MLMC methods again. In this case, we are able to reduce the computational cost by about a factor of 20 for this specific choice of computational cost. 
\begin{figure}
\centering
\includegraphics[width=0.49\textwidth]{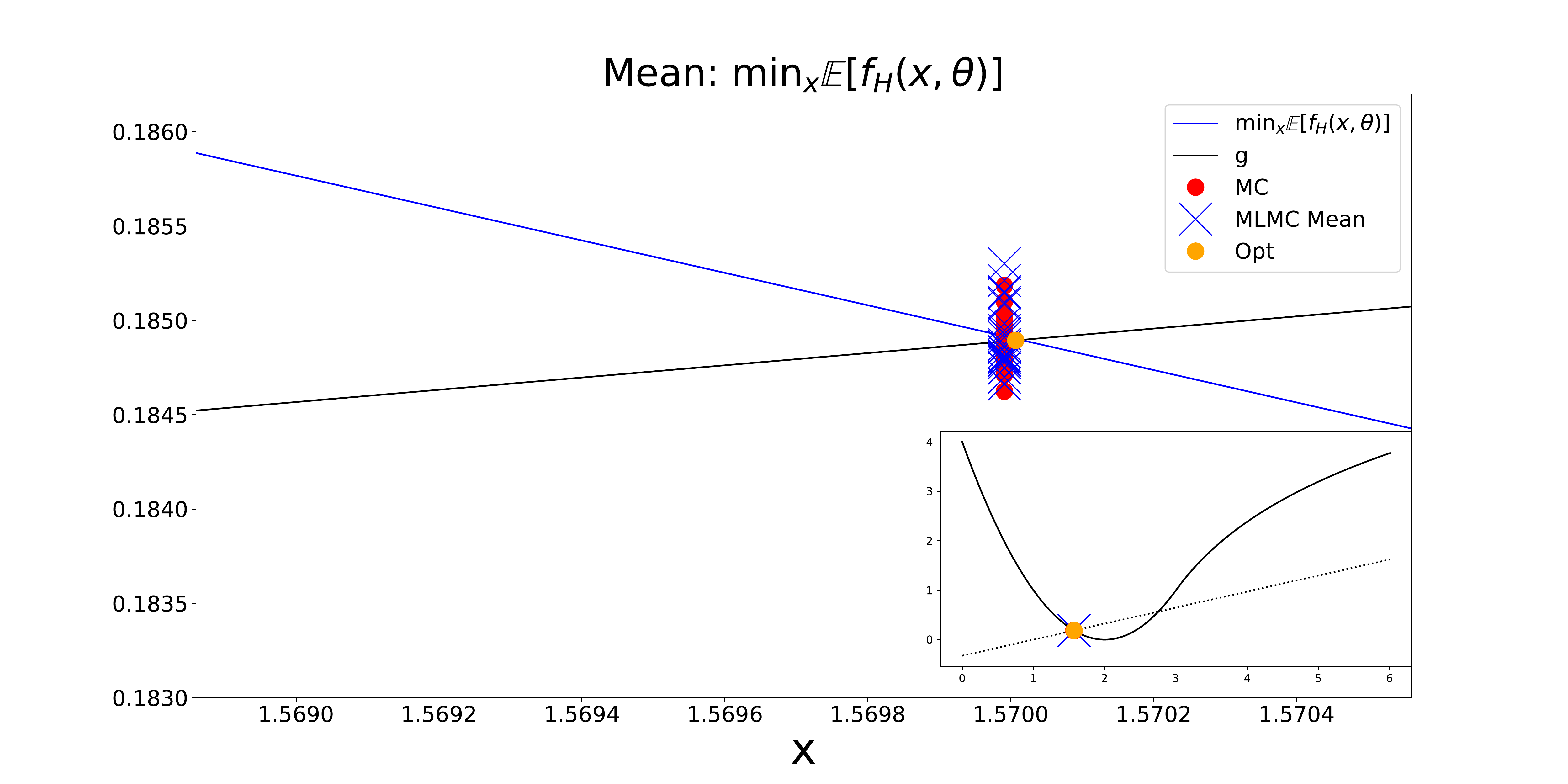}
\includegraphics[width=0.49\textwidth]{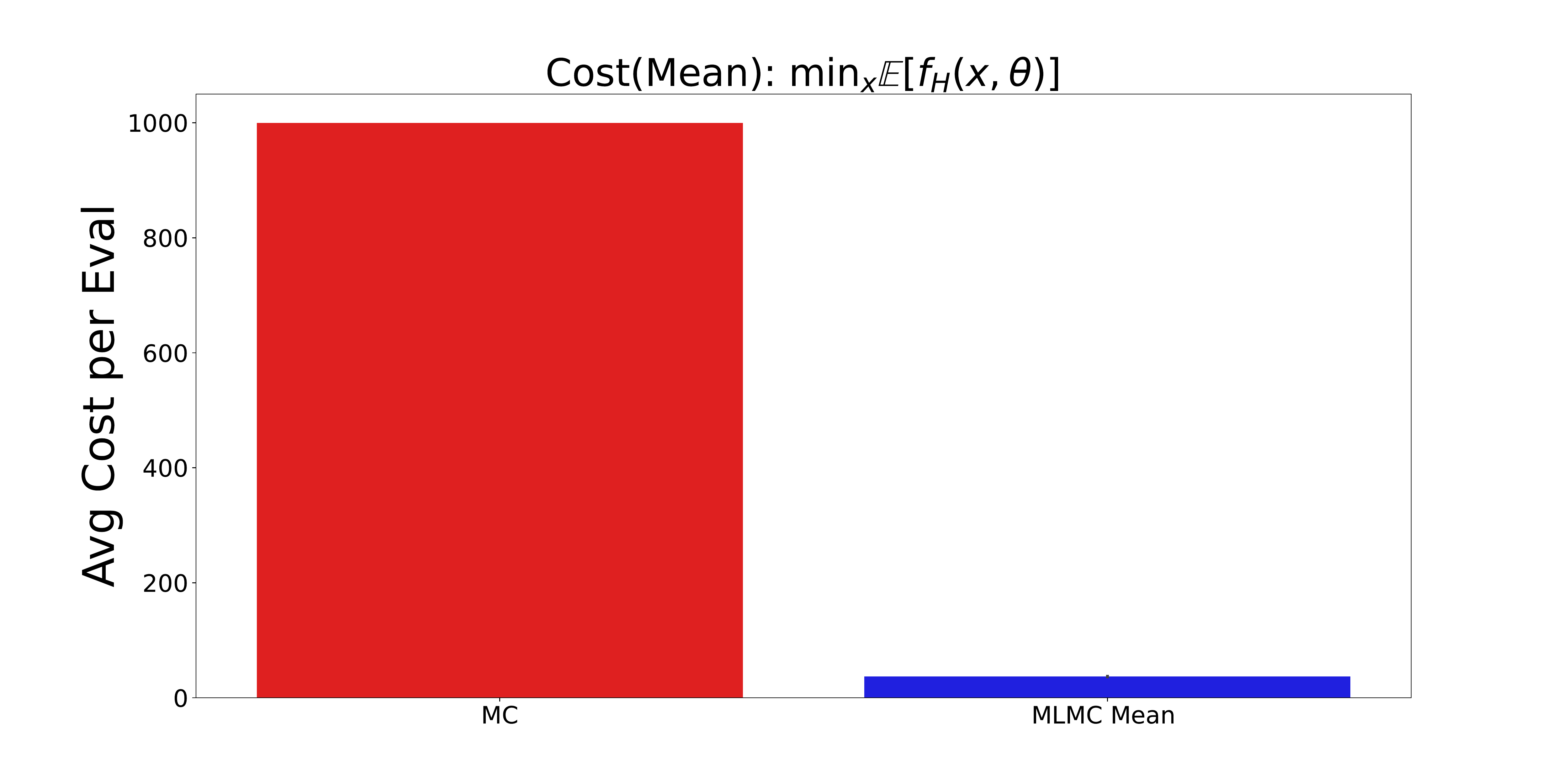}
\caption{\textbf{Mean.} Optimization results for 25 individual runs after 100 iterations. The blue line shows the objective function while the black line shows the constraint. The small figure in the bottom right shows the full function while we enlarge the area around the optimal design. Each marker corresponds to the final design found by the individual run. We show the results when using a standard Monte Carlo estimator using $1000$ samples using red dots and compare to the final design found using a MLMC estimator targeting the mean as blue x. The yellow dot shows the optimal design.}
\label{fig:4level_ouu_mean}
\end{figure}

We see the effect of the choice of the MLMC estimator when moving to the scalarization case as given in Eq.~\eqref{eq:problem18ouu_scalarization}, where we use $\epsilon_{\mathbb{E}}^2 \approx 2.2321\text{e-}6$ and $\epsilon_{\mathbb{\mathbb{E} + \alpha \sigma}}^2 \approx 1.6175\text{e-}5$ for the respective approach, in Fig.~\ref{fig:4level_ouu_meansigma}. The results for the various options to approximate the covariance are shown in the three plots, using the Pearson upper bound in Fig.~\ref{fig:4level_ouu_meansigma_Pearson},  the correlation lift in Fig.~\ref{fig:4level_ouu_meansigma_CorrLift} or the Bootstrap approximation in Fig.~\ref{fig:4level_ouu_meansigma_Bootstrap}. Similarly to the sampling study, we see a much larger variance in the optimal designs found by the MLMC approach targeting the mean, shown as blue crosses. As an effect, we see a bias in the distribution of the points targeting the mean, which comes from the larger noise introduced by these samples and taken into account by the optimization process of SNOWPAC. However, we see a good match between our newly developed scalarization estimators and the Monte Carlo reference solution in red. This is also reflected in the average computational cost for a single evaluation in Fig.~\ref{fig:4level_ouu_meansigma_Cost}. For the MLMC approach targeting the mean (blue bar), we notice a very small average evaluation cost. The estimator is underresolved by not using enough samples, which leads to a larger estimator variance. For the three MLMC strategies targeting the scalarization (green bars), we also see a cost reduction compared to the reference Monte Carlo, though less than for MLMC targeting the mean. This, however, matches the variance of the reference solution. Comparing the costs of the three strategies of approximating the covariance, we again see a similar picture: we get the highest cost for the Pearson upper bound since it is indeed an upper bound and conservative approach; we see slightly less cost for the correlation lift approximation and the bootstrap approximation, while we have to take into account additional computational cost for the bootstrap approach. Hence, we prefer the correlation lift as the most cost efficient strategy.

\begin{figure}
\centering
\begin{subfigure}{0.49\textwidth}
	\includegraphics[width=\textwidth]{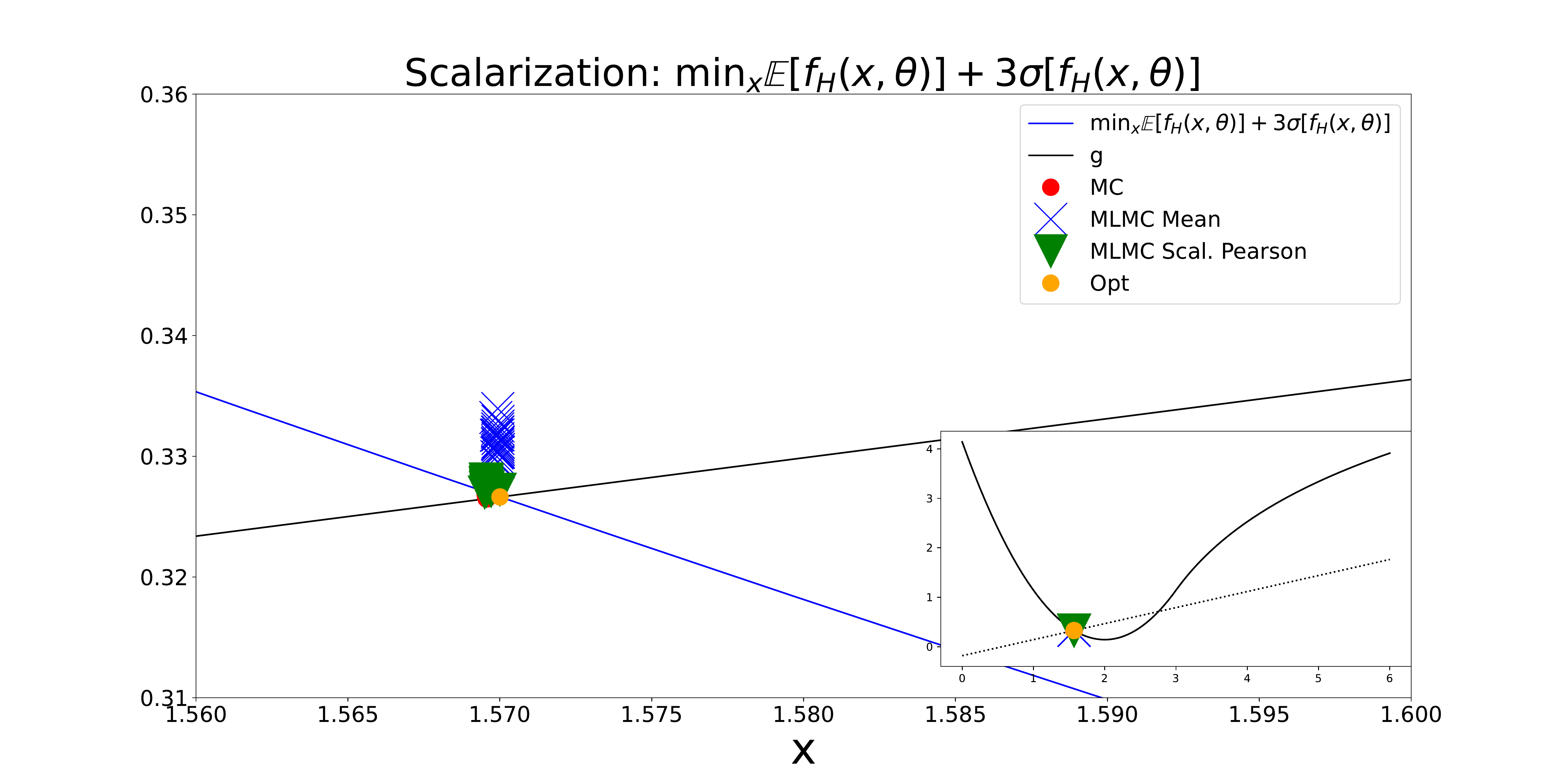}
    \caption{We use the Pearson upper bound as described in Section~\ref{sssec:covarianceupperbound} to bound the covariance term of Eq.~\eqref{eq:mlmcvarianceofscalarization}.}
    \label{fig:4level_ouu_meansigma_Pearson}
\end{subfigure}
\hfill
\begin{subfigure}{0.49\textwidth}
	\includegraphics[width=\textwidth]{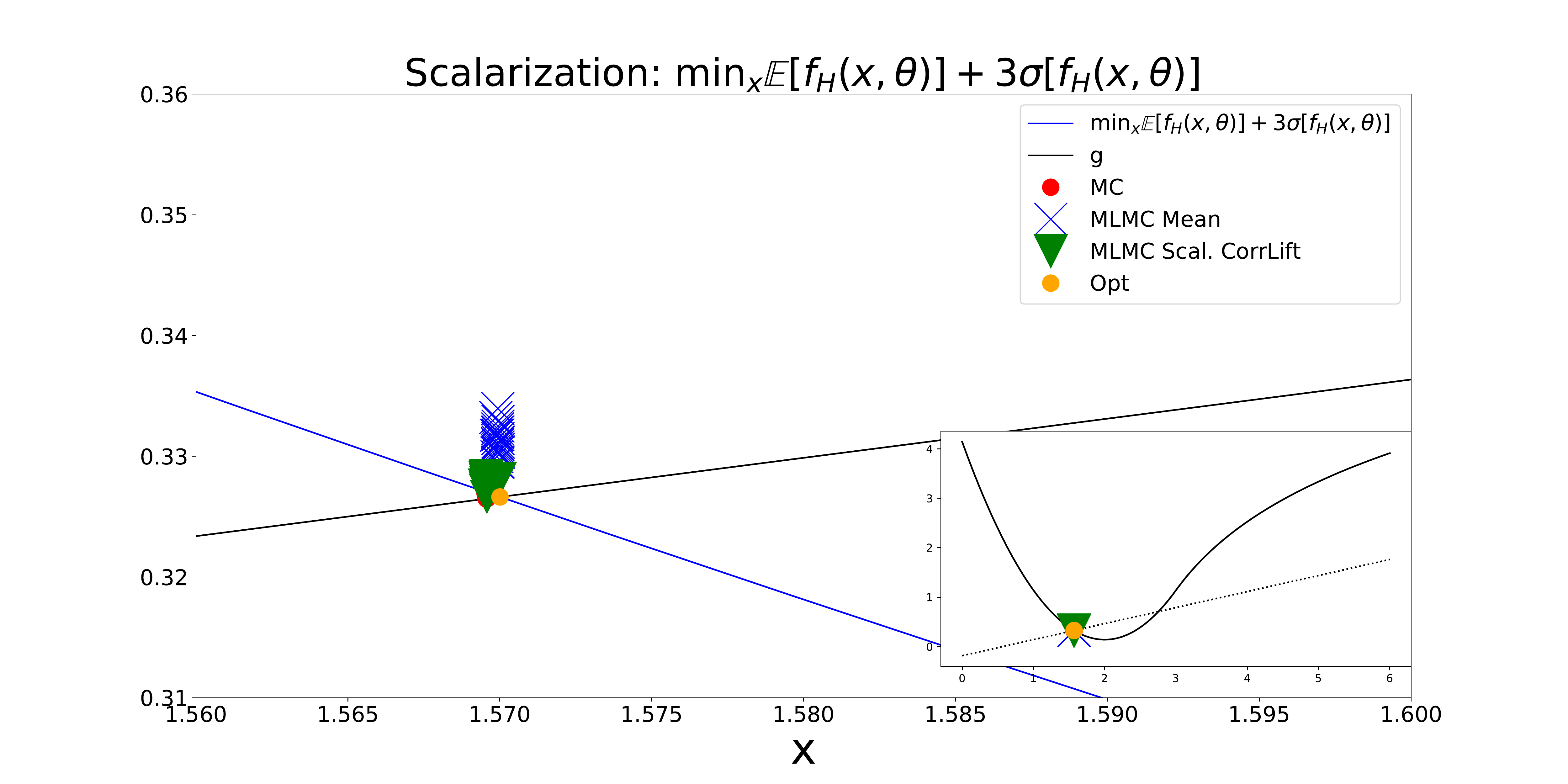}
    \caption{We use the Correlation lift approximation as described in Section~\ref{sssec:covariancecorrlift} to approximate the covariance term of Eq.~\eqref{eq:mlmcvarianceofscalarization}.}
    \label{fig:4level_ouu_meansigma_CorrLift}
\end{subfigure}
\hfill
\begin{subfigure}{0.49\textwidth}
\includegraphics[width=\textwidth]{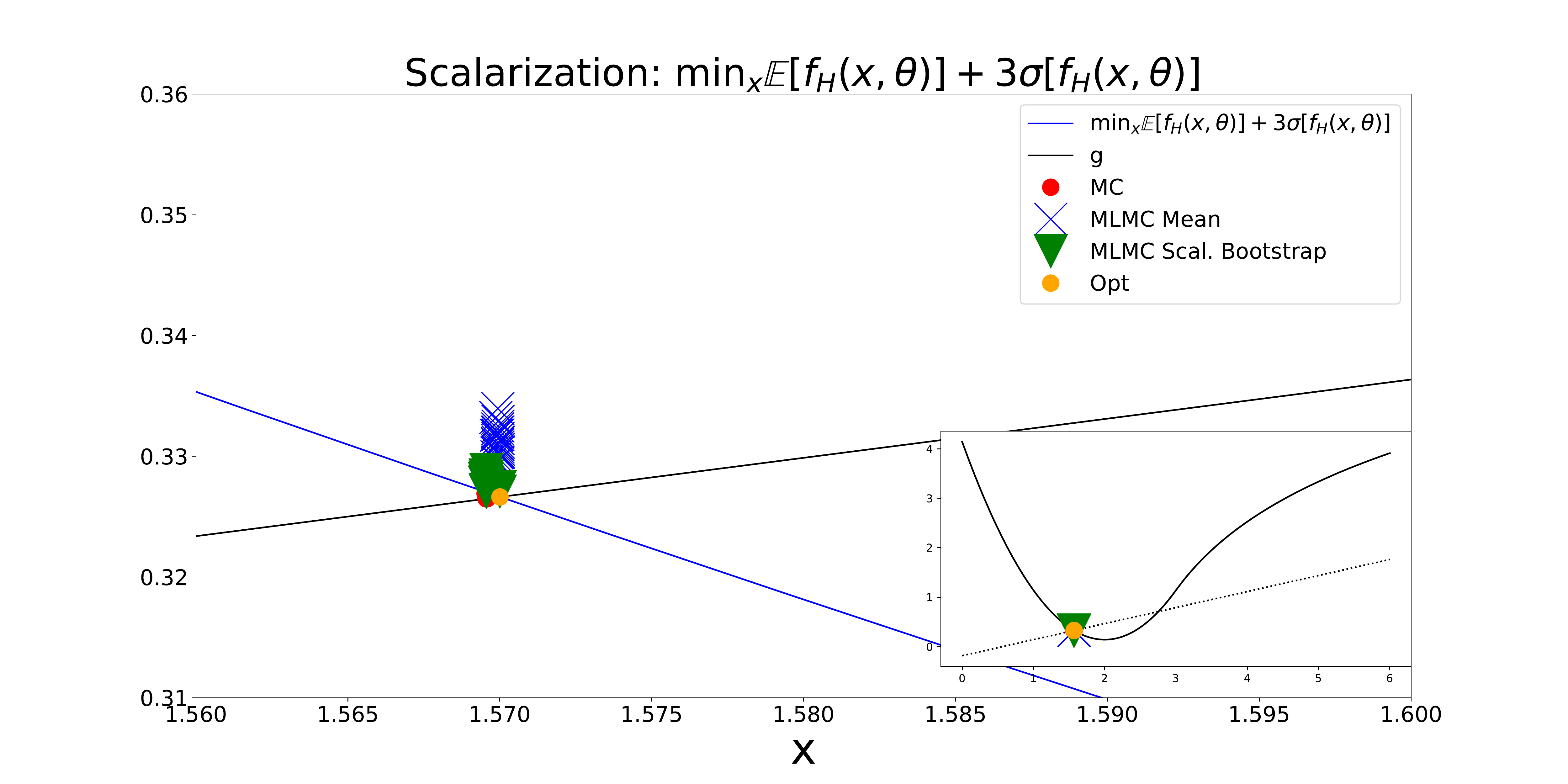}
    \caption{We use the Bootstrap approximation as described in Section~\ref{sssec:covariancebootstrap} to approximate the covariance term of Eq.~\eqref{eq:mlmcvarianceofscalarization}.}
    \label{fig:4level_ouu_meansigma_Bootstrap}
\end{subfigure}
\begin{subfigure}{0.49\textwidth}
	\includegraphics[width=\textwidth]{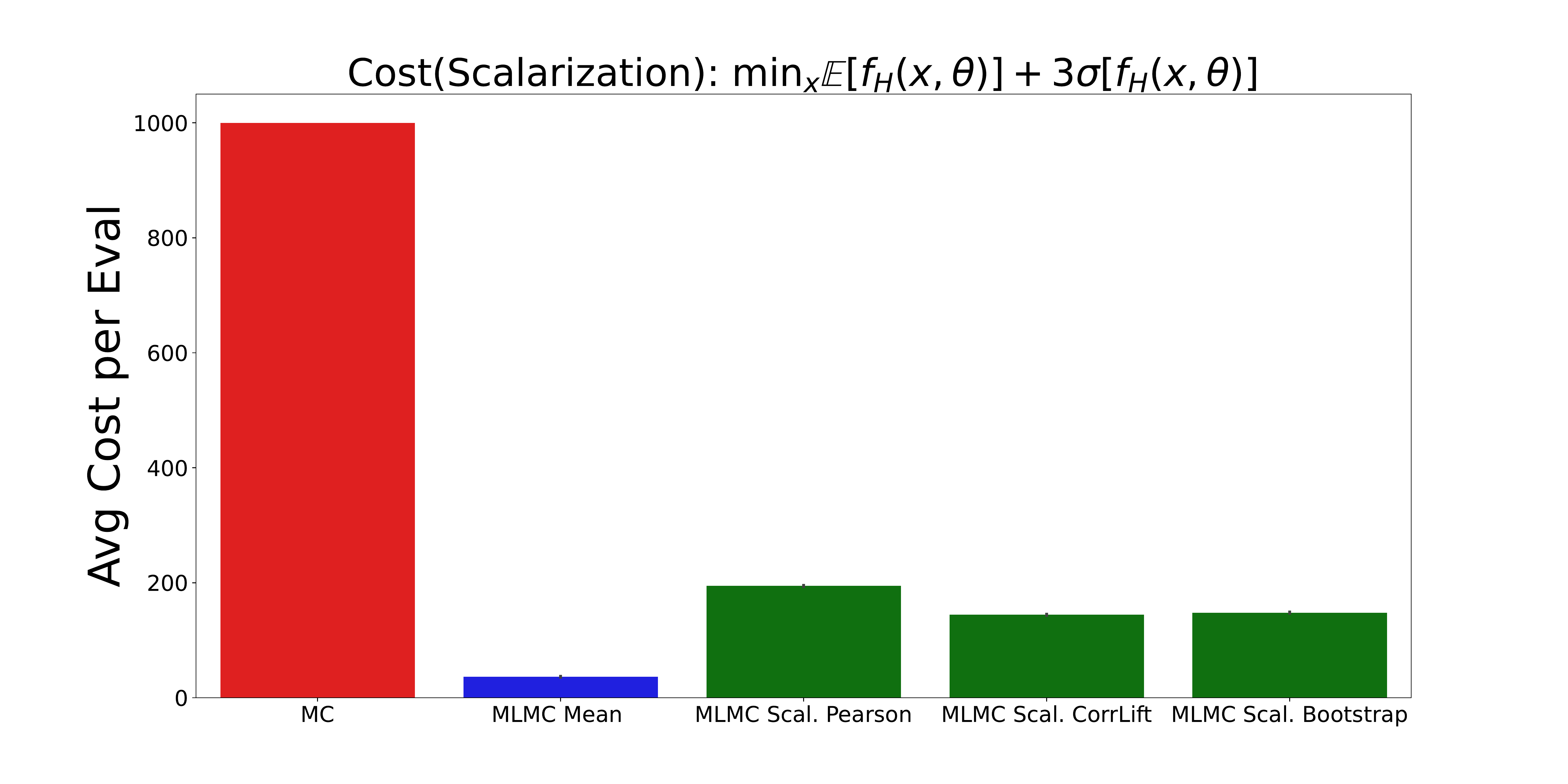}
    \caption{Cost average for a single evaluation for the different approaches. The cost are averaged over 25 optimization runs with 100 iterations each.}
    \label{fig:4level_ouu_meansigma_Cost}
\end{subfigure}
\caption{\textbf{Scalarization covariance approximation and cost.} Optimization results for 25 individual runs after 100 iterations. In all figures, The blue line shows the objective function while the black line shows the constraint. The small figure in the top left shows the full function while we enlarge the area around the optimal design. Each marker corresponds to the final design found by the individual run. We show the results when using a standard Monte Carlo estimator using $1000$ samples as red dot and compare to the final design found using a MLMC estimator targeting the mean as blue x and a MLMC estimator targeting the scalarization as green triangle. The yellow dot shows the optimal design. In the first three figures, we compare the different approximation strategies for the covariance term and the fourth figure shows a cost comparison for all three approaches.}
\label{fig:4level_ouu_meansigma}
\end{figure}

Finally, we look at quantitative metrics to measure the approximation quality of the new approaches in Table~\ref{tbl:problem18_results}. We use the metrics as described in Section~\ref{ssec:distancemetrics} to measure the distance to the final designs using the reference Monte Carlo solution. We clearly see that our new MLMC approach targeting scalarization is consistently closer to the reference designs compared to the standard MLMC approach targeting the mean. We mark the smallest value in each column in bold. In this case, the Pearson upper bound seems like a good conservative choice.
\begin{table}[h]
\centering
\begin{tabular}{c | c | c | c | c }
Method & $\mathbb{X}$ & $\text{Dist}^{\mathbb{X}}_{\text{C}}$  & $\text{Dist}^{\mathbb{X}}_\sigma$ & $\text{Dist}^{\mathbb{X}}_{\text{RMSdev}}$ \\ \hline
MLMC Mean (20 iter) & $\mathbb{E}$ & 4.1683e-3 & 7.2126e-4 & 4.3140e-3\\ \hline
MLMC Scalarization (20 iter, Pearson) & $\mathbb{\mathbb{E} + \alpha \sigma}$ & \textbf{5.6635e-4} & 1.1207e-4 & \textbf{5.4988e-4}\\ \hline
MLMC Scalarization (20 iter, CorrLift) & $\mathbb{\mathbb{E} + \alpha \sigma}$ & 7.0536e-4  & 5.9096e-5 & 7.4629e-4\\ \hline
MLMC Scalarization (20 iter, Bootstrap) & $\mathbb{\mathbb{E} + \alpha \sigma}$ & 8.0911e-4  & \textbf{1.8353e-5} & 8.5709e-4\\ \hline
\end{tabular}
\caption{Quantitative comparison of the distance of final designs found to the Monte Carlo reference solution. Each row shows a different approach. Each column represent a different metric, with the second column showing the target of the estimator.}
\label{tbl:problem18_results}
\end{table}

To summarize this section, we showed the effectiveness of the newly developed estimators for sampling and optimization under uncertainty. We should adapt the MLMC estimator to the given formulation of the sampling or optimization problem. We discussed the different algorithmic choices of adapting the resource allocation iteratively and using a numerical optimization to further improve the approximation quality. We saw the crucial choice of approximating the covariance for the scalarization case, which motivated our development of these different estimators. In the next section, we move to a more challenging optimization problem, the constrained Rosenbrock function, where we design a three-level test case.

\subsection{Rosenbrock}
\label{ssec:rosenbrock}
We employ the constrained 2-D Rosenbrock optimization problem as presented in~\cite{Rosenbrock} as our second case since it is a common benchmark in the optimization community since it challenges many solvers. In its deterministic form it is given by 
\begin{equation}
\begin{split}
\min_{x_1, x_2} \,&f(x_1, x_2) = 100 (x_2 - x_1^2)^2 + (x_1 - 1)^2, \\
\text{s.t. } &c_1: (x_1 - 1)^3 + 1 - x_2 \leq 0, \\
& c_2: (x_1 + x_2) - 2 \leq 0.
\end{split}
\end{equation}
We visualize the problem for $x_1 \in [-1.5, 1.5]$ and $x_2 \in [-0.5, 2.5]$ in Fig.~\ref{fig:Rosenbrock}. While the unconstrained problem has a single global minimum at $(1, 1)$, this constrained problem has a local minimum at $(0, 0)$, while it has a global minimum at $(1, 1)$. Due to small gradients, it is challenging for optimization algorithms to find the global minimum.
\begin{figure}[h]
\centering
\includegraphics[width=0.8\textwidth]{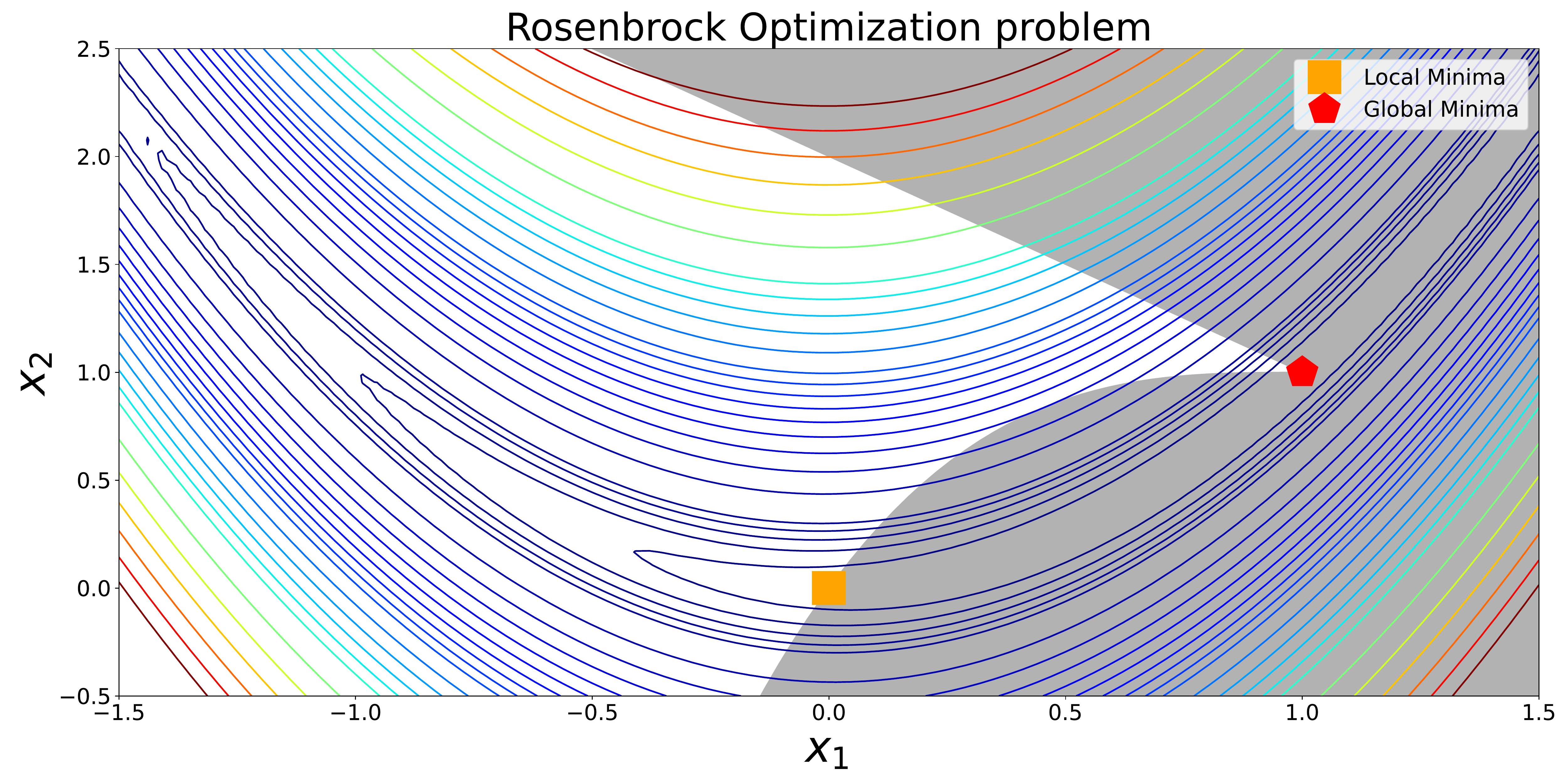}
\caption{Visualization of the optimization problem for the Rosenbrock function. The objective contour lines are plotted plus the two constraints in grey. The infeasible region is marked in grey. The local and global optima are visualized as orange square and red pentagon respectively.}
\label{fig:Rosenbrock}
\end{figure}

To obtain a stochastic problem with multiple levels, we create three levels using the Ishigami function which we adapt from \cite{Qian2018}. The three functions and their corresponding mean and sigma are given in Table~\ref{tbl:ishigamilevels} where $\{z_i\}_{i=1}^3 \sim \mathcal{U}(-\pi, \pi)$ follow a uniform distribution and $a=5$ and $b=0.1$. 
\begin{table}[h]
\centering
\begin{tabular}{c || c | c}
Level function & $\mone^{(i)}$ & $\sigma^{(i)}$ \\ \hline\hline
$I^{(3)}(z_1, z_2, z_3) = \sin(z_1) + a \sin(z_2)^2 + b z_3^4$ & 2.5 & 3.2931 \\ \hline
$I^{(2)}(z_1, z_2, z_3) =  \sin(z_1) + 0.85 a \sin(z_2)^2 + b z_3^4 \sin(z_1)$ & 2.125 & 3.1595 \\ \hline
$I^{(1)}(z_1, z_2, z_3) = \sin(z_1) + 0.6 a \sin(z_2)^2  + 9 b z_3^2 \sin(z_1)$ & 1.5 & 3.5308 \\ \hline
\end{tabular}
\caption{Three levels of the Ishigami function $\{I^{(i)}\}_{i=1}^3$ and their corresponding mean $\mone^{(i)}$ and standard deviation $\sigma^{(i)}$.}
\label{tbl:ishigamilevels}
\end{table}

We combine the Ishigami function with the Rosenbrock objective function $f(x_1, x_2)$ to obtain three levels $\{f(x_1, x_2) + \beta I^{(i)}(z_1, z_2, z_3)\}_{i=1}^3$. Additionally, we use a scaling factor $\beta = \sqrt{0.0001}$ to normalize the stochastic effect of the Ishigami function to the deterministic Rosenbrock function. We again assume a cost ratio between the different levels of $\frac{C_i}{C_{i-1}} = 10, i=2,3$, with $C_3=1$, such that $C_1 < C_2 < C_3$. 

In the end this results in an optimization problem similar to the previous test problem where we look at the reliability formulation $\Rscal{f}$ \footnote{We omit the formulation $\Rmu{f}$ in this case since it is not our contribution}:
\begin{equation}
\begin{split}
\min_{x_1, x_2} &\Rscal{f}[f + I^{(3)}] - \mone^{(3)} - 3 \sigma_3= \mathbb{E}[f + I^{(3)}] + 3 \sigma[f + I^{(3)}] - \mone^{(3)} - 3 \sigma^{(3)}, \\
\text{s.t. } &c_1: (x_1 - 1)^3 + 1 - x_2 \leq 0, \\
& c_2: (x_1 + x_2) - 2 \leq 0.
\end{split}
\end{equation}
Note here that we subtract the mean value $\mone^{(3)}$ and also the standard deviation $\sigma^{(3)}$ to make the solution comparable to the deterministic case. Hence, the local and global optima are the same as in the deterministic setting.

We proceed with a similar study as for the previous problem. Given the initial starting point $(x_1, x_2) = (0.25, 1.5)$ we run 25 independent optimization runs for each of the different approaches computing the MLMC estimator. We again compare to an optimization using single level Monte Carlo estimates computed with 1000 samples. The targets $\epsilon_{\mathbb{E}}^2 \approx 1.0849e-05 $ and $\epsilon_{\mathbb{\mathbb{E} + \alpha \sigma}}^2 \approx 8.8951e-05$ are therefore given by the reference variance of a Monte Carlo estimator. For a better comparison, we restrict all of the optimization runs to 250 iterations each. We again plot the optimization result, i.e., the final design found for all the different runs. Additionally, we compare the average computational cost for a single iteration. To reduce the number of results, we make the algorithmic choice to use 20 iterations and numerical optimization for finding the resource allocation. The choice is based on the previous results showing the best performance. We again compare the different covariance approximation strategies.

We show the optimization results of 25 independent runs in Fig.~\ref{fig:rosenbrockresults}. In Fig.~\ref{fig:rosenbrockresults_meansigma_Pearson}-~\ref{fig:rosenbrockresults_meansigma_Bootstrap}, we again compare the three different approaches in approximating the covariance function. The final optimal designs of the reference Monte Carlo approach are plotted as red circles. The optimal designs using the standard MLMC approach targeting the mean are shown as blue crosses. The optimal designs using our new contribution, the MLMC estimator targeting the scalarization, are shown as green triangles. Qualitatively, we see a close match between all our new approaches and the reference solution. Also the standard MLMC approach performs better compared to the results shown in the previous sections. Nevertheless, we already see qualitatively that the set of final designs shows a larger variance. When we look at the cost on the right, we see the reason. The standard MLMC estimator is again under-resolving the estimators, which results in a larger variance for the estimator and more noise in SNOWPAC. The magnitude of the noise is an important factor for the convergence of SNOWPAC, as we stated in Section~\ref{ssec:snowpac}. Regarding the cost for the covariance approximation, we see a similar image as in the previous example: the Pearson approximation is too conservative, which results in unnecessary computational cost. The correlation lift and bootstrap approximation result in similar, lower cost, although we neglect the additional computational cost of bootstrapping; hence, the correlation lift approximation seems to be the preferable choice.
\begin{figure}
\centering
\begin{subfigure}{0.49\textwidth}
	\includegraphics[width=\textwidth]{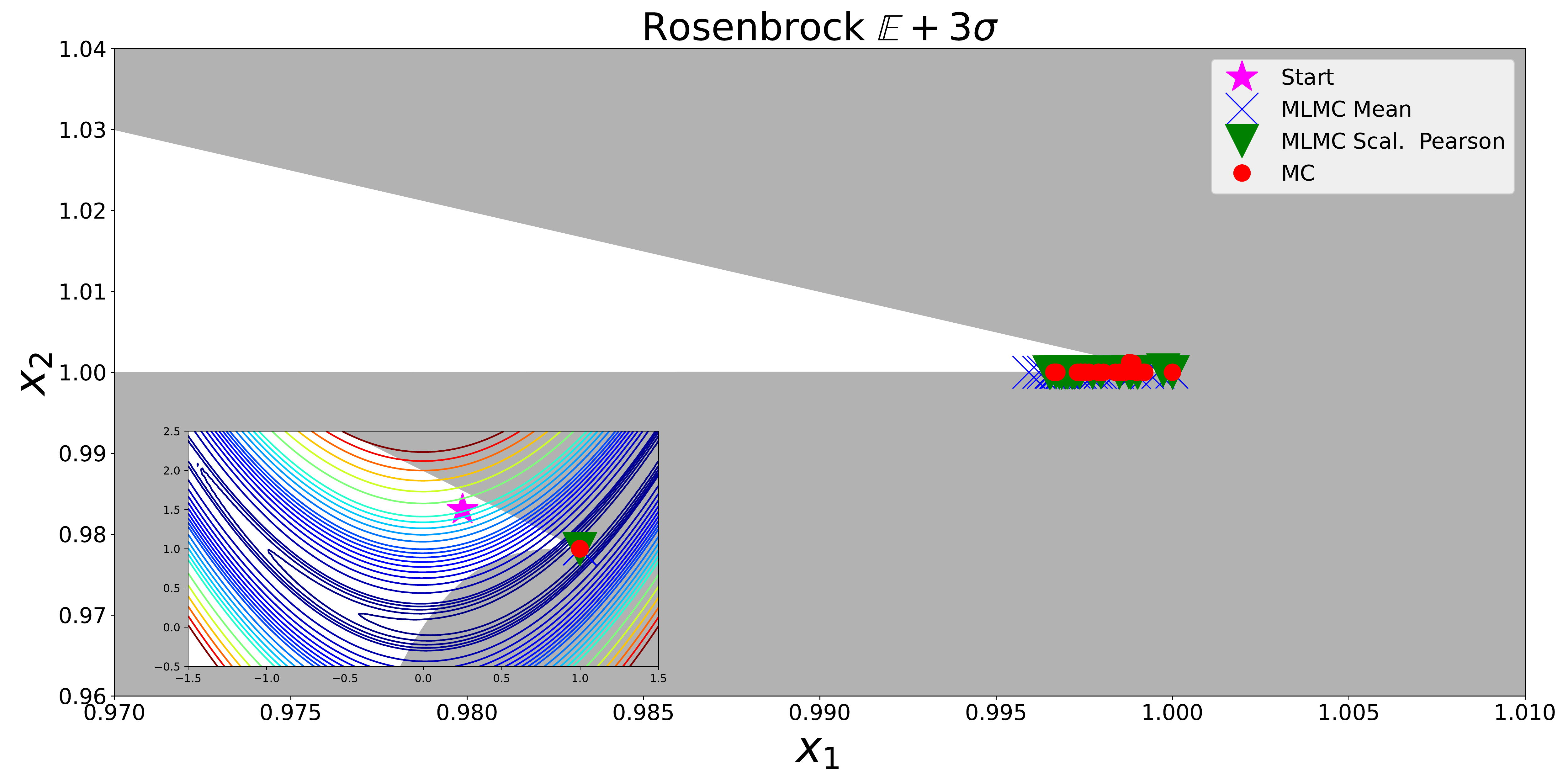}
    \caption{We use the Pearson upper bound as described in Section~\ref{sssec:covarianceupperbound} to bound the covariance term of Eq.~\eqref{eq:mlmcvarianceofscalarization}.}
    \label{fig:rosenbrockresults_meansigma_Pearson}
\end{subfigure}
\hfill
\begin{subfigure}{0.49\textwidth}
	\includegraphics[width=\textwidth]{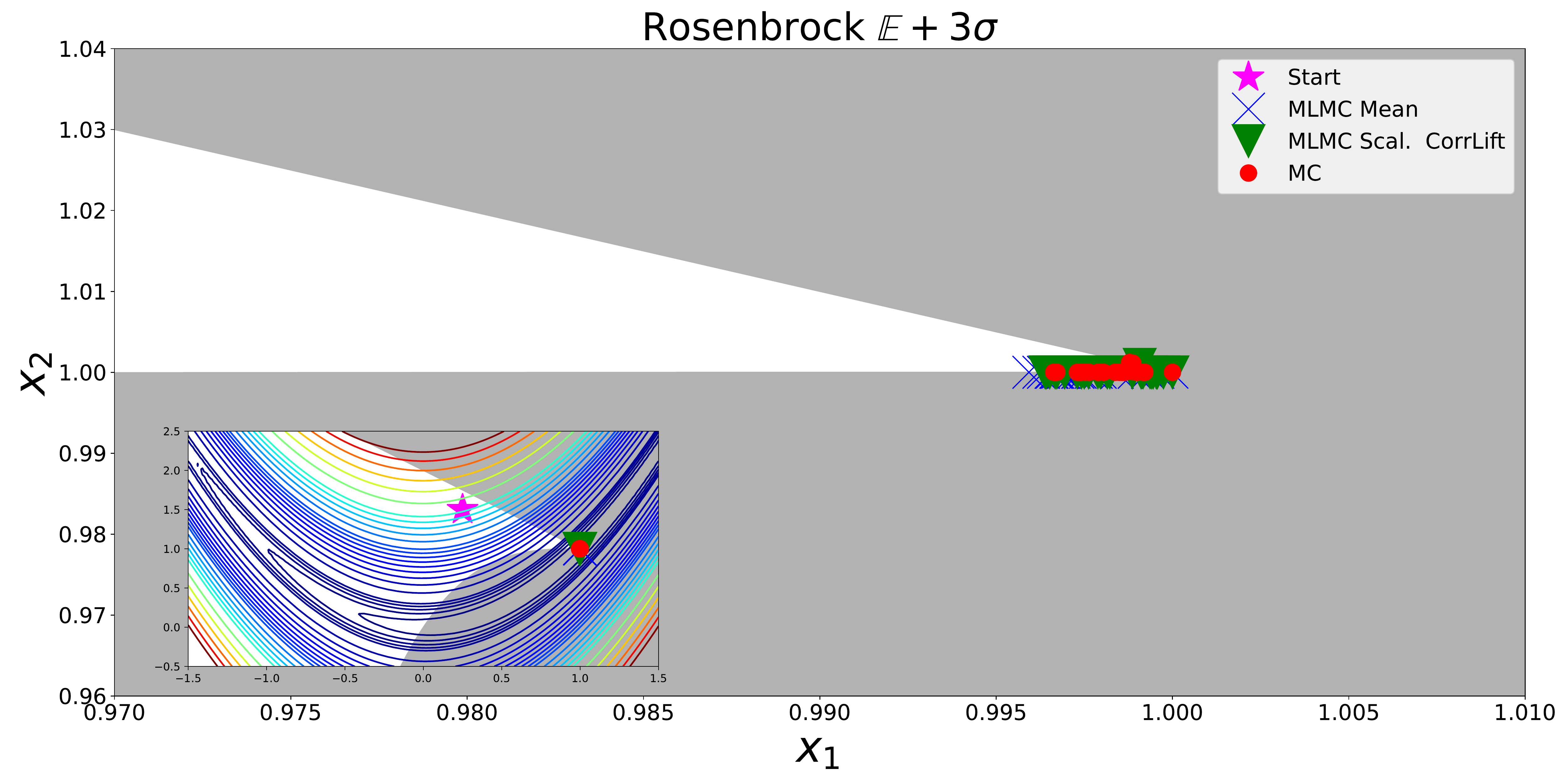}
    \caption{We use the Correlation lift approximation as described in Section~\ref{sssec:covariancecorrlift} to approximate the covariance term of Eq.~\eqref{eq:mlmcvarianceofscalarization}.}
    \label{fig:rosenbrockresults_meansigma_CorrLift}
\end{subfigure}
\hfill
\begin{subfigure}{0.49\textwidth}
\includegraphics[width=\textwidth]{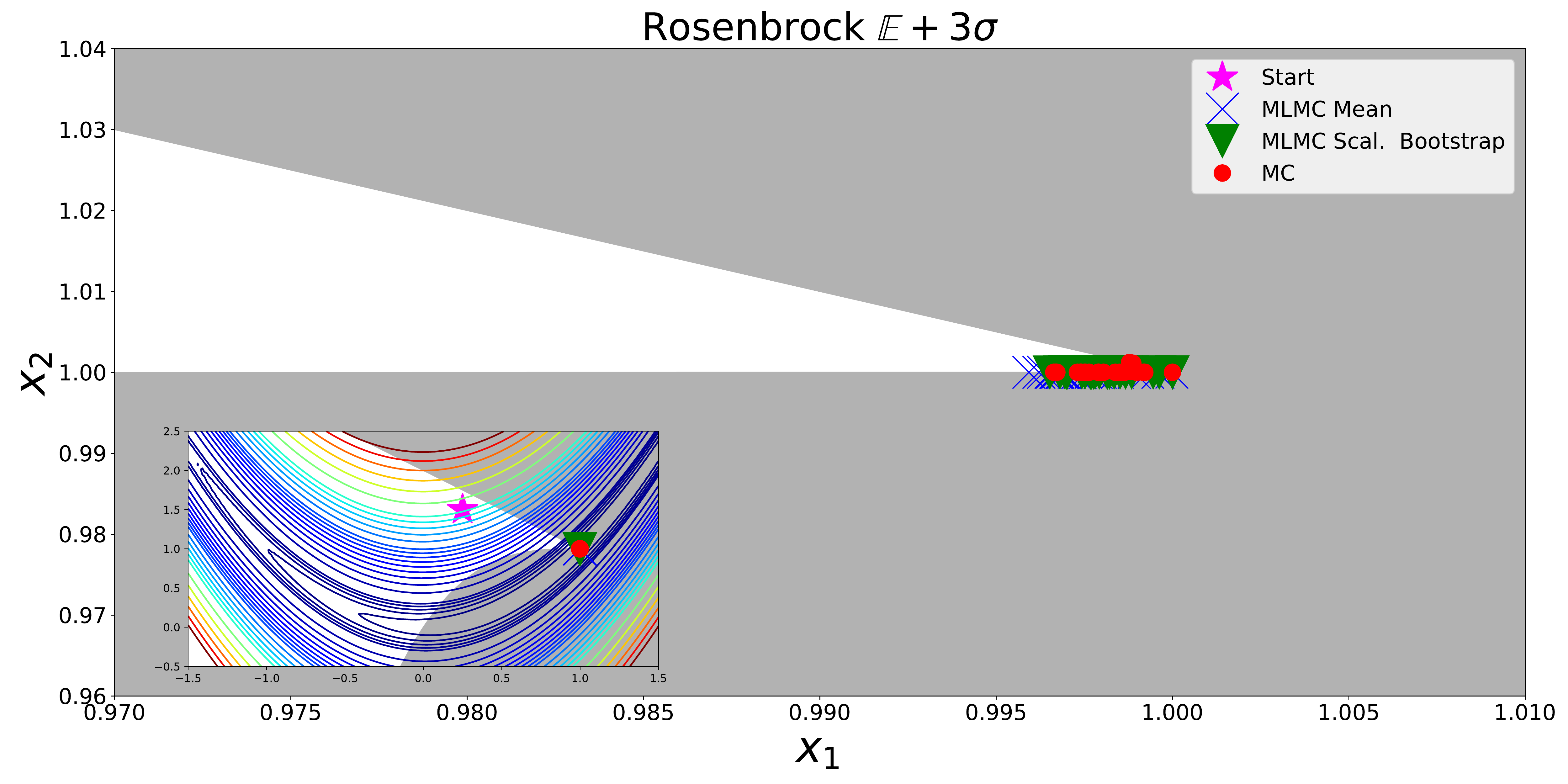}
    \caption{We use the Bootstrap approximation as described in Section~\ref{sssec:covariancebootstrap} to approximate the covariance term of Eq.~\eqref{eq:mlmcvarianceofscalarization}.}
    \label{fig:rosenbrockresults_meansigma_Bootstrap}
\end{subfigure}
\begin{subfigure}{0.49\textwidth}
	\includegraphics[width=\textwidth]{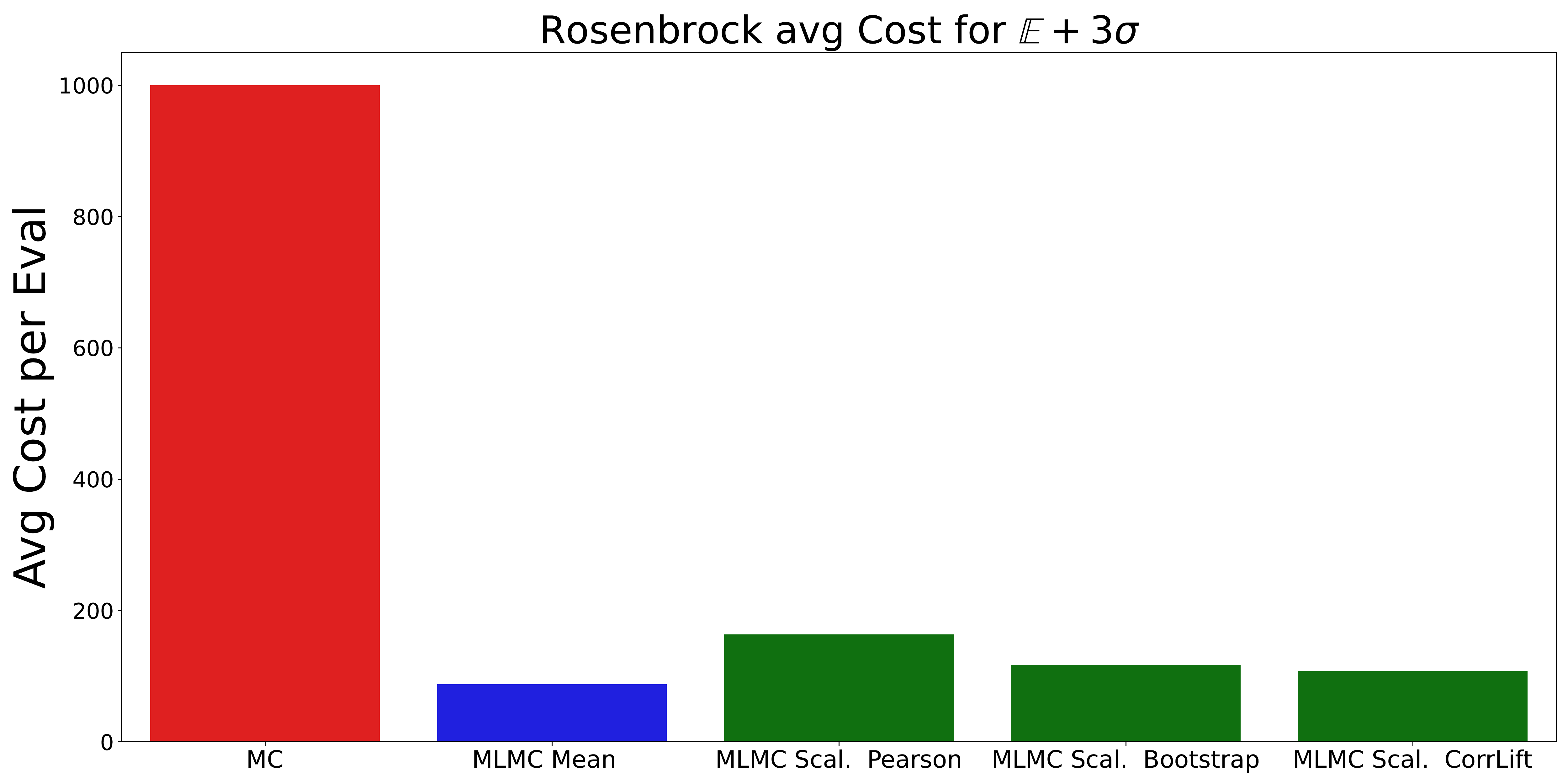}
    \caption{Cost average for a single evaluation for the different approaches. The cost are averaged over 25 optimization runs with 250 iterations each.}
    \label{fig:rosenbrockresults_meansigma_Cost}
\end{subfigure}
\caption{Optimization results for 25 individual runs after 250 iterations. In the three first figures, we see an enlarged area around the optimal design. The small figure in the bottom left shows the full domain. Each marker corresponds to the final design found by the individual run. We show the results when using a standard Monte Carlo estimator using $1000$ samples as red dot and compare to the final design found using a MLMC estimator targeting the mean as blue x and a MLMC estimator targeting the scalarization as green triangle. In the first three figures, we compare the different approximation strategies for the covariance term and the fourth figure shows a cost comparison for all three approaches averaged over 25 runs and 250 iterations.}
\label{fig:rosenbrockresults}
\end{figure}

For a more quantitative comparison of the results, we show again different metrics to compute the distance of the optimal designs for the MLMC approaches to the MC reference solution in Table~\ref{tbl:rosenbrock_results}. Section~\ref{ssec:distancemetrics} explains the metrics. In all metrics, we see that our newly developed approach is closer to the reference solution for all covariance approximations. The lowest (and thus best) value in each column is marked in bold.
\begin{table}[h]
\centering
\begin{tabular}{c | c | c | c | c }
Method & $\mathbb{X}$ & $\text{Dist}_{\text{C}}$  & $\text{Dist}_\sigma$ & $\text{Dist}^{\mathbb{X}}_{\text{RMSdev}}$  \\ \hline
MLMC Mean (20 iter) & $\mathbb{E}$ & 1.3519e-3 & [1.9332e-4, 3.1944e-4] & 1.3106e-3\\ \hline
MLMC Scalarization (20 iter, Pearson) & $\mathbb{\mathbb{E} + \alpha \sigma}$ &  1.2922e-3 & [3.1977e-4, 2.7096e-4] & 1.2249e-3\\ \hline
MLMC Scalarization (20 iter, CorrLift) & $\mathbb{\mathbb{E} + \alpha \sigma}$ & {1.2686e-3} & [3.1931e-4, \textbf{1.3793e-4}] & 1.1886e-3\\ \hline
MLMC Scalarization (20 iter, Bootstrap) & $\mathbb{\mathbb{E} + \alpha \sigma}$ & \textbf{1.1108e-3} & [\textbf{9.8059e-5}, 3.2164e-4]  & \textbf{9.7472e-4}\\ \hline
\end{tabular}
\caption{Quantitative comparison of the distance of final designs found to the Monte Carlo reference solution. Each row shows a different approach. Each column represent a different metric, with the second column showing the target of the estimator.}
\label{tbl:rosenbrock_results}
\end{table}

Also for this test case, we see improved performance in the optimization when using our newly developed estimators for this more challenging test case. Again, it is important to synchronize the MLMC allocation target with the respective formulation of the optimization problem, and then tailoring the estimation algorithm to robustly and efficiently obtain the desired accuracy.

\section{Conclusions}
\label{sec:conclusion}
In this work, we presented new multilevel Monte Carlo estimators for the statistics of variance, standard deviation, and the linear combination of mean and standard deviation, called scalarization. This required the derivation of variances for these estimators as one of the main contributions of this work. These statistics are especially relevant in optimization under uncertainty, where we not only optimize for the mean but also often include the standard deviation in the optimization problem to find a robust or reliable solution. The standard multilevel Monte Carlo estimator, optimized to provide a target precision for the mean, is in general inadequate for these statistics and the multilevel resource allocation problem needs to be modified to target these alternate statistical goals. During the optimization process, we build and evaluate the estimators repeatedly in each optimization step, which amplifies the need for an accurate estimator of the relevant statistic. 

We presented results on two benchmark problems: a one-dimensional constrained problem, called Problem 18, with one uncertain variable and four levels, and the two-dimensional constrained Rosenbrock function with three uncertain variables and three levels. We presented the sampling results for Problem 18, where we compare our new estimators with the standard multilevel Monte Carlo estimator for the mean to show their performance. We used a single-level Monte Carlo estimator as a reference and built the multilevel resource allocation to match the accuracy of the Monte Carlo estimator. We showed that our estimators more directly synchronize with the statistical goals of interest, while the estimator for the mean does not offer control beyond the mean estimator's variance. Moving to the optimization benchmark, we illustrated the impact of using these estimators. Employing our new estimators, we are able to control for any of the expanded set of statistical goals that we have focused on here. Similarly, for the constrained Rosenbrock function, we demonstrated a close match between our new estimators, while the standard multilevel Monte Carlo estimator for the mean was not reaching the targeted precision, which reflected also in a non-optimal solution. Regarding algorithmic choices, we saw improvements in the approximation quality when combining the approach presented in~\cite{Krumscheid2020} with numerical optimization. Based on the findings, we propose using an iterative approach to compute a more robust resource allocation. Finally, regarding the covariance term in the scalarization case, we presented three different approximations and argued for the use of the approximation called \textit{correlation lift} due to a good balance between approximation quality and computational cost.

We motivated the use of these new estimators for matching the corresponding OUU goals. The estimators are implemented in the Dakota software and can also be combined with the optimization method SNOWPAC as presented in this work. For future directions, we intend to extend these estimators for other formulations that are relevant for robust and reliable optimization problems. These problems include, e.g., the conditional value at risk or quantile estimation. Both formulations can be defined in the form of a sampling estimator, which should make an extension straightforward. The first work in this direction has been done by \cite{Ganesh2022} and shows promising results.

\acknowledgements
The authors were partially supported by the DOE SciDAC FASTMath institute. Sandia National Laboratories is a multi-mission laboratory managed and operated by National Technology and Engineering Solutions of Sandia, LLC., a wholly owned subsidiary of Honeywell International, Inc., for the U.S. Department of Energy's National Nuclear Security Administration under contract DE-NA-0003525. The views expressed in the article do not necessarily represent the views of the U.S. Department of Energy or the United States Government.

\bibliographystyle{unsrtnat}
\bibliography{References}

\end{document}